\documentclass[times,10pt,twocolumn,letterpaper]{article}
\usepackage{url}
\usepackage{cvpr}
\usepackage{times}
\usepackage{epsfig}
\usepackage{graphicx}
\usepackage{amsmath}
\usepackage{placeins}
\usepackage{verbatim}
\usepackage{tocloft}
\usepackage{amssymb}
\usepackage{color}
\usepackage[ruled,vlined,linesnumbered]{algorithm2e}
\setcounter{tocdepth}{3}
\newlength{\myskip}
\setlength{\myskip}{0.3\baselineskip}
%

%%%%Undefine llncs proof
\makeatletter
                         % make \IEEEproof do same as \proof
              % make \IEEEendproof do same as \endproof
\let\proof\@undefined                        % undefine \proof
\let\endproof\@undefined                  % undefine \endproof
\makeatother
%%%%%%%%%%%%%%
\usepackage{amsthm}
%%%%%%%%%%%%%%

\renewenvironment{enumerate}%
  {\begin{list}{\arabic{enumi}.}%
     {\topsep=0in\itemsep=0in\parsep=0pt\partopsep=0in\usecounter{enumi}}%
   }{\end{list}}

\renewenvironment{itemize}%
  {\begin{list}{$\bullet$}%
     {\topsep=0in\itemsep=0pt\parsep=0pt\partopsep=0in\usecounter{itemi}}%
   }{\end{list}\addvspace{0pt}}

%\newcommand{\sectionskip}{\vspace{0mm}}

%%% SPACE squeeze
%\setlength{\headsep}{0pt}
\setlength{\tabcolsep}{0pt}
\arraycolsep = 1.2\arraycolsep
\floatsep = \myskip % space left between floats
\textfloatsep = 5mm %space between last top float or first bottom float and the text
\intextsep = 5mm %space left on top and bottom of an in-text float.
\raggedbottom
\abovecaptionskip = 0pt
\belowcaptionskip = 0pt
%\parskip=0pt
%\setlength{\parskip}{1mm plus1mm minus1mm}
%\setlength{\parskip}{0pt}
%\setlength{\parsep}{0pt}
%\setlength{\partopsep}{0pt}
%lists
%\topsep=0pt %space between first item and preceding paragraph
%\partopsep=0pt %extra space added to \topsep when environment starts a new paragraph.
\itemsep=0pt % space between successive items.
%
%ALLOW MANY FIGURES

\setcounter{totalnumber}{50}
\setcounter{topnumber}{50}
\setcounter{bottomnumber}{50}
%%%%

%\setlength{\algomargin}{0pt}
\SetAlgoSkip{skipmyskip}
\SetAlgoInsideSkip{}
%\setlength{\interspaceboxrule}{0pt}
%\setlength{\interspacetitleruled}{0pt}

%%%%%%%%%%%%%%

\makeatletter
\let\corollary\@undefined
\let\c@corollary\@undefined
\let\endcorollary\@undefined
\let\definition\@undefined
\let\c@definition\@undefined
\let\enddefinition\@undefined
\let\proof\@undefined
\let\endproof\@undefined
\let\theorem\@undefined
\let\c@theorem\@undefined
\let\endtheorem\@undefined
\let\lemma\@undefined
\let\c@lemma\@undefined
\let\endlemma\@undefined
\makeatother

\newtheoremstyle{tightItalic}% name
  {0.5\myskip}%      Space above
  {0.5\myskip}%      Space below
  {}%         Body font
  {}%         Indent amount (empty = no indent, \parindent = para indent)
  {\itshape}% Thm head font
  {.}%        Punctuation after thm head
  { }%     Space after thm head: " " = normal interword space;
        %       \newline = linebreak
  {}%         Thm head spec (can be left empty, meaning `normal')

\newtheoremstyle{tightBf}% name
  {0.5\myskip}%      Space above
  {0.5\myskip}%      Space below
  {}%         Body font
  {}%         Indent amount (empty = no indent, \parindent = para indent)
  {\bf}% Thm head font
  {.}%        Punctuation after thm head
  {.5em}%     Space after thm head: " " = normal interword space;
        %       \newline = linebreak
  {}%         Thm head spec (can be left empty, meaning `normal')

\theoremstyle{tightBf}
\newtheorem{theorem}{Theorem}

\newtheorem{definition}{Definition}
\newtheorem{statement}{Statement}
\newtheorem{lemma}{Lemma}

\theoremstyle{tightItalic}
\newtheorem*{proof}{Proof}

%%
%%section and subsection spacing
%%
%\makeatletter
%\renewcommand\section{\@startsection {section}{1}{\z@}%
%{2\myskip}%
%{2\myskip}%
%{\normalfont\Large\bfseries}}
%
%\renewcommand\subsection{\@startsection {subsection}{2}{\z@}%
%{\myskip}%
%{\myskip}%
%{\flushleft\normalfont\bfseries}}
%\makeatother

% redefine the paragraph
\makeatletter
\renewcommand{\paragraph}{%
  \@startsection{paragraph}{4}%
  {\z@}{3.25ex \@plus 1ex \@minus .2ex}{-1em}%
  {\normalfont\normalsize\bfseries}%
}
\makeatother

\makeatletter
\setlength{\@fptop}{0pt}
\makeatother

\newcommand{\citesuppl}[1]{\cite{#1}}

\usepackage{mathabx}
\usepackage[pagebackref=false,breaklinks=true,letterpaper=true,colorlinks,bookmarks=false]{hyperref}
\cvprfinalcopy % *** Uncomment this line for the final submission

\DeclareMathOperator{\aff}{aff}

\newcommand{\argmin}{\mathop{\rm argmin}}

\DeclareMathOperator{\conv}{conv}
\DeclareMathOperator{\coni}{coni}
\DeclareMathOperator{\Null}{null}

\newcommand{\<}{\langle}
\renewcommand{\>}{\rangle}

\renewcommand{\mid}{\:|\,}

\newcommand{\Real}{\mathbb{R}}

\newcommand{\WIx}[1]{\ensuremath{\mathbb{W}_{#1}}\xspace}
\newcommand{\SIx}[1]{\ensuremath{\mathbb{S}_{#1}}\xspace}
\newcommand{\WI}{\WIx{f}}
\newcommand{\SI}{\SIx{f}}
\newcommand{\WIg}{\WIx{g}}

\newcommand{\tab}{{\hphantom{bla}}}
\newcommand{\A}{\mathcal{A}}
\newcommand{\Bool}{\mathbb{B}}

\def\O{\mathcal{O}}

\newcommand{\V}{\mathcal{V}}
\newcommand{\I}{\mathcal{I}}
\def\U{\mathcal{U}}
\newcommand{\W}{\mathcal{W}}

\renewcommand{\P}{\mathcal{P}}
\newcommand{\E}{\mathcal{E}}

\newcommand{\EE}{\tilde{\E}}
\newcommand{\N}{\mathcal{N}}

\renewcommand{\L}{\mathcal{L}}
\newcommand{\M}{\mathcal{M}}
\newcommand{\LL}{\mathcal{L}}

\newcommand{\T}{^{\mathsf T}}

\newcounter{myRomanCounter}

\newcommand{\f}{{E_f}}
\newcommand{\g}{{E_g}}

\newcommand{\energy}[1]{{E_{#1}}}

\newcommand{\K}{\mathcal{K}}

\renewcommand{\iff}{{\textrm{iff} }}

\newcommand{\AS}{Alexander Shekhovtsov}

\newlength{\figwidth}

\def\mathrlap{\mathpalette\mathrlapinternal} 
\def\mathllap{\mathpalette\mathllapinternal}
\def\mathllapinternal#1#2{\llap{$\mathsurround=0pt#1{#2}$}}
\def\mathrlapinternal#1#2{\rlap{$\mathsurround=0pt#1{#2}$}}

\def\ij{i,j}
\def\ji{j,i}
\DeclareRobustCommand{\ind}[1]{\vphantom{f}_{#1}}
\DeclareRobustCommand{\ind}[1]{(#1)}

\def\leftbb{\mathrlap{[}\hskip1.3pt[}
\def\rightbb{]\hskip1.36pt\mathllap{]}}

\def\phi{\delta}
\def\epsilon{\varepsilon}

\newcommand{\IF}{\mbox{ \rm if }}

\newcommand{\AND}{\mbox{ \rm and }}
\newcommand{\OTHERWISE}{\mbox{ \rm otherwise}}
\newcommand{\Algorithm}[1]{{Algorithm\,\ref{#1}}}

\newcommand{\Section}[1]{\S\ref{#1}}
\newcommand{\Figure}[1]{{Figure\,\ref{#1}}}
\newcommand{\Theorem}[1]{{Theorem\,\ref{#1}}}
\newcommand{\Lemma}[1]{{Lemma\,\ref{#1}}}

\newcommand{\Statement}[1]{{Statement\,\ref{#1}}}

\setlength{\arraycolsep}{0.2em}

\newcommand{\lsub}[1]{\limits_{\begin{subarray}{l}\setlength{\arraycolsep}{0.2em}#1\end{subarray}}}
\newcommand{\csub}[1]{\limits_{\begin{subarray}{c}\setlength{\arraycolsep}{0.2em}#1\end{subarray}}}

\makeatletter

\makeatother
\def\maxwi{\eqref{best L-improving}\xspace}
\def\maxsi{{(\sc max-si)}\xspace}
\def\cf{\emph{cf}\onedot} 
 
\hypersetup{hidelinks,colorlinks=false,linkbordercolor={1 1 1}}
%\raggedbottom
 % *** Enter the CVPR Paper ID here

% Pages are numbered in submission mode, and unnumbered in camera-ready
%\ifcvprfinal\pagestyle{empty}\fi
\begin{document}

%%%%%%%%% TITLE
\def\AS{Alexander Shekhovtsov}
\def\emails{e-mails}
\newcommand{\mytitle}{Maximum Persistency in Energy Minimization}
\def\mythanks{This work was supported by the EU project FP7-ICT-247870 NIFTi %, FP7-609763 TRADR 
and the Austrian Science Fund (FWF) under the START project BIVISION, No. Y729.}
\title{\mytitle\vskip0.5cm
Research Report\vskip0.1cm}
\author{\AS\\
Graz University of Technology\\
%Czech Technical University in Prague\\
%Karlovo namesti 13 121 35 Prague 2\\
{\tt\small shekhovtsov@icg.tugraz.at}%shekhole@fel.cvut.cz}
\thanks{\mythanks}
% For a paper whose authors are all at the same institution,
% omit the following lines up until the closing ``}''.
% Additional authors and addresses can be added with ``\and'',
% just like the second author.
% To save space, use either the email address or home page, not both
%\and
%Second Author\\
%Institution2\\
%First line of institution2 address\\
%{\small\url{http://www.author.org/~second}}
}
\maketitle
\setlength{\figwidth}{0.5\linewidth}

%\let\citeasitwas=\cite
%\let\citealt=\citeasitwas
%\let\cite=\citep
%\let\citet=\citep

%%%%%%%%%%%%%%%%%%%%%%%%%%%%%%%%%%%%%%%%%%%%%%%%%%%%%%%%%%%%%%
%%%%%%%%%%%%%%%%%%%%%%%%%%%%%%%%%%%%%%%%%%%%%%%%%%%%%%%%%%%%%%
\begin{abstract}
We consider discrete pairwise energy minimization problem (weighted constraint satisfaction, max-sum labeling) and methods that identify a globally optimal partial assignment of variables. When finding a complete optimal assignment is intractable, determining optimal values for a part of variables is an interesting possibility.
Existing methods are based on different sufficient conditions. 
We propose a new sufficient condition for partial optimality which is: (1)
verifiable in polynomial time (2) invariant to reparametrization of the problem and permutation of labels and (3) includes many existing sufficient conditions as special cases. 
We pose the problem of finding the maximum optimal partial assignment identifiable by the new sufficient condition. A polynomial method is proposed which is guaranteed to assign same or larger part of variables 
than several existing approaches. The core of the method is a specially constructed linear program that identifies persistent assignments in an arbitrary multi-label setting.
\end{abstract}

\setlength\cftparskip{1pt}
\setlength\cftbeforesecskip{1pt}
\setlength\cftbeforesubsecskip{1pt}
\setlength\cftaftertoctitleskip{0pt}
\tableofcontents

\section{Introduction}
\paragraph{Energy Minimization}
Given a graph $(\V,\E)$ and functions $f_s \colon \L_s\to\Real$ for all $s\in\V$ and $f_{st}\colon \L_s\times\L_t \to \Real$ for all $st\in\E$, where $\L_s$ are finite sets of {\em labels}, the problem is to minimize the {\em energy} %function %function of the form
\begin{equation}\label{the energy}
\f(x) = f_0 + \sum_{s\in \V} f_s(x_s) +\sum_{st\in \E}f_{st}(x_{s},x_{t}),
\end{equation}
over all assignments $x \in \LL=\prod_s \L_s$ (Cartesian product). Notation $st$ denotes the ordered pair $(s,t)$ for $s,t\in \V$. The general energy minimization problem is %NP-hard and even 
APX-hard. %, meaning that there is no polynomial-time approximation scheme.

\paragraph{Partial Optimality}
%Let $x_\A$ denote the restriction of $x$ to $\A\subset \V$. % and by $x_{st}$ the pair $(x_s,x_t)$.
%Consider a subset of vertices $\A\subset \V$ and the 
%assignment of labels over this subset $y = (y_s\mid s \in \A)$. 
Let $\A\subset \V$. By $x_\A$ we denote the restriction of $x$ to $\A$. An assignment $y$ with domain $\A$ is a {\em partial assignment} denoted $(\A,y)$. %A {\em partial assignment} $y$ on the set $\A$. %Let $y = (y_s\in\L_s \mid s \in \A)$. 
The pair $(\A,y)$ is called {\em strong optimal partial assignment} if there holds $x^*_{\A}=y$ for {\em any} minimizer $x^*$ of $\f$. And {\em weak optimal partial assignment} if {\em there exists}  a minimizer $x^*$ of $\f$ such that $x^*_{\A}=y$.
%\par
\paragraph{Related Work}
Several fundamental results identifying optimal partial assignments are obtained from the properties of linear relaxations of some discrete problems.
An optimal solution to continuous relaxation of a mixed-integer $0$-$1$ programming problem is defined by Adams \etal.~\cite{Adams:1998} to be {\em persistent} if the set of $[0,1]$ relaxed variables realizing binary values retains the same binary values in at least one integer optimum. A mixed-integer program is said to be {\em persistent} (or possess the {\em persistency} property) if {\em every} solution to its continuous relaxation is persistent. Nemhauser \& Trotter~\cite{Nemhauser-75} proved that the vertex packing problem is persistent. %This result was later extended to optimization of quadratic pseudo-Boolean functions~\citealt{Hammer-84-roof-duality}.
%
%\citet{Nemhauser-75} proved that, for any optimal solution to the continuous relaxation of the vertex packing problem, variables taking binary values will retain those same binary values in an optimal solution of the (non-relaxed) vertex packing. 
%
%This result, called the {\em persistency} property, 
This result was later generalized to optimization of quadratic pseudo-Boolean functions (equivalent to energy minimization with two labels) by Hammer \etal~\cite{Hammer-84-roof-duality}. The relaxed problem in this case is known as the {\em roof dual}. 
{\em Strong persistency} was also proven, stating that if a variable takes the same binary value in {\em all} optimal solutions to the relaxation, %then it realizes that binary value in {\em all} optimal integer solutions.
then {\em all} optimal solutions to the original $0$-$1$ problem take this value. However, it is a rare case that a relaxation of a particular problem is persistent.
\par
Several works considered generalization of persistency to higher-order pseudo-Boolean functions. 
Adams \etal.~\cite{Adams:1998} considered a hierarchy of continuous relaxations of $0$-$1$ polynomial programming problems. Given an optimal relaxed solution, they derive sufficient conditions on the dual multipliers which ensure that the solution is persistent. This result generalizes the roof duality approach, coinciding with it in the case of quadratic polynomials in binary variables.
%\par
Kolmogorov~\cite{Kolmogorov10-bisub,Kolmogorov12-bisub} studied submodular and bisubmodular relaxations and showed that they provide a natural generalization of the quadratic pseudo-Boolean case to higher-order terms and possess the persistency property. 
%He also considered submodular relaxations which form a special case of bisubmodular and showed the following. 
%The roof duality relaxation for quadratic pseudo-Boolean functions is a submodular relaxation, and it dominates all other bisubmodular relaxations. For non-quadratic pseudo-Boolean functions, bisubmodular relaxations can be tighter than submodular ones.
Kahl and Strandmar~\cite{KahlS12} proposed a polynomial time algorithm to find the tightest submodular relaxation. % and evaluated it on problems in computer vision.
Lu and Williams~\cite{Lu-Williams-1987}, Ishikawa~\cite{Ishikawa-11-transform} and Fix \etal~\cite{Fix-11} obtained partial optimalities via different reductions to quadratic problems and subsequent application of the roof dual.
%\par
%
%
\paragraph{Multi-label energies}
%There are following methods for multi-label energies.
%Rather than dealing with higher order terms of binary variables, in this work we restrict to pairwise model with multi-label variables. 
The following methods were proposed for the pairwise model~\eqref{the energy} with multi-label variables.
Kohli \etal~\cite{kohli:icml08} reduced multi-label energy to quadratic pseoudo-Boolean and applied roof dual.
%
%The multi-label QPBO (abbreviated as MQPBO) method~\cite{kohli:icml08} extends partial optimality properties of QPBO to multi-label problems via the reduction of the problem to binary (Boolean) variables.
%\par
%The following methods use different sufficient conditions, not based on lower bounds.
The family of local methods known as {\em dead end elimination} (DEE), originally proposed by Desmet \etal~\cite{Desmet-92-dee}, %DEE methods were developed in the context of protein structure design and are not widely known in the machine learning and computer vision communities.  
uses simple sufficient conditions that consider a variable and its immediate neighbors in the graph. %allowing to exclude a label in a given variable based on the unary and adjacent pairwise terms.
%\par
Kovtun~\cite{Kovtun03,Kovtun-10}
%~\citet{Kovtun03} 
proposed to construct an auxiliary submodular problem whose solution provides a partial optimal assignment for the original problem.
%determining optimal partial assignment for it allows to determine optimal partial assignment for the original problem. 
For the Potts model it was shown that $K$ auxiliary problems %for $z_s = \alpha$, $\alpha\in\{0,\dots K-1\}$ 
can be solved in time $O(\log(K)F)$, where $F$ is the time to solve a single auxiliary problem~\cite{Gridchyn-13}.
%maxflow computations on a network of the same size as a single auxiliary problem~\cite{Gridchyn-13}.
%
Swoboda~\etal~\cite{Swoboda-13} proposed a method for Potts model solving a series of LP relaxations approximately and generalized it recently to general and higher-order energies~\cite{Swoboda-14}. Unlike other approaches, methods~\cite{Desmet-92-dee,Kovtun03} are not directly related to relaxation techniques. %based on lower bounds. 
\paragraph{Contribution}
We observed that in many methods there is an underlying mapping 
%We observed that methods for partial optimality in the literature are based on {\em different} sufficient conditions. One mechanism that we found out is applicable to describe many method is the following. %One mechanism that we %seems to be common to all methods can be identified with the following. 
% to find %are based on the sufficient conditions of the following form. 
%A mapping 
of labelings 
$p \colon \LL \to \LL$ %that pixel-wise: mappings $p_s\colon \L_s \to \L_s$ for every $s$ 
%is eventually found 
that {\em improves} the energy of any given labeling: $\f(p(x)) \leq \f(x)$. 
It follows that there exists a minimizer in the reduced search space $p(\LL)$. However, even in the case that such mapping is given, the verification of the improving property is NP-hard (see below).
We propose instead to verify that a suitable linear {\em extension} of this mapping  improves the energy of all {\em relaxed} labelings.
This constitutes a sufficient condition which is polynomial to verify. It includes sufficient conditions used in methods~\cite{Desmet-92-dee,Kovtun-10,Hammer-84-roof-duality,kohli:icml08,Swoboda-13} as special cases. %Our approach can be seen to generalize %the work of Shekhovtsov and Hlavac~\cite{shekhovtsov-11-aux_submodular} who 
%the extended a particular mapping $x\mapsto (x \vee y)\wedge z$ to relaxed labelings, where $\vee$ and $\wedge$ are component-wise maximum and minimum, respectively.
\par
We pose the problem of finding the {\em maximum} weak/strong optimal partial assignment identifiable by the new sufficient conditions (denoted {\sc max-wi}\,/\,{\sc max-si}, respectively). %While these problems are NP-hard in general, we propose polynomial algorithms for several classes of mappings $p$. 
We propose polynomial algorithms for several classes of mappings $p$, which include many of previously proposed constructions. The algorithms involve solving the LP-relaxation and an additional linear program of a comparable size.
%The {\em all-to-one-unknown} class contains all mappings that in every pixel either pick a single label (not known in advance) or retain all labels. Problem {\sc max-si} is solvable for this class. 
We give a method that improves over one-against-all method of Kovtun~\cite{Kovtun-10} (including possible free choices in this method) and subsumes the method \cite{Swoboda-13}. In the case of two labels, our method reduces to known QPBO results.
Experimental verification of correctness and quantification of achieved improvement is performed on difficult random instances. Preliminary experiments with large-scale vision problems are reported in \Section{S:windowing-s}.
\par
%The present approach generalizes the work of Shekhovtsov and Hlavac~\cite{shekhovtsov-11-aux_submodular} who extended a particular mapping 
In our previous work~\cite{shekhovtsov-11-aux_submodular} a particular map $x\mapsto (x \vee y)\wedge z$ was extended to relaxed labelings, where $\vee$ and $\wedge$ are component-wise maximum and minimum, respectively. It allowed to relate Kovtun's methods to the standard LP-relaxation and the expansion move algorithm. In the previous work~\cite{shekhovtsov-phd} a major part of the generalized approach was presented but with algorithms for a much more narrow class of mappings and without experiments.
\section{Background}

We will assume that $st\in \E \Rightarrow ts\notin \E$. Let us denote the set $\L_s\times\L_t$ as $\L_{st}$ and the pair of labels $\ind{\ij}\in \L_{st}$ as $ij$. The following set of indices is associated with the graph $(\V,\E)$ and the set of labelings:
%\begin{equation}
%\begin{aligned}
$\I =  \{0\}\cup \{(s,i)\mid s\in \V,\ i\in \L_s\} \cup
 \{(st,ij)\mid st\in \E,\ ij\in \L_{st}\}$.
%\end{aligned}
%\end{equation}
A vector $f \in \Real^\I$ has components (coordinates) $f_0$, $f_{u}\ind{l}$, $f_{st}(i,j)$  for all $u\in \V,\ l\in \L_u$, $st\in \E,\ ij\in \L_{st}$.
%\begin{equation}
%\begin{aligned}
%& f_0\,, && &\tab\tab \mbox{\em (constant term)}\\
%& f_{s}\ind{i} \tab\tab && \forall s\in \V,\ i\in \L_s\,, & \tab\tab \mbox{\em (unary terms)}\\
%& f_{st}\ind{\ij} \tab\tab && \forall st\in \E,\ ij\in \L_{st}\,. & \tab\tab \mbox{\em (pairwise terms)}
%\end{aligned}
%\end{equation}
We further define that $f_{ts}\ind{\ji} = f_{st}\ind{\ij}$. %, in order to refer to the {\em same} component irrespectively of the direction of the pair $st$. %The {\em energy function} $\f$ is associated with vector $f$ as defined by~\eqref{the the energy}.
Let $\EE=\E\cup\{ts\mid st\in\E\}$, the symmetric closure of $\E$. 
The {\em neighbors} of a pixel $s$ are pixels in the set $\N(s) = \{t\mid st\in \EE \}$. %Summing the contribution from all neighbors of a given pixel $s$ can be easily written as $\sum\limits_{t\in \N(s)} f_{st}\ind{\ij}$, which expands as $\sum\limits_{t\mid st\in \E} f_{st}\ind{\ij} + \sum\limits_{t\mid ts\in \E} f_{ts}\ind{\ji}$ due to the convention $f_{st}\ind{\ij}=f_{ts}\ind{\ji}$. 
%The set of labelings restricted to $\N(s)$ is denoted by $\L_{\N(s)} = \prod_{t\in\N(s)}\L_t$.
%
%
%
%
%\subsection{LP Relaxation}
%We introduce the following representation in order to make explicit that $\f$ is linear in $f$. %linearize the objective.
%\marginnote{$\phi(x)$}
%
\nocite{Werner-PAMI07}
\begin{figure}
\includegraphics[width=\linewidth]{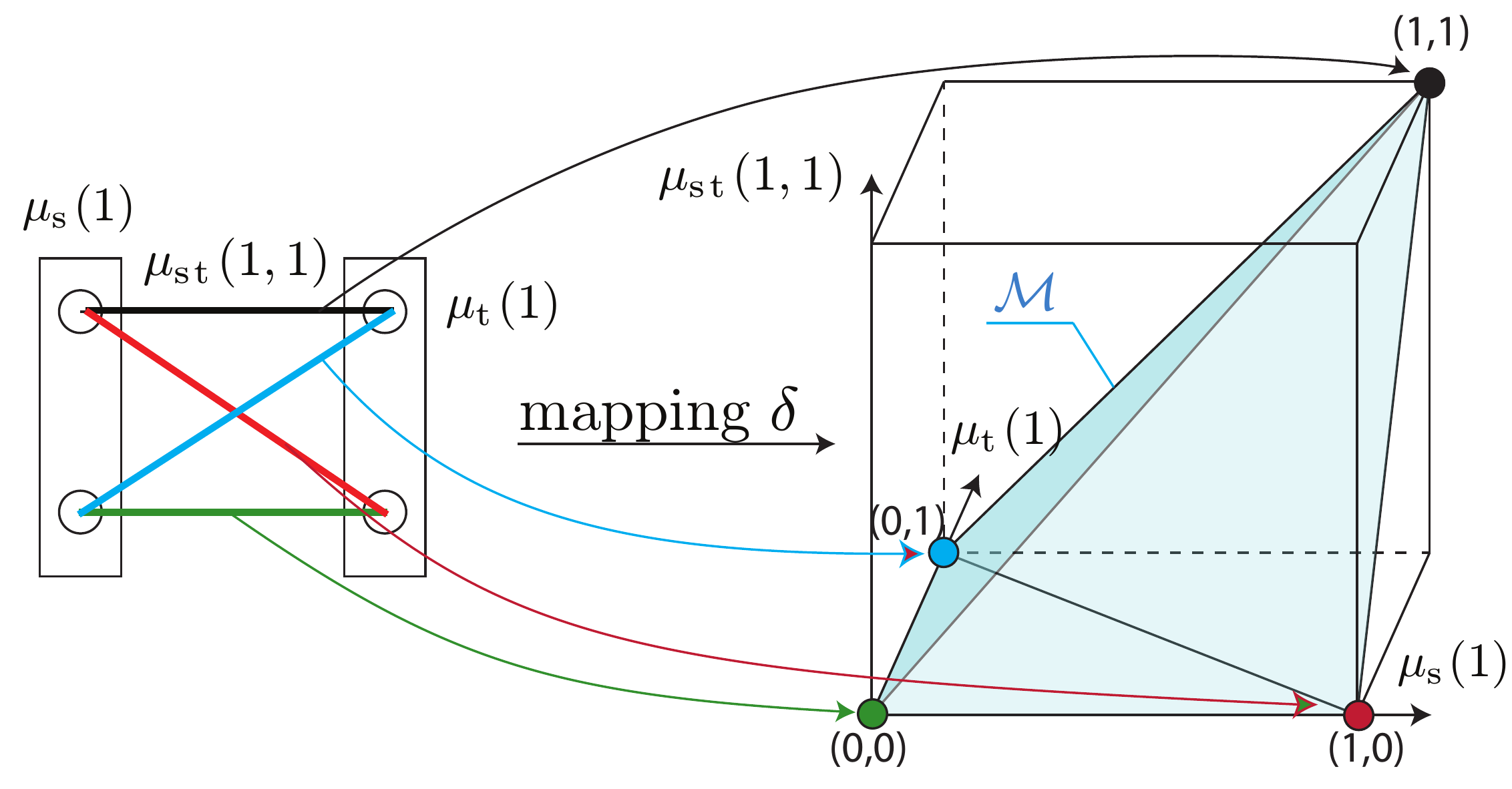}
\caption{Mapping $\delta$ embeds discrete labelings as points in the space $\Real^\I$. Projection onto components $\mu_s(1), \mu_t(1), \mu_{st}(1,1)$ is shown, the other components are dependent.} %is shown -- the space $\aff({\M})$}
\label{f:embed1}
\end{figure}
\paragraph{LP Relaxation}
Let $\phi(x)\in\Real^\I$ be the vector with components $\phi(x)_0=1$, $\phi(x)_{s}\ind{i} = \leftbb x_s{=}i \rightbb$ and $\phi(x)_{st}\ind{\ij} = \leftbb (x_{s},x_t){=}ij \rightbb$, where $\leftbb\rightbb$ is the Iverson bracket.
Let $\<\cdot,\cdot\>$ denote the scalar product in $\Real^\I$. 
We can write the energy as
\begin{equation}
\f(x) = \<f,\phi(x)\>.
\end{equation}
The energy minimization can be expressed and relaxed as
\begin{equation}
\label{energy-linear}
\min_{x\in \LL} \<f,\phi(x)\>
 = \min_{\mu\in \delta(\LL) }\<f,\mu\>  = \min_{\mu\in \M }\<f,\mu\> \geq \min_{\mu\in\Lambda}\<f,\mu\>,
\end{equation}
where $\M = \conv \delta(\L)$ and $\Lambda$ is the {\em local} polytope that makes an outer approximation of $\M$.
We consider the standard Schlesinger's LP relaxation~\cite{Schlesinger-76}, where the polytope $\Lambda$ is given by the primal constraints in the following primal-dual pair: %$$
%The Schlesinger's LP relaxation for the pairwise energy minimization~\cite{Schlesinger-76}
%$\Lambda = \aff (\M) \cap \Real^\I_+$. 
%The relaxation and its LP dual can be written explicitly as
%The relaxation is written in the expanded form as:
%Energy minimization can be represented as an integer linear program, which can be relaxed by dropping the integrality constraints. This construction and its LP dual was considered in the context of computer vision by~\citet{Schlesinger-76,Chekuri-01, Wainwright03nips,Koster-98}.
%The LP relaxation for the energy minimization is 
%\begin{equation}\label{LP}
%\min\csub{
%\sum_{j}\mu_{st}(i,j) - \mu_s(i) = 0 \\
%\sum_{i}\mu_{st}(i,j) - \mu_t(j) = 0\\
%\sum_{i}\mu_{s}(i) -\mu_0 = 0
%}\<f,\mu\>
%\end{equation}
%
\begin{equation}\label{LP-big}
\begin{array}{lr}
\mbox{(LP-primal)} & \mbox{(LP-dual)} \\
\min \<f,\mu\>\tab\tab\tab\tab\tab = & \max \psi\\
\sum_{j}\mu_{st}(i,j) - \mu_s(i) = 0,  & \varphi_{st}(i) \in \Real,\\
\sum_{i}\mu_{st}(i,j) - \mu_t(j) = 0,  & \varphi_{ts}(j) \in \Real,\\
\sum_{i}\mu_{s}(i) -\mu_0 = 0, & \varphi_{s} \in \Real,\\
\mu_0 = 1, & \psi \in \Real,\\
\mu_s(i) \geq  0, & \hskip-1cm f_{s}(i) +\sum_{t\in \N(s)}\varphi_{st}(i)-\varphi_s \geq 0, \\
\mu_{st}(i,j) \geq  0, & \hskip-1cm f_{st}(i,j) -\varphi_{st}(i)-\varphi_{ts}(j) \geq 0, \\
\mu_0 \geq 0; & f_0 + \sum_{s}\varphi_s -\psi \geq 0.
\end{array}
\notag
\end{equation}
%\begin{equation}\label{LP}
%\begin{array}{rrlr}
%%\min\limits_{\mu\in\Lambda} \<f,\mu\> = \hskip-0.3cm
%& \min \<f,\mu\> & = \hskip-0.3cm & \max \varphi_0 \,.\\
%&
%\setlength{\arraycolsep}{0.2em}
%\begin{array}{lrl}
%%(\forall st\in\E)\ 
%& \sum_{j}\mu_{st}(i,j) - \mu_s(i) &= 0\\
%%(\forall st\in\E)\ 
%& \sum_{i}\mu_{st}(i,j) - \mu_t(j) &= 0\\
%&\sum_{i}\mu_{s}(i) -\mu_0 &= 0\\
%&\mu_0& = 1\\
%&\mu_s(i)& \geq  0\\
%&\mu_{st}(i,j)& \geq  0\\
%& \mu_0 & \geq 0 \\
%\end{array}
%& &
%\setlength{\arraycolsep}{0.2em}
%\begin{array}{rl}
%\varphi_{st}(i) & \in \Real\\
%\varphi_{ts}(j) & \in \Real\\
%\varphi_{s} & \in \Real\\
%\varphi_0 & \in \Real\\
%f_{s}(i) +\sum_{t\in \N(s)}\varphi_{st}(i)-\varphi_s & \geq 0 \\
%f_{st}(i,j) -\varphi_{st}(i)-\varphi_{ts}(j) & \geq 0 \\
%f_0 + \sum_{s}\varphi_s -\varphi_0 & \geq 0
%%\ & \ \\
%\end{array}
%\end{array}
%\tag{LP}
%\end{equation}
%
%
%
%The relaxation and its dual are written shortly as
% (\eg, review by Werner~\cite{Werner-PAMI07})
This relaxation is illustrated in \Figure{f:embed1}. %Note, we present the problem in the form that is convenient for this paper.
%, where we introduces $\mu_0$ for convenience of .
We write it compactly as
\begin{equation}\label{LP}
\begin{array}{rrlr}
%\min\limits_{\mu\in\Lambda} \<f,\mu\> = \hskip-0.3cm
& \min \<f,\mu\> & = \hskip-0.3cm & \max \psi \,,\\
&
\setlength{\arraycolsep}{0.2em}
\begin{array}{rl}
A \mu &= 0\\
\mu_0 &= 1\\
\mu &\geq 0
\end{array}
& &
\setlength{\arraycolsep}{0.2em}
\begin{array}{rl}
\varphi & \in \Real^m\\
\psi &\in \Real \\
f - A\T \varphi - e_0 \psi & \geq 0\\
%\ & \ \\
\end{array}
\end{array}
\tag{LP}
\end{equation}
where $A$ is $m \times |I|$ and $e_0\in\Real^\I$ is the basis vector for component $0$. %Note, any polytope can be represented in this form. 
Vector $f^{\varphi} := f - A\T \varphi$ is called an {\em equivalent transformation} ({\em reparametrization}) of $f$. There holds $\<f^{\varphi},\mu\> = \<f,\mu\> - \<\varphi,A\mu\> = \<f,\mu\>$ for all $\mu \in \Lambda$.
Because $\Lambda\supset \delta(\LL)$, it follows that $\f(x) = E_{f^{\varphi}}(x)$ for all $x\in\LL$. %In other words, $f^{\varphi}$ defines a {\em reparametrization} of $\f$.
If there exists $\varphi$ such that $g = f^\varphi$ we write $g\equiv f$. In this case vectors $f$ and $g$ are different but they define equal energy functions $\f = \g$. % (they are different {\em reparametrizations} of the same function).
See, \eg,~\cite{Werner-PAMI07} for more detail.
\par
Let $(\mu,(\varphi,\psi))$ be a feasible primal-dual pair. %\footnote{Variable $\psi$ is formally a part of the dual solution, %but  assume $\psi$ 
%equals its optimal value, $f_0 + \sum_{s}\varphi_s$}. 
Complementary slackness for~\eqref{LP} states that $\mu$ is optimal to the primal and $(\varphi,\psi)$ to the dual iff %$\mu,\varphi$ are optimal iff
\begin{subequations}\label{slackness}
\begin{align}\label{slack-a}
\mu_{s}(i) > 0\ \Rightarrow\ \ & f^\varphi_{s}(i) = 0,\\ %\min_{i'} f^\varphi_{s}(i') = 0, \\
\label{slack-b}
\mu_{st}\ind{\ij} > 0\ \Rightarrow\ \ & f^\varphi_{st}\ind{\ij} = 0,\\% \min_{i'j'} f^\varphi_{st}\ind{i',j'} = 0,\\
\mu_0 >0 \ \Rightarrow\ \ & \psi = f_0 + \sum_{s}\varphi_s.
\label{slack-c}
\end{align}
\end{subequations}
Because a feasible dual solution satisfies $(\forall i')\ f^\varphi_{s}(i') \geq 0$, condition on the RHS\footnote{RHS = Right-hand side of an equation.} of~\eqref{slack-a} imply that label $i$ is {\em minimal} for $f^\varphi$. Similarly, in case of~\eqref{slack-b} we say that $ij$ is a minimal pair.
% Referring to the conditions on the RHS\footnote{RHS = Right-hand side of an equation.} of implications~\eqref{slack-a}, \eqref{slack-b} we say that label $i$ (resp. pair $ij$) is minimal for $f^\varphi$. 
Implication~\eqref{slack-c} has its premise always satisfied.
\par
\section{Improving Mapping}
\begin{definition}\label{def:improving mapping}
A mapping $p \colon \LL \to \LL$ is called {\em (weakly) improving} for $f$ if\\[-5pt]
\begin{equation}\label{improving mapping}
(\forall x\in\LL) \tab \f(p (x)) \leq \f (x),
\end{equation}
and {\em strictly improving} if
\begin{equation}\label{s-improving mapping}
(p(x)\neq x) \ \Rightarrow\ \f(p (x)) < \f (x),
\end{equation}
\end{definition}\noindent
%
%We will consider mappings from the following family. 
%Let $y\in\LL$ be a fixed labeling. Let $V = \{(s,i)\mid s\in\V, i\in \L_s\}$. Let $\xi \in\{0,1\}^V$. We will consider mappings $p_\xi$ that in every pixel either keep label $x_s$ or switch to the label $y_s$, depending on whether $\xi_{s x_s}$ is 1:
%(assume $\xi_{sy_s}$ = 1, but it does not matter). 
%Define mapping $p_s$
%\begin{equation}
%p(x)_s = x_s+(y_s-x_s)\xi_{s x_s} = 
%\begin{cases}
%y_s &\ \IF\ \xi_{s x_s} = 1,\\
%x_s &\ \OTHERWISE.
%\end{cases}
%\end{equation}
%Let us consider the family of all such mappings for a fixed labeling $y$, $\P = \{ p_\xi \mid \xi \in \{0,1\}^V \}$.
We will consider {\em pixel-wise} mappings, of the form
%\begin{equation}
$p(x)_s = p_s(x_s)$,
%\end{equation}
where $(\forall s\in \V)$ $p_s \colon \L_s \to \L_s$. Furthermore, we restrict to idempotent mappings, \ie, satisfying $p\circ p = p$, where $\circ$ denotes composition.

\begin{statement}Let $p$ be an improving pixel-wise idempotent mapping. Then there exists an optimal solution $x^*$ such that
\begin{equation}\label{improving-opt}
(\forall i) \tab  p_s(i) \neq i \ \Rightarrow\ \  x^*_s \neq i.
\end{equation}
\end{statement}
In case $p$ is strictly improving any optimal solution $x^*$ satisfies~\eqref{improving-opt}.
\begin{proof}
Let $x$ be optimal. Then $x^* = p(x)$ is optimal as well. By idempotency, $x^*$ satisfies $p(x^*) = x^*$. Condition~\eqref{improving-opt} is equivalent to $(\forall i)$ $x^*_s=i$\  $\Rightarrow$\  $p_s(i)=i$.
%hence $x^*$ satisfies~\eqref{improving-opt}. 
If $p$ is strictly improving, for any optimal solution $x^*$ there must hold $p(x^*) = x^*$, otherwise $\f(p(x^*)) < \f(x^*)$. %, which is a contradiction.
\qed
\end{proof}
It follows that knowing an improving mapping, we can eliminate labels $(s,i)$ for which $p_s(i)\neq i$ as non-optimal.
Given a mapping $p$, the verification of the improving property is NP-hard: in case of binary variables it includes NP-hard decision problem of whether a partial assignment is an autarky~\cite{Boros:TR06-probe}. %as follows by the following argument. Verification for the mapping $x\mapsto y$, where $y\in\L$ is a constant labeling, allows deciding about optimality of $y$ to $\f$. NP-hard constraint satisfaction problem can be encoded as existence of labeling with energy lower than $\f(y)$. 
We need a simpler sufficient condition. %We are going to construct it 
It will be constructed by embedding the mapping into the linear space and applying a relaxation there.
\subsection{Relaxed Improving Mapping}
%= (p_s \colon \L_s \to \L_s \mid s\in\V)
\begin{definition}A {\em linear extension} of $p\colon \LL \to \LL$ is a linear mapping $P\colon \Real^\I \to \Real^\I$ that satisfies
\begin{equation}\label{p-ext}
(\forall x\in\LL)\tab \delta(p(x)) = P\delta(x).
\end{equation} 
\end{definition}
%We do not address here the questions of existence and uniqueness. 
\begin{figure}
\includegraphics[width=\linewidth]{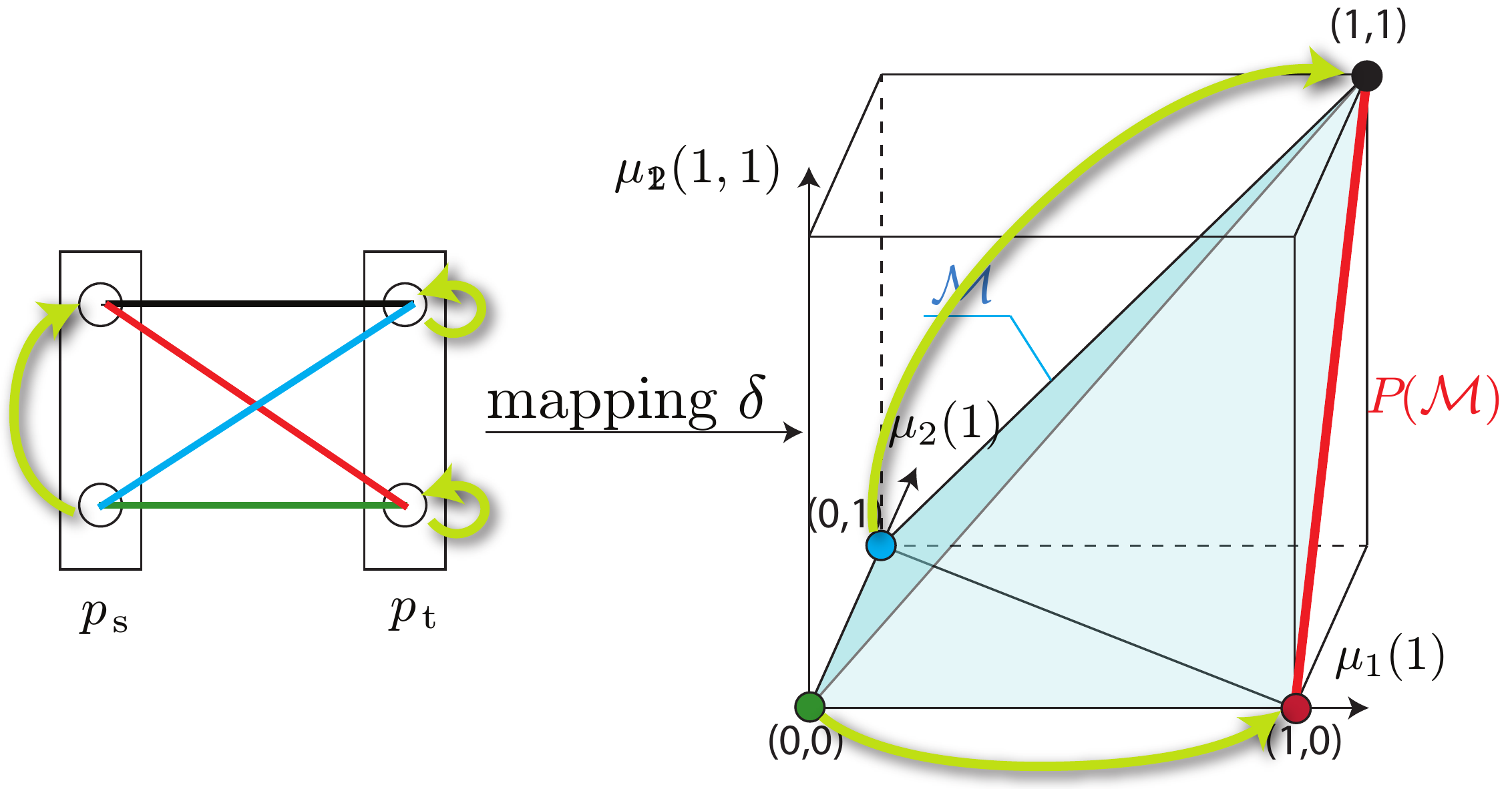}
\caption{Discrete map $p$ sends some labelings to other (the green labeling to red and the blue one to black). %In , where labelings are points, 
There is a corresponding linear map $P\colon \Real^\I\to\Real^\I$ (unique on $\aff(\M)$) wit this action -- the oblique projection onto the red facet.}\label{f:embed2}
\end{figure}
See \Figure{f:embed2} for illustration. We will only use the following particular linear extension for a pixel-wise mapping $p\colon \L \to \L$, which will be denoted $[p]$.
%Let $p\colon \L \to \L$ be pixel-wise. Its linear extension, denoted $[p]$ can be constructed as follows. 
For each $p_s$ define matrix $P_s \in \Real^{\L_{s} \times \L_s}$ as $P_{s,ii'} = \leftbb p_s(i') = i\rightbb$.
%Linear extension of $p$, denoted $[p]$ %, is the mapping $P\colon \Real^\I \to \Real^\I$ defined by
%Let $P\colon \Real^\I \to \Real^\I$ be defined as
The linear extension $P=[p]$ is given by
\begin{equation}\label{pixel-ext}
\begin{aligned}
& (P\mu)_{0} = \mu_0,\\
& (P\mu)_{s}(i) = P_{s} \mu_{s},\\
& (P\mu)_{st}(ij) = P_{s} \mu_{st} P_{t}\T.
\end{aligned}
\end{equation}
%
%\begin{statement}$P$ is a linear extension of $p$. %$(\forall x)$ \ \ $P \delta(x) = \delta(p(x))$. 
%\end{statement}
%\begin{proof}
Linear maps of the form~\eqref{pixel-ext} with general matrices $P_s$  satisfying $P_s \geq 0$ and $1\T P_s = 1$ will be called {\em pixel-wise}. To verify that~\eqref{p-ext} holds true we expand the components as follows. $(P\delta(x))_{s}(i) = \sum_{i'\in \L_s} \leftbb p_s(i'){=}i \rightbb \leftbb x_s{=}i' \rightbb = \leftbb p_s(x_s){=}i \rightbb = \delta(p(x))_s(i)$. Similarly, for pairwise components,
$(P\delta(x))_{st}(i,j) = %\sum_{i'j'\in \L_{st}} \leftbb p_s(i'){=}i \rightbb \leftbb p_s(x_s){=}i' \rightbb \leftbb p_t(j'){=}j \rightbb \leftbb p_t(x_t){=}j' \rightbb = 
\leftbb p_s(x_s){=}i \rightbb \leftbb p_t(x_t){=}j \rightbb = \delta(p(x))_{st}(i,j)$.
\par
Using the linear extension $P$ of $p$ we can write
\begin{equation}
\f(p(x)) = \<f, \delta (p(x))\>  = \<f, P \delta(x) \>. % = \<P\T f, \delta(x) \>.
\end{equation}
This allows to express condition~\eqref{improving mapping} as 
% It is seen now that the mapping $p$ is improving for $f$ iff
\begin{equation}\label{improving-conj}
%(\forall x\in\LL)\tab  \<(I-P\T) f, \delta(x) \> \geq 0.
(\forall x\in\LL)\tab  \<f, P \delta(x) \> \leq  \<f, \delta(x) \>.
\end{equation}
We introduce a sufficient condition by requiring that this inequality is satisfied over a larger subset $\Lambda$.
\begin{definition}
A linear mapping $P \colon \Real^\I \to \Real^\I$ is a (weak) {\em $\Lambda$-improving} mapping for $f$ if
%A linear mapping $P \colon \Real^\I \to \Real^\I$ belongs to the set $\WI$ of (weak) {\em $\Lambda$-improving} mappings if
%is called (weakly) {\em $\Lambda$-improving} for $f$ (we say that $P$ belongs to set $\WI$) if
% (resp. {\em strict $\Lambda$-improving}) for $f$ if
\begin{align}\label{L-improving-conj-a}
%& (\forall \mu\in\Lambda)\tab  \<(I-P\T) f, \mu \> \geq 0;%\tag{WI}\\
& (\forall \mu\in\Lambda)\tab  \<f, P \mu \> \leq \<f, \mu \>;%\tag{WI}\\
\end{align}
and is a {\em strict $\Lambda$-improving} mapping for $f$ if
%and belongs to the set $\SI$ of {\em strict $\Lambda$-improving} mappings if
%{\em strict $\Lambda$-improving} for $f$ (we say that $P$ belongs to set $\SI$) if
\begin{align}\label{L-improving-conj-b}
%&(\forall \mu\in\Lambda,\ P\mu \neq \mu) \tab  \<(I-P\T) f, \mu \> > 0.
&(\forall \mu\in\Lambda,\ P\mu \neq \mu) \tab  \<f, P \mu \> < \<f, \mu \>.
\end{align}
\end{definition}
The set of mappings for which~\eqref{L-improving-conj-a} (resp. ~\eqref{L-improving-conj-b}) is satisfied will be denoted \WI (resp. \SI). For convenience, we will use the term {\em relaxed improving} map, meaning it \wrt polytope $\Lambda$.
%
%\begin{subequations}\label{L-improving-conj}
%\begin{align}\label{L-improving-conj-a}
%& (\forall \mu\in\Lambda)\tab  \<(I-P\T) f, \mu \> \geq 0,\\%\tag{WI}\\
%\label{L-improving-conj-b}
%\mbox{resp.}\ \ &(\forall \mu\in\Lambda,\ P\mu \neq \mu) \tab  \<(I-P\T) f, \mu \> > 0.%\tag{SI}
%\end{align}
%\end{subequations}
%\end{definition}
%The set of $P$ for which~\eqref{L-improving-conj-a} (resp. ~\eqref{L-improving-conj-b}) holds is a convex set denoted \WI (resp. \SI).
Note, this definition and some theorems are given for arbitrary linear maps, at the same time for the purpose of this paper it would be sufficient to assume pixel-wise maps of the form~\eqref{pixel-ext}.
Clearly,~\eqref{L-improving-conj-a} implies~\eqref{improving-conj} because $\delta(\LL) \subset \Lambda$ and for the linear extension $[p]$ it implies that $p$ is improving. %,~\eqref{improving mapping}.
Sets \WI and \SI are convex as they are intersections of half-spaces (respectively, closed and open). {\em Verification} of~\eqref{L-improving-conj-a} for a given $P$ %of $P\in \WI$ 
can be performed via solving %written as
\begin{equation}\label{L-LP}
\min_{\mu\in\Lambda} \<(I-P\T)f,\mu \>
\end{equation}
and checking that the result is non-negative, \ie can be decided in polynomial time.
\section{Special Cases}\label{S:unification}
%In order to show that several methods in the literature can be described by means of the sufficient
In order to show that other methods in the literature are special cases of condition~\eqref{L-improving-conj-a} we first identify a pixel-wise idempotent mapping $p$ they construct.
Then we apply the following {\em trivial} sufficient condition for $[p]\in\WI$. %on pixel-wise idempotent $p$ to satisfy $[p]\in\WI$.
% 
%\paragraph{Simple Sufficient Conditions} 
%We give now very simple conditions for pixel-wise idempotent $p$ that are sufficient for $[p]\in\WI$.
%When studying different methods we will use this condition in combination with equivalent transformations. 
% (\Section{sec:char}).
\begin{statement}\label{S:sufficient}Let $(\forall u\in \V)$ $(\forall st\in \E)$
\begin{subequations}\label{eq:proj-sufficient}
\begin{align}\label{eq:sufficient_1}
%(\forall s\in \V) \ 
(\forall i\in \L_u) \tab & f_{u}\ind{p_u(i)} \leq f_{u}\ind{i},\\
\label{eq:sufficient_2}
%(\forall st\in \E) \  
(\forall ij\in \L_{st})\tab & f_{st}\ind{p_s(i),p_t(j)} \leq f_{st}\ind{\ij}.
\end{align}
\end{subequations}
Then $[p] \in \WI$. If additionally
$%\begin{equation}
(p_u(i){\neq}i) \Rightarrow\ f_{u}\ind{p_u(i)} < f_{u}\ind{i}$ for all $s\in\V$, $i\in\L_s$,
%\end{equation}
then $[p] \in \SI$.
\end{statement}
\begin{proof}
Let $\mu \in \Lambda$. By multiplying~\eqref{eq:sufficient_1} by $\mu_{u}\ind{i}$ and summing over $u$ and $i$ and multiplying~\eqref{eq:sufficient_2} by $\mu_{st}\ind{\ij}$ and summing over $st$ and $ij$ we get
\begin{equation}\label{sufficient-result}
\<[p]\T f, \mu \> \leq \<f,\mu\>\,,
\end{equation}
which is equivalent to~\eqref{L-improving-conj-a}. Suppose $[p]\mu\neq \mu$. Then $\exists s\in \V, \exists i\in\L_s$ such that $\mu_{s}(i)>0$ and $p_s(i)\neq i$. Therefore there will be at least one strict inequality in the sum (with positive coefficient) and~\eqref{sufficient-result} will hold strictly.
\qed
\end{proof}\noindent
How this component-wise condition can be used to explain global methods? The trick is use it in combination with equivalent transformations. It turns out that this combination is very powerful and in fact characterizes $\WI$ (see \Section{S:char-s}).
%
%In this section we will show how a number of {\em different} sufficient conditions for partial optimality used in the literature can be reformulated as special cases of a $\Lambda$-improving mapping. 
We will consider mainly {\em weak} variants of all methods. We start with a simple local method.
\subsection{DEE}
There is a number of local sufficient conditions proposed that are generally referred to as {\em dead end elimination} (DEE)~\cite{Desmet-92-dee,Goldstein-94-dee,Lasters-95-dee,Georgiev-06-dee,Pierce-2000-dee}. We will consider Goldstein's {\em simple} DEE~\cite{Goldstein-94-dee}. This method for every pixel $s$ and labels $\alpha,\beta\in\L_s$ verifies the condition $(\forall x\in \L_{\N(s)})$
\begin{equation}\label{DEE-simple}
\begin{split}
f_s(\alpha)-f_s(\beta)
+\sum_{t\in\N(s)} [f_{st}(\alpha,x_t)-f_{st}(\beta,x_t)] \geq 0.
\end{split}
\end{equation}
If the condition is satisfied it means that a (weakly) improving switch from $\alpha$ to $\beta$ exists for an arbitrary labelling $x$. In that case, $(s,\alpha)$ can be eliminated, preserving at least one optimal assignment. %without affecting any of the optimal assignments. %Thus, it is a very intuitive condition, which
\par
%\begin{theorem}
%Let condition~\eqref{DEE-simple} satisfied. 
Let $p_s(\alpha) = \beta$, $p_s(i) = i$ for $i\neq \alpha$; and $p_t(i)=i$ for all $t\neq s$. Let $P=[p]$. We claim $P\in\WI$.
%\end{theorem}
\begin{proof}
The condition~\eqref{DEE-simple} can be written as
\begin{equation}
\min_{x\in \LL} [f(x) - f(p(x))] \geq 0.
\end{equation}
\end{proof}
This minimization problem efficiently has a star structure (non-zero unary terms only for $s$ and pairwise terms for neighbors of $s$). It is equivalent therefore to
\begin{equation}
\min_{\mu \in \Lambda} \<f,\mu - P \mu \> \geq 0.
\end{equation}
\qed
\par
Similarly, the strict inequality~\eqref{DEE-simple} implies $P \in \SI$.
%{\gray
%Goldstein's general DEE condition replaces a labeling by a convex combination of labelings. It can be shown that this condition correspond to $\Lambda$-improving mapping $P$ with fractional $P_s$.
%}
%Goldstein's general DEE condition\cite{Goldstein-94-dee} for a pair $(s,\alpha)$ and coefficients $\bar \beta \geq 0$, $\sum_{i}\bar\beta(i)=1$ verifies
%\begin{equation}\label{dee_proj1}
%\begin{split}
%\min_{x\in \L_{\N(s)}} \Big( f_s(\alpha)-\sum_i f_s(i)\bar\beta(i) \\
%+\sum_{t\in\N(s)} [f_{st}(\alpha,x_t)-\sum_i f_{st}(i,x_t)\bar\beta(i)]\Big) \geq 0\,.
%\end{split}
%\end{equation}
%If the inequality is satisfied, the pair $(s,\alpha)$ can be eliminated as non-optimal.
%We introduce the mapping $P_s$ as
%\begin{equation}
%P_{s,ii'} = \begin{cases}
%1 & \IF\ i=i'\neq \alpha\,,\\
%\bar\beta(i) & \IF\ i\neq \alpha,\; i'=\alpha\,,\\
%0 & \mbox{otherwise}\,.
%\end{cases}
%\end{equation}
%
%
%
\subsection{QPBO Weak Persistency}
%The roof dual method, or QPBO, finds an optimal primal-dual pair $(\mu,\varphi)$ to the LP-relaxation problem. Let $f^{\varphi} = f - A\T \varphi$. The pair $(\mu,\varphi)$ has an additional property that $f^{\varphi}_s(i) = 0$ $\Leftrightarrow$ $\mu_s(i) > 0$ (strict complementarity). Pixels $s$ that have integer components $\mu_s$ are called integer.
%\begin{theorem}[{\em Weak persistency},~\cite{Nemhauser-75,Hammer-84-roof-duality}]\label{weak persistency} 
The {\em weak persistency} theorem~\cite{Nemhauser-75,Hammer-84-roof-duality} states the following.
Let $\L_s = \{0,1\} = \Bool$. Let $\mu\in \argmin_{\mu\in\Lambda}\<f,\mu\>$.
Let $O_s = \{i\in\Bool \mid \mu_s(i) > 0\}$.
Then
\begin{equation}
(\exists x\in \argmin_x \f(x))\ (\forall s\in\V) \ x_s\in O_s.
\end{equation}
%\end{theorem}\noindent
In the case $|O_s|=1$ vector $\mu_{s}$ is necessarily integer and the theorem states that for such integer pixels, $x_s$ can be fixed accordingly.
\par
We define $p_s(i) = 0$ if $O_s=\{0\}$, $p_s(i) = 1$ if $O_s=\{1\}$ and $p_s(i) = i$ otherwise. We claim $[p] \in \WI$.
\begin{proof} Let $\varphi$ be a solution to LP-dual. %From complementarity slackness it must be
We will show that the following component-wise inequalities hold:
\begin{subequations}
\label{qpbo-comp}
\begin{align}
\label{qpbo-comp-a}
f^\varphi_s (i) & \geq f^\varphi_s (p_s(i)),\\
\label{qpbo-comp-b}
f^\varphi_{st} (i,j) & \geq f^\varphi_{st} (p_{s}(i),p_t(j)).
\end{align}
\end{subequations}
Unary inequalities~\eqref{qpbo-comp-a} hold by construction of $p$ and complementary slackness~\eqref{slack-a}. Let us show pairwise inequalities~\eqref{qpbo-comp-b}. %Let $O_{st} = \argmin\limits_{ij} f^{\varphi}_{st}\ind{\ij}$. 
Let $y_{st} = p(x)_{st}$. Consider the following cases:
\begin{itemize}
\item $|O_s| = 1$, $|O_t| = 1$. Necessarily, $\mu_s(y_s) = 1$ and $\mu_t(y_t) = 1$. By feasibility, $\mu_{st}(y_{st}) = 1$. By complementary slackness, $f^{\varphi}_{st}(y_{st}) \leq f^{\varphi}_{st}(x_{st})$.
\item $|O_s| = 1$, $|O_t| = 2$. Necessarily, $\mu_s(y_s) = 1$. By feasibility, $\mu_{st}(y_s,i) > 0$ for all $i\in\Bool$. By complementary slackness, $f^{\varphi}_{st}(y_s,x_t) \leq f^{\varphi}_{st}(x_s,x_t)$.
%By arc-consistency, both arcs $(y_s,0)$ and $(y_s,1)$ are locally minimal (both belong to $O_{st}$). Therefore, $y_{st}\in O_{st}$ and $f^{\varphi}_{st}(y_{st}) \leq f^{\varphi}_{st}(x_{st})$.
\item $|O_s| = 2$, $|O_t| = 2$. In this case $y_{st}=x_{st}$ and hence $f^{\varphi}_{st}(y_{st}) = f^{\varphi}_{st}(x_{st})$.
\end{itemize}
Therefore, for every $st\in\E$ we have $f^{\varphi}_{st}(y_{st}) \leq f^{\varphi}_{st}(x_{st})$. %It is sufficient to state that $p \in \SI$.
By \Statement{S:sufficient},~\eqref{qpbo-comp} implies $p \in \WI$.
\qed
\end{proof}
%Proof for strong variant of QPBO is given in~\cite[\Section{S:qpbo-s}]{suppl}.
%
%\section{QPBO}\label{S:qpbo-s}
\subsection{QPBO Strong Persistency}
%In this section we give more detail on the roof dual method (QBPO). We show that to the strong persistency there corresponds a mapping in $\SI$. We also discuss how {\sc max-si} and {\sc max-wi} problems are solved.
%\par
%
%Let $(\mu,\varphi)$ be a primal-dual pair (assume $\psi$ equals its optimal value, $f_0 + \sum_{s}\varphi_s$). 
%A special solution to~\eqref{LP} exists
%A feasible ~\eqref{LP} is called {\em strictly complementary} if
Let $(\mu,\varphi)$ be a feasible primal-dual pair for~\eqref{LP}. This pair is called {\em strictly complementary} if
%Complementary slackness for~\eqref{LP} states that $\mu,\varphi$ are optimal iff
\begin{subequations}\label{sslackness}
\begin{align}\label{sslack-a}
\mu_{s}(i) > 0\ & \Leftrightarrow\ f^\varphi_{s}(i) = 0, \\
\label{sslack-b}
\mu_{st}\ind{\ij} > 0\ & \Leftrightarrow\ f^\varphi_{st}\ind{\ij} = 0\\
 & \psi = f_0 + \sum_{s}\varphi_s.
\end{align}
\end{subequations}
Clearly, strictly complementary pair is complementary and thus it is optimal. Such pair always exists and can be found by interior point algorithms. It is known that $\mu$ is a relative interior point of the primal optimal facet and $\varphi$ is relative interior point of the dual optimal facet. 
\par
The {\em strong persistency} theorem~\cite{Nemhauser-75,Hammer-84-roof-duality} considers pixels $s\in\V$ such that $\mu'_{s}$ is integer in {\em all} solutions $\mu'$ to the LP-relaxation. It is seen that $\mu'_{s}(i)>0$ in some optimal solution $\mu'$ iff $\mu_{s}(i)>0$ for the relative interior optimal solution $\mu$. Clearly, the solution $\mu$ has the minimum number of integer components of all solutions. Let
\begin{equation}\label{s-O1}
O_s = \{i\in\L_s \mid \mu_s(i) > 0\}.
\end{equation}
For a strictly complementary pair,~\eqref{s-O1} defines the same sets as
\begin{equation}\label{s-O2}
O_s = \argmin_{i}f^{\varphi}_{s}(i).
\end{equation}
So we need either primal or dual relative interior optimal point. 
The theorem can be formulated as follows. 
%
%
%\begin{theorem}[{\em Strong persistency}, \cite{Nemhauser-75,Hammer-84-roof-duality}]\label{strong persistency} 
Let $(\mu,\varphi)$ be a strictly complementary primal-dual pair. %dual relative interior optimal point. 
Let $O_s$ be defied by~\eqref{s-O2}. Then %Let $\L_s = \{0,1\}$. Let $(\mu,\varphi)$ be strictly complementary for~\eqref{LP}.
% ($\mu$ is relative interior point of the primal optimal facet).
%Then 
%Let $\mu\in \argmin_{\mu\in\Lambda}\<f,\mu\>$.
%Let $f^{\varphi}$ be a (non-unique) arc-consistent equivalent of $f$.
%Let $O_s = \argmin_i f^{\varphi}_s(i)$. Then
\begin{equation}
(\forall x\in \argmin_x \f(x))\ (\forall s\in\V) \ x_s\in O_s.
\end{equation}
%\end{theorem}\noindent
%
\par
Let us consider the pixel-wise mapping $p$:
\begin{equation}
p_s(i) = \begin{cases}
0, & O_s=\{0\},\\
1, & O_s=\{1\},\\
i, & O_s=\{0,1\}.
\end{cases}
\end{equation}
%Let $p_s(i) = 0$ if $O_s=\{0\}$ and $p_s(i) = 1$ if $O_s=\{1\}$ and $p_s(i) = i$ otherwise. 
%\begin{theorem}
\begin{statement} We claim that $[p] \in \SI$.
\end{statement}
%\end{theorem}
\begin{proof}
By construction of $p$ and $O_s$ we have $(\forall i)$
\begin{align}
\label{qpbo-comp-a-s}
f^\varphi_s (i) & \geq f^\varphi_s (p_s(i)).
\end{align}
If $p_s(i)\neq i$, then $i\notin\ O_s$ and inequality~\eqref{qpbo-comp-a-s} is strict.
%{\gray
%By construction of $p$ and strict complementary slackness~\eqref{sslack-a} we have
%\begin{subequations}
%\begin{align}
%\label{qpbo-comp-a-s}
%(p_s(i)\neq i) \ \Rightarrow \ f^\varphi_s (i) & > f^\varphi_s (p_s(i)).
%\end{subequations}
%}
We also have the pairwise inequalities~\eqref{qpbo-comp-b} implied by non-strict complementary slackness as in the weak persistency case. 
%We will show that the following component-wise inequalities hold:
%\begin{subequations}
%\label{qpbo-comp-s}
%\begin{align}
%\label{qpbo-comp-a-s}
%(p_s(i)\neq i) \ \Rightarrow \ f^\varphi_s (i) & > f^\varphi_s (p_s(i)); \\
%\label{qpbo-comp-b-s}
%f^\varphi_{st} (i,j) & \geq f^\varphi_{st} (p_{s}(i),p_t(j)).
%\end{align}
%\end{subequations}
%Unary inequalities~\eqref{qpbo-comp-a-s} hold . To show the pairwise inequalities~\eqref{qpbo-comp-b-s} the proof for the weak persistency applies.
%
%
%Let us show pairwise inequalities~\eqref{qpbo-comp-b-s}. %Let $O_{st} = \argmin\limits_{ij} f^{\varphi}_{st}\ind{\ij}$. 
%Let $y_{st} = p(x)_{st}$. Consider the following cases:
%\begin{itemize}
%\item $|O_s| = 1$, $|O_t| = 1$. Necessarily, $\mu_s(y_s) = 1$ and $\mu_t(y_t) = 1$. By feasibility, $\mu_{st}(y_{st}) = 1$. By complementary slackness, $f^{\varphi}_{st}(y_{st}) \leq f^{\varphi}_{st}(x_{st})$.
%\item $|O_s| = 1$, $|O_t| = 2$. Necessarily, $\mu_s(y_s) = 1$. By feasibility, $\mu_{st}(y_s,i) > 0$ for all $i\in\Bool$. By complementary slackness, $f^{\varphi}_{st}(y_s,x_t) \leq f^{\varphi}_{st}(x_s,x_t)$.
%%By arc-consistency, both arcs $(y_s,0)$ and $(y_s,1)$ are locally minimal (both belong to $O_{st}$). Therefore, $y_{st}\in O_{st}$ and $f^{\varphi}_{st}(y_{st}) \leq f^{\varphi}_{st}(x_{st})$.
%\item $|O_s| = 2$, $|O_t| = 2$. In this case $y_{st}=x_{st}$ and hence $f^{\varphi}_{st}(y_{st}) = f^{\varphi}_{st}(x_{st})$.
%\end{itemize}
%Therefore, for every $st\in\E$ we have $f^{\varphi}_{st}(y_{st}) \leq f^{\varphi}_{st}(x_{st})$. %It is sufficient to state that $p \in \SI$.
By \Statement{S:sufficient}, it follows that $p \in \SI$.
\qed
\end{proof}
We can verify that mapping $p$ is the maximum because any mapping that is larger violates necessary conditions of \SI to be given in \Lemma{necessary-LI}. % {\red \Theorem{T:nested}}. 
Therefore it is the solution to {\sc max-si}.
\par
%
%\par
\subsection{MQPBO}
MQPBO method~\cite{kohli:icml08} extends partial optimality properties of QPBO to multi-label problems via the reduction of the problem to $0$-$1$ variables. The reduction is for a predefined ordering of labels. The method outputs two labelings $x^{\rm min}$ and $x^{\rm max}$ with the guarantee that there exists optimal labeling $x$ that satisfy $x_s \in [x^{\rm min}_s,\, x^{\rm max}_s]$. The improving mapping the method constructs has the form $p(x) = (x \vee x^{\rm min}) \wedge x^{\rm max}$. Because the reduction is component-wise and we showed component-wise inequalities~\eqref{qpbo-comp} for QPBO, it can be shown that component-wise conditions hold for $p$ and therefore $[p]\in \WI$. 
%\section{MQPBO}\label{S:mqpbo-s}
%In this section we give a detailed proof that the mapping constructed by MQPBO method~\cite{kohli:icml08} satisfies our %sufficient conditions.
\par
Let $f$ be a multi-label problem and $g$ the equivalent binary (having $\{0,1\}$-valued decision variables) energy minimization problem as defined in~\cite{kohli:icml08}.
%In the construction of the binarized ($\{0,1\}$-valued decision variables) problem 
The mapping of multi-valued to binary labelings is given by $z_{s,i}(x) = \leftbb x_s {>} i \rightbb$. The corresponding mapping of multi-label relaxed labelings $\mu$ to relaxed labelings $\nu$ of the binary problem is given~\citesuppl{Shekhovtsov-07-binary-TR} as follows. %and the corresponding mapping of multi-label relaxed labelings to relaxed labelings of the binary problem. % we introduced the mapping $x \mapsto z$~\eqref{x->z} of multi-valued to binary labelings. 
%Let us extend it to mapping $\Pi$ of multi-label relaxed labelings $\mu$ to binary relaxed labelings $\nu$.
For index $i\in\L_s = \{0,1,\ldots,K-1\}$ introduce the following sets of labels:
\begin{subequations}
\begin{align}
&L_{s}(i,0)  = \{0,\ldots,i\},\\
&L_{s}(i,1)  = \{i+1,\ldots,K-1\}.
\end{align}
\end{subequations}
\noindent The vector $\nu = \Pi\mu$ is defined as
\begin{subequations}\label{mapping Pi}
\begin{align}
\nu_{(s,i)}(\alpha)  =\!\!\sum\limits_{i'\in L_{s}(i,\alpha)} \mu_{s}(i'),\\
\nu_{(s,i)(t,j)}(\alpha,\beta)=\!\!\sum\limits_{\begin{subarray}{c}i'\in L_{s}(i,\alpha) \\ j'\in L_{t}(j,\beta)\end{subarray}} \mu_{st}(i',j'),
\end{align}
\end{subequations}
where $i$ and $j$ range in $\tilde\L_s = \tilde\L_t = \{0,1,\dots, K-2\}$. The mapping $\Pi$ is consistent with the mapping $z_{s,i}(x) = \leftbb x_s {>} i \rightbb$ in the sense that $\Pi \phi(x) = \phi(z(x))$ for all $x$.
Using the mapping $\Pi$, the equivalence of multi-label and binary problems $\forall x\in\LL$ $\f(x) = \g(z(x))$ is expressed as
\begin{equation}\label{K2-linearized}
(\forall \mu \in \Lambda) \tab \<f,\mu\> = \<g,\Pi \mu \>\,.
\end{equation}
Let QPBO method for $g$ construct labelings $z^{\rm min}_s$, $z^{\rm max}_s$ such that the mapping $q \colon z \to (z \vee z^{\rm min})\wedge z^{\rm max}$ is strictly (resp. weakly) improving for $g$. 
Let 
\begin{equation}
x^{\rm min}_{s} = \sum_{i'\in \tilde \L}z^{\rm min}_{s,i'}, \tab\tab
x^{\rm max}_{s} = \sum_{i'\in \tilde \L}z^{\rm max}_{s,i'}.
\end{equation} 
It was shown~\cite{kohli:icml08} that the mapping $p\colon x \to (x \vee x^{\rm min}) \wedge x^{\rm max}$ is strictly (resp. weakly) improving for $f$.
\begin{statement}
 We claim that the linear extension $[p]$ is in $\SI$ (resp. in $\WI$).
%Let 
%
%Let $P = [p]$. Then $P = $
%
% $x^{\rm min} = x(z^{\rm min})$, $x^{\rm max} = x(z^{\rm max})$ according to the mapping $z\mapsto x$~\eqref{z->x}. 
%
%Then the projection $P_{x^{\rm min}}^{x^{\rm max}}$ is strictly $\Lambda$-improving for $\f$.
\end{statement}\noindent
\begin{proof}
%Using mapping $\Pi$, the equivalence of multi-label and binary problems $\forall x\in\LL$ $\f(x) = \g(z(x))$ extends as
%\begin{equation}\label{K2-linearized}
%(\forall \mu \in \Lambda) \tab \<f,\mu\> = \<g,\Pi \mu \>\,.
%\end{equation}
%
Let $\tilde g$ be the problem equivalent to $g$ for which component-wise inequalities~\eqref{eq:proj-sufficient} hold (as proven to exist for QPBO). By equivalence~\eqref{K2-linearized}, we have
\begin{equation}\label{mqpbo-eq1}
(\forall \mu \in \Lambda) \tab \<f,\mu\> = \<g, \Pi\mu \> = \<\tilde g, \Pi \mu \>.
\end{equation}
Let $\tilde f = \Pi\T \tilde g$. From~\eqref{mqpbo-eq1} we conclude that $\tilde f \equiv f$. We expand now components of $\tilde f$ using~\eqref{mapping Pi} and component-wise conditions for $\tilde g$: %Because mapping $\Pi\T$ is component-wise, we have from $\tilde f = \Pi\T \tilde g$
\begin{equation}\label{PiT f expand}
\begin{aligned}
\tilde f_{s}(x_s) &= \sum_{i'} \tilde g_{(s,i')}(z_{s,i'}) \\
& \geq  \sum_{i'} \tilde g_{(s,i')}((z_{s,i'}\vee z^{\rm min}_{s,i'})\wedge z^{\rm max}_{s,i'} )\\
&= \tilde f_{s}((x_s\vee x^{\rm min})_s\wedge x^{\rm max}_{s})\,.
\end{aligned}
\end{equation}
Similarly, for pairwise terms: $\tilde f_{st}(x_{st}) =$
%\begin{equation}
\begin{align}
& \sum_{i'j'} \tilde g_{(s,i')(t,j')}\ind{z_{(s,i')(t,j')}} \\
\notag
& \geq  \sum_{i',j'} \tilde g_{(s,i')(t,j')}\Big((z_{(s,i')(t,j')}\vee z^{\rm min}_{(s,i')(t,j')})\wedge z^{\rm max}_{(s,i')(t,j')} \Big) \\
\notag
& = \tilde f_{st}((x_{st}\vee x^{\rm min})_{st}\wedge x^{\rm max}_{st}).
\end{align}
%\end{equation}
Therefore, component-wise inequalities hold for $\tilde f$. By the sufficient condition~\eqref{eq:proj-sufficient}, we conclude that
\begin{equation}
(\forall \mu \in \Lambda) \tab \<\tilde f, [p] \mu \> \leq \<\tilde f, \mu \>\,,
\end{equation}
therefore $[p]$ is in $\WI$. In the case that $q$ is strictly improving, from $p(x)\neq x$ follows $q(z)\neq z$ and one of the inequalities~\eqref{PiT f expand} holds strictly. In this case $[p]\in\SI$.
% $\Lambda$-improving for $\energy{\tilde f}$ and so also for $\energy{f}$.
\qed
\end{proof}
%
%(Complete proof in~\cite[\Section{S:mqpbo-s}]{suppl}). %MQPBO method finds a $\Lambda$-improving mapping. The mapping can be written in the form $p(x) = (x \vee x^{\rm min}) \wedge x^{\rm max}$.
%
%
%
%
\subsection{Auxiliary Submodular Problems} There were several methods proposed~\cite{Kovtun03,Kovtun-10} that differ in detail. All methods construct an {\em auxiliary} submodular (in a given ordering of labels) energy $\g$. A minimizer $y$ of $\g$ has the property that $\g(x \vee y) \leq \g(x)$, implied by submodularity. It follows that mapping $p(x) = x \vee y$ is improving for $g$. 
The construction of the auxiliary function (to be specified) ensures that improvement in $f$ is at least as big as improvement in $g$ for the full family of mappings $x\mapsto x\vee y$, assuming $y$ is not known. %$\f(x \vee y) - \f(x) \leq \g(x \vee y) - \g(x)$. This implies that $p$ is improving for $f$. 
It follows that $p$ is improving for $f$ and thus provides partial optimality. 
%It also restricts $g$ such  that $\g$ becomes equivalent to a problem with two labels and the minimization of $\g$ reduces to maxflow. 
%Thus the result of the method depends only on the choice of $z$. 
%{\gray
%For the Potts model it was shown that $K$ auxiliary problems for $z_s = \alpha$, $\alpha\in\{0,\dots K-1\}$ can be solved by $O(\log(K))$ maxflow computations on a network of the same size as a single auxiliary problem~\cite{Gridchyn-13}.
%}
\par
Let $P=[p]$. We claim $P\in\WI$.
\begin{proof}
%We show two steps: that the auxiliary property of $g$ implies
First, we show that the auxiliary property of $g$ implies
\begin{equation}\label{Kovtum-L-aux}
(\forall \mu \in \Lambda) \ \ \<f,P\mu - \mu\> \leq \<g,P\mu - \mu\>.
\end{equation}
The auxiliary function $\g$ in~\cite{Kovtun-10} satisfies the following component-wise inequalities:
\begin{subequations}\label{Kovtun-aux}
\begin{equation}
\begin{aligned}\label{Kovtun-aux-a}
&&& (\forall s \in \V) \ \ (\forall i \in \L_s) \ \ (\forall i'\in \K_s)\\
&&& \ \ (f-g)_{s}(i \vee i') \leq (f-g)_s(i),
\end{aligned}
\end{equation}
\begin{equation}
\begin{aligned}\label{Kovtun-aux-b}
&&& (\forall st \in \E), (\forall ij\in \L_{st})\ \ (\forall i'j'\in \K_{s}\times{\K_t})\\
&&& \ \ (f-g)_{st}(i\vee i', j\vee j') \leq (f-g)_{st}\ind{\ij},
\end{aligned}
\end{equation}
\end{subequations}
where $\K_s \subset \L_{s}$ depend on a particular method. All methods ensure that $y_s\in \K_s$.
Let $\mu\in\Lambda$. By multiplying inequalities~\eqref{Kovtun-aux-a} for $i'=y_s$ with $\mu_{s}(i)$, inequalities~\eqref{Kovtun-aux-b} for $i'j'=y_{st}$ with $\mu_{st}(ij)$ and adding we obtain~\eqref{Kovtum-L-aux}.
%These conditions are sufficient for~\eqref{Kovtum-L-aux} as soon as $y_s\in K_s$.
%and that $P\in\WIg$.
\par
Second, we show that $P\in\WIg$. Recall that $\g$ is submodular and $y$ is a minimizer. LP-relaxation for $g$ is tight, therefore there exists dual $\varphi$ such that $(\delta(y),\varphi)$ satisfies complementary slackness~\eqref{slackness}. %It follows that $y$ is a minimizer.
%Since $g$ is submodular, the improving inequalities can be shown to hold component-wise. 
Let $\tilde g = g^{\varphi}$. %+\A\T \varphi$ be AC-equivalent of $g$ and $y$ be a minimizer of $g$. Then
By~\eqref{slack-a}, $\tilde g_{s}(x_{s} \vee y_{s}) \leq \tilde g_{s}(x_{s})$.
%For the pairwise components from~\eqref{slack-b} we have that $\tilde g_{st}(y_{st}) \leq \tilde g_{st}(x_{st})$ and $\tilde g_{st}(y_{st}) \leq \tilde g_{st}(y_s,x_{t})$. We verify the four cases:
For the pairwise components we inspect the four cases in order to prove $\tilde g_{st}(x_{st} \vee y_{st}) \leq \tilde g_{st}(x_{st})$:
%\begin{equation}
%\tilde g_{st}(x_{st} \vee y_{st}) \leq \tilde g_{st}(x_{st}).
%\end{equation}
%By~\eqref{slack-b},
\begin{itemize}
\item $x_{st} \geq y_{st}$: in this case $x_{st} \vee y_{st} = x_{st}$.
\item $x_{st} < y_{st}$: in this case $x_{st} \vee y_{st} = y_{st}$, which is minimal. % in $\tilde g$.
\item $x_{s} < y_{s}$, $x_{t} \geq y_{t}$: in this case $x_{st} \vee y_{st} = (y_{s}, x_{t})$. The submodularity inequality 
$\tilde g_{st}(x_{st})+\tilde g_{st}(y_{st}) \geq \tilde g_{st}(x_s, y_t)+\tilde g_{st}(y_s, x_t)$ and %$\tilde g_{st}(y_s, x_t) \geq \tilde g_{st}(y_{st})$ (
minimality of $y_{st}$ imply $\tilde g_{st}(x_{st}) \geq \tilde g_{st}(y_s, x_t)$.
%$x_{st} \vee y_{st} = y_{st}$, which is minimal in $\bar g$.
\item $x_{s} \geq y_{s}$, $x_{t} < y_{t}$: similar to the above.
\end{itemize}
By \Statement{S:sufficient}, $p \in \WIx{\tilde g} = \WIg$. From~\eqref{Kovtum-L-aux} follows $p \in \WI$.
\qed
\end{proof}

The {\em one-against-all-binary} method~\cite{Kovtun-10} restricts $y_s$ to $\{0,\bar y_s\}$ for some fixed labeling $\bar y$ (\eg. $\bar y_s = \alpha$ for all $s$), let us call it the {\em test labeling}. The labels are then reordered such that $\bar y_s$ becomes the highest label and sets $K_s$ are chosen to be $\{0,K-1\}$.
%In the {\em one-against-all-binary}~\cite{Kovtun-10} construction,
Furthermore, $g$ is additionally constrained to be equivalent to a problem with two labels. In this case the result of the method depends only on the choice of $\bar y$ and not on the actual ordering.

 %is invariant to the ordering of labels.
%can be viewed as finding an improving mappings of the form $x\mapsto x \vee y'$. 
%He proposed to construct an auxiliary energy $\g$ that is submodular and has the property that its improving mappings are guaranteed to be improving also for~$\f$. Minimizers of submodular energies form a lattice and the maximum improving mapping of the form $x\mapsto x \vee y'$ is given by the maximum minimizer $y'$ of $\g$ (the highest labeling), which can be found by a reduction to the minimum cut. 
%
%
\subsection{Iterative Pruning}
Iterative Pruning method~\cite{Swoboda-13} was originally proposed for the Potts model: $f_{st}(i,j) = \gamma_{st} \leftbb i{\neq}j \rightbb$. It constructs a subset $\A \subset \V$, a labeling $y$ on $\A$ and an auxiliary energy $\g$:
%\begin{subequations}
\begin{align}
(\forall s\in \A) \tab & g_{s} = f_{s},\\
\notag
(\forall st\in \E,\ s\in \A, t\in \A) \tab & g_{st} = f_{st},\\
\notag
(\forall st\in \E,\ s\in \A, t \notin \A,\,\forall ij) \tab & g_{st}(i,j) = \gamma_{st} \leftbb i = y_s \rightbb,
\end{align}
%\end{subequations}
with remaining terms set to zero. It can be seen that energy $\g$ depends on the assignment of $y$ only on the {\em boundary} $\partial\A = \{s\in \A \mid \exists st\in \bar\E, t\notin \A \}$.
%Without loss of generality let us extend $y$ to a full labeling on $\V$. 
Let us extend $y$ to $\V$ in an arbitrary way, \eg, by $y_{\V\backslash \A} = 0$. 
The sufficient condition of~\cite{Swoboda-13} imply that $\delta(y) \in \argmin_{\mu\in\Lambda}\<g,\mu\>$ (the relaxation is tight). %, where $\Lambda'$ is local polytope for graph $(\A,(\A\times\A) \cap \E)$.
%\par
We construct mapping $p$ as
\begin{equation}
p_s(i) = \begin{cases}
y_{s} & \IF s \in \A,\\
i & \IF s \notin \A,
\end{cases}
\end{equation}
\ie, $p$ replaces part of labeling $x$ on $\A$ with the labeling $y$. Let $P=[p]$. We claim that $P\in\WI$.
\begin{proof}
We first show that $g$ is auxiliary for $f$ in the same sense as for the method~\cite{Kovtun-10}. We trivially have $f_s(p_s(i))-f_s(i) = g_s(p_s(i))-g_s(i)$. We also have equality of pairwise terms $f_{st}(p(x)_{st})-f_{st}(x_{st}) = g_{st}(p(x)_{st})-g_{st}(x_{st})$ for $st\in \E$ in all of the following cases:
%\begin{itemize}
%\item 
(a) $s\in \A$ and $t \in \A$;
(b) %\item 
$s\notin \A$ and $t \notin \A$;
%\item 
(c) $s\in \A$ and $t \notin \A$, $x_s = y_s$.
%\end{itemize}
%such that $s\in \A$ and $t \in \A$ or $s\notin \A$ and $t \notin \A$ we also have equality of pairwise terms: $f_{st}(p(x)_{st})-f_{st}(x_{st}) = g_{st}(p(x)_{st})-g_{st}(x_{st})$. 
It remains to verify the inequality for boundary pairs $s\in\A$, $t\notin \A$ in the case $x_s\neq y_s$. We have
\begin{equation}
\begin{aligned}
& f_{st}(x_{st}) - f_{st}(p(x)_{st})\\
%= & f_{st}(x_{st}) - f_{st}(y_s,x_t) \\
& \geq \min_{ij\in \L_{st}}\big(f_{st}(i,j) - f_{st}(p_s(i),p_t(j)) \big) = -\gamma\\
%\geq & \min_{ij}f_{st}(i,j) - \max_{ij}f_{st}(p_s(i),p_{t}(j)) = -\gamma\\
%= & g_{st}(x_s,x_t) - g_{st}(y_s,x_t)\\
& = g_{st}(x_{st}) - g_{st}(p(x)_{st}).
\end{aligned}
\end{equation}
It follows that~\eqref{Kovtum-L-aux} holds. % $(\forall \mu\in\Lambda)$ $\<f-g,(I-P)\mu\> \geq 0$.
The second step is to show that $P\in\WIg$. %Since by assumption the relaxation for $g$ is tight and unique, mapping $p$ sends all labelings to $y$, which is optimal for $\g$ and at the same time optimal to the relaxation. 
By assumption, we have $\delta(y) \in \argmin_{\mu\in\Lambda}\<g,\mu\>$. 
Given a labeling $x$, mapping $p$ replaces part over $\A$ to the optimal labeling $y$.
% and outside $\A$ %it replaces edges of zero wiegt with edges of zero weight.
It follows that $(\forall \mu\in\Lambda)$\ \ $\<g,P\mu\> = \<g,\delta(y)\> \leq \<g,\mu\>$. 
Combined with~\eqref{Kovtum-L-aux}, we obtain $P\in\WI$.
\qed
\par\noindent
%\todo{y is not defined outside $\A$, the minimum is not unique outside $\A$}
\end{proof}

%, such that
%\begin{itemize}
%\item $y$ minimizes $\g$.
%\item $$
%\end{itemize}
%
 %that minimizes the auxiliary functional $\g$ tha
%constructs an auxiliary functional $\g$ with .
%
\section{Characterization}\label{S:char-s}
We introduced component-wise sufficient conditions~\eqref{eq:proj-sufficient} and observed while considering different methods that it was often possible to find a reparametrization of the problem such that these conditions hold. This is not a coincidence.
\begin{theorem}
Let $P=[p]$, $p$ idempotent and $P \in \WI$. Then exists $\varphi$ such that
\begin{equation}
P\T f^{\varphi} \leq f^{\varphi}.
\end{equation}
\end{theorem}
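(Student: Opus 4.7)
The plan is to realize the desired $\varphi$ by LP duality. The inequality $P\T f^\varphi \leq f^\varphi$ is a system of linear inequalities in $\varphi$: one per $(s,i)$ with $p_s(i)\neq i$ and one per $(st,ij)$ with $p(ij)\neq ij$. Put in $\leq$-form, the pairwise inequality reads
\begin{equation*}
[\varphi_{st}(i)-\varphi_{st}(p_s(i))]+[\varphi_{ts}(j)-\varphi_{ts}(p_t(j))]\leq f_{st}(ij)-f_{st}(p(ij)),
\end{equation*}
and analogously for the unary terms. By Farkas' lemma, this system is feasible in $\varphi$ iff every $\lambda\geq 0$ indexed on non-fixed labels/pairs whose weighted combination of the LHS has zero coefficient on each $\varphi_{st}(i_0)$ satisfies $\<(I-P\T)f,\lambda\> \geq 0$.

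Let $N_s=\{i : p_s(i)\neq i\}$, $Y_s=\L_s\setminus N_s$, and $\lambda^s_{st}(i)=\sum_j \lambda_{st}(i,j)$. A direct calculation shows the coefficient of $\varphi_{st}(i_0)$ equals $((I-P_s)(\lambda^s_{st}-\lambda_s))(i_0)$, whose annihilation is equivalent to the partial marginalizations $\lambda^s_{st}(i)=\lambda_s(i)$ for all $i\in N_s$ (at every edge $st$ incident to $s$) plus the symmetric identities at $t$. The theorem thus reduces to establishing $\<(I-P\T)f,\lambda\>\geq 0$ for every nonneg $\lambda$ obeying these partial marginalizations, whereas the hypothesis $P\in\WI$ directly provides this inequality only on the narrower cone $\{\mu\geq 0 : A\mu=0\}$.

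The bridge is a completion lemma: any such $\lambda$ extends to a $\mu\geq 0$ with $A\mu=0$ and $(I-P)\mu=(I-P)\lambda$, which suffices since then $\<(I-P\T)f,\lambda\>=\<(I-P\T)f,\mu\>\geq 0$ by $P\in\WI$. I would construct $\mu$ by setting $\mu=\lambda$ on all non-fixed coordinates; the idempotency of $p$ then forces $(I-P)\mu=(I-P)\lambda$ for free. The marginals at non-fixed unary labels already hold via the Farkas conditions. At a fixed label $i\in Y_s$, $\mu_s(i)$ is a free nonneg variable but must satisfy $\mu_s(i)=\sum_{j\in N_t}\lambda_{st}(i,j)+\sum_{j\in Y_t}\mu_{st}(i,j)$ for every neighbor $t$. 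Taking $\mu_0$ large enough that each $\mu_s(i)\geq\max_{t\in\N(s)}\sum_{j\in N_t}\lambda_{st}(i,j)$, the residual choice of $\mu_{st}(i,j)\geq 0$ on $Y_s\times Y_t$ becomes a nonneg transportation problem whose row-total matches its column-total precisely because of the Farkas identities $\lambda^s_{st}=\lambda_s$ on $N_s$ (a short bookkeeping check).

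The main obstacle is the completion lemma, particularly the row/column consistency of the per-edge transportation problems, where the pixel-wise structure of $P=[p]$ and the idempotency of $p$ interact with the marginal structure of $A$. Once $\mu$ is constructed, Farkas' criterion is verified and the desired reparametrization $\varphi$ exists.
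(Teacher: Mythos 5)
Your proposal is correct, and strategically it is the mirror image of the paper's proof rather than a different theorem: both arguments come down to showing that the nonnegativity $\<(I-P\T)f,\mu\>\geq 0$, which $P\in\WI$ guarantees on the cone $\{\mu\geq 0:\ A\mu=0\}$, persists on the larger cone $\{\mu\geq 0:\ A(I-P)\mu=0\}$, and then invoking LP duality/Farkas to extract $\varphi$. Your Farkas condition (multipliers $\lambda\geq 0$ killing all coefficients of $\varphi_{st}(i_0)$, i.e. $A(I-P)\lambda=0$, reduced by idempotency to partial marginalizations at non-fixed labels) is exactly the dual-feasibility statement the paper reaches as step (d) of its chain~\eqref{proj2-chain}; the paper just works primal-first (steps (b),(c)) and dualizes at the end, while you dualize first and then must verify the Farkas condition. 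Where the two proofs genuinely differ is the execution of the bridging step. The paper decomposes $\mu=P\mu+(I-P)\mu$ (a splitting into $\Null(I-P)$ and $\Null(P)$ by idempotency) and replaces the $P\mu$ part by $P\mu_1'$, where $\mu_1'$ is a large multiple of the uniform vector, so that $P\mu_1'\geq P\mu$, $AP\mu_1'=0$, $(I-P)P\mu_1'=0$; adding back $(I-P)\mu$ gives a point of the original cone with unchanged objective, with no case analysis. Your completion lemma does the same job coordinate-wise: keep $\lambda$ on the non-fixed coordinates (which, by idempotency, is all that $(I-P)\lambda$ and the objective see) and fill in the fixed blocks by per-edge transportation problems; your ``bookkeeping check'' indeed closes, since both the row total and the column total of the residual block equal $\mu_0$ minus the total non-fixed mass of $\lambda_{st}$, by the partial marginalization identities. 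What each buys: the paper's uniform padding is shorter and avoids per-edge feasibility arguments; yours makes explicit exactly which constraints of $\Lambda$ survive under $A(I-P)\lambda=0$ and exposes the transportation structure. Two points you should state explicitly to make the argument airtight: (i) $\WI$ is defined over $\Lambda$ (where $\mu_0=1$), so you need $\coni(\Lambda)=\{\mu\geq 0:\ A\mu=0\}$ to apply the hypothesis on your cone (the paper records this in \Statement{S:WI-dual}; it holds because $\mu_0=0$ forces $\mu=0$); and (ii) each fixed-point set $Y_s$ is nonempty --- guaranteed precisely by idempotency of $p_s$ --- otherwise there would be nowhere to park the slack mass when you take $\mu_0$ large.
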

\begin{proof} The proof uses a representation of the verification problem~\eqref{L-LP} introduced in Section{sec:properties}. Let $g = (I-P\T)f$. The steps of the proof are given by the following chain:
\begin{equation}\label{proj2-chain}
\begin{array}{ll}
%& 0 \stackrel{\rm (a)}{=} 
& \min\limits_{\begin{subarray}{l} A \mu = 0 \\ \mu\geq 0\end{subarray}} \<f-P\T f,\mu \>
\stackrel{\rm (b)}{=} 
\min\limits_{\begin{subarray}{l} A P \mu =0 \\ A(I-P)\mu = 0\\ \mu\geq 0 \\ \ \end{subarray}} \<f-P\T f,\mu \> \\
& \stackrel{\rm (c)}{=} 
\min\limits_{\begin{subarray}{l} \\ A(I-P)\mu = 0\\ \mu\geq 0\\ \ \end{subarray}} \<f-P\T f,\mu\>
\stackrel{\rm (d)}{=} \max\csub{\varphi \\ (I-P\T)(f-A\T \varphi) \geq 0 } 0\,.
\end{array}
\end{equation}
On the LHS we have problem~\eqref{L-improving-coni} which is bounded because $P\in\WI$. The value of the problem in this case equals zero. Under conditions of the theorem, equalities (b), (c) essentially claims boundedness of the other two minimization problems in the chain. Equality~(b) is verified as follows. Inequality $\leq$ holds because $A\mu - A P \mu = 0$ and $A P \mu = 0$ implies $A\mu = 0$. On the other hand, $P$ preserves all constraints of $\Lambda$ and therefore $A\mu=0$ $\Rightarrow$ $A P \mu = 0$.
\par
Equality~(c) is the key step. We removed one constraint, therefore $\geq$ trivially holds. Let us prove $\leq$. Let $\mu$ be feasible to RHS of equality (c). Let $\mu=\mu_1 + \mu_2$, where
\begin{equation}
\begin{aligned}
\mu_1 = P \mu\,;\ \ \ \ \mu_2 = (I-P) \mu\,.
\end{aligned}
\end{equation}
There holds
\begin{equation}
\begin{aligned}
(I-P)\mu_1 & = (I-P)P \mu  = (P-P^2) \mu = 0\,,\\
P \mu_2 & = P (I-P) \mu = 0\,,
\end{aligned}
\end{equation}
\ie, $\mu_1 \in \Null(I-P)$ and $\mu_2 \in \Null(P)$. Let us construct $\mu_1'$ as follows. Let $\gamma = $
\begin{align}
\notag
 &\max\big\{ \max_{st, ij} |\L_{st}|(\mu_{1})_{st}\ind{\ij}, \max_{s,i} |\L_s| (\mu_1)_s(i), (\mu_1)_0 \big\}\,,\\
%\notag
& (\mu_1')_{st} = \gamma / |\L_{st}|\,,\\
\notag
& (\mu_1')_{s} = \gamma / |\L_{s}|\,,\\
\notag
& (\mu_1')_0 = \gamma\,.
\end{align}
By construction,
\begin{equation}
(\mu_1') \geq \mu_1 \tab\tab \AND \tab\tab A \mu_1' = 0\,.
\end{equation}
Let $\mu_1'' = P \mu_1'$. Because $P \geq 0$, we have
\begin{equation}
\mu_1'' = P \mu_1' \geq P \mu_1 = \mu_1\,.
\end{equation}
It also follows that $A P \mu_1'' = A P P \mu_1' = A P \mu_1' = 0$ and $(I-P) \mu_1'' = (I-P)P \mu_1' = 0$.
Let $\mu^* = \mu_{1}''+\mu_{2}$. It preserves the objective,
%\begin{equation}
\begin{align}
\<f-P\T f,\mu^*\> &= \<f,(I-P)(\mu_{1}''+\mu_{2}) \>  \\
\notag
&= \<f,(I-P)\mu_{2}\> = \<f,(I-P)\mu\>\,.
\end{align}
%\end{equation}
We also have that 
\begin{equation}
\begin{aligned}
\mu^* = \mu_{1}''+\mu_{2} \geq \mu_1 + \mu_{2} = \mu \geq 0\,,\\
A(I-P)\mu^* = A(I-P)\mu_2 = A(I-P)\mu = 0\,,\\
A P\mu^* = A P \mu_1'' = 0\,.
\end{aligned}
\end{equation}
Therefore, $\mu^*$ satisfies all constraints of the LHS of equality (c).
\par
Equality~(d) is the duality relation that asserts that the maximization problem on the RHS is feasible,
which is the case \iff
\begin{equation}
(\exists g \equiv f) \tab (I-P\T) g \geq 0\,.
\end{equation}
\qed
\end{proof}
%The proof is given in~\cite{suppl}.
\section{Maximum Improving Mapping}
Having a more powerful sufficient condition, which can be verified in polynomial time, how do we find a map that satisfies it? How do we find the map that delivers {\em the largest} partial optimal assignment, or, equivalently, eliminates the maximum number of labels as non-optimal? %We will show that by addressing this problem directly we for some types of maps covering nearly all types that appeared in the previous section.
%that can be verified in polynomial time and observing that many methods use weaker conditions,
%Having a condition that can be verified in polynomial time and observing that many methods use weaker conditions, how do we improve over them?
%constrained sets of maps that include all maps constructed by 
%Our goal is to find an improving pixel-wise mapping that eliminates the maximum number of labels (provides the largest optimal partial assignment). 
%Let us maximize the number of assigned variables over all relaxed-improving maps.
Recall that the label $(s,i)$ is eliminated by pixel-wise mapping $p$ if $\leftbb p_s(i){\neq}i \rightbb$. We therefore formulate the following {\em maximum persistency} problem:
%Our goal is to solve the following problem
%\begin{equation}\label{best improving}
%\max_{p} \sum_{s,i} \leftbb p_s(i){\neq i}\rightbb \tab \mbox{s.t.: $p$ is improving for $f$.}
%\end{equation}
%That is, 
%Problem~\eqref{best improving} 
%Since even the verification of feasibility is NP-hard, we .
%
%\par
%~\eqref{best improving}
%Instead this difficult problem, we will address the relaxed problem
%Using the proposed relaxation of the improving property 
%We plug in the proposed relaxation of the improving property to obtain the relaxed problem
\begin{equation}\label{best L-improving}
\tag{{\sc max-wi}}
\max_{p} \sum_{s,i} \leftbb p_s(i){\neq i}\rightbb \tab \mbox{s.t. $[p]\in\WI$.}\quad
\end{equation}
%denoted {\sc max-wi} and 
The strict variant, with constraint $[p]\in\SI$, will be denoted {\bf\sc max-si}.
The problem may look difficult, however, we will be able to solve it in polynomial time for some types of maps covering nearly all types that appeared in the previous section:
 %Let us denote these problems as {\sc max-wi} and {\sc max-si}, respectively.
%Given that verification of $\Lambda$-improving is polynomial, can we solve these problems in polynomial time?
%Given that verification of constraints is polynomial, can we solve these problems in polynomial time?
%the max $\Lambda$-improving problem~\eqref{best L-improving} in polynomial time?
%
%
%
%We consider the following classes of pixel-wise maps.
\begin{itemize}
%\item $K=2$ arbitrary maps.
%\item $K=3$ arbitrary maps.
\item {\em all-to-one maps}. Set $\P^{1,y}$ of maps of the form $p \colon x \mapsto x [\A \leftarrow y]$ for all $\A\subset\V$ and fixed $y\in \LL$. %Let us denote this set .%
\item {\em subset-to-one maps}. %Let $y\in\LL$ be a fixed labeling. 
Let $V = \{(s,i)\mid s\in\V, i\in \L_s\}$. Let $\xi \in\{0,1\}^V$. Mapping $p_\xi$ in every pixel either preserves the label or switches it to $y_s$:\\[-10pt]
\begin{equation}
p_\xi(x)_s = 
\begin{cases}
y_s &\ \IF\ \xi_{s x_s} = 1,\\
x_s &\ \OTHERWISE.
\end{cases}
\end{equation}
Vector $(\xi_{si}\mid i\in\L_s)$ serves as the indicator of a subset of labels in pixel $s$ that are mapped to $y_s$. The set $\P^{2,y}$ of all such maps is considered.
% and maps for arbitrary $\xi$ are considered. % . Set of maps for all $\xi$ is considered.
%
%\item {\em all-to-one-unknown}. Maps of the type $p \colon x \mapsto x [\A \leftarrow y]$ for all $\A\subset\V$ and all $y\in \LL$.
\item {\em all-to-one-unknown} maps. Set $\P^1 = \bigcup_{y\in\LL} \P^{1,y}$. %Same as all-to-one maps but for all $y\in \LL$.
 %Same as subset-to-one maps but for all $y\in \LL$.
\end{itemize}
Additionally, we define {\em subset-to-one-unknown} maps as the set $\P^2 = \bigcup_{y\in\LL} \P^{2,y}$. This set is considered merely to draw the boundary between solvable and unsolvable cases of maximum persistency problem. All complexity results are summarized in Table~\ref{table2}. %the following table. % and detailed in the next section.
%
%
%
%Our results can be arranged in the following table, where it can be seen how cl.
%The following table shows the boundary that we were able to draw between the solvable and unsolvable cases of maximum persistency problem.
%\begin{table}
%
%
\begin{table}
\begin{center}
\small
\setlength{\tabcolsep}{3pt}
\begin{tabular}{|c|c|c|}
\hline
problem type & {\sc max-si} & {\sc max-wi}\\
\hline
$K=2$ & P (QPBO)  & P (QPBO) \\
$K = 3$ & ? & NP-hard \\
$K>3$ & NP-hard & NP-hard \\
%all-to-one
$\P^{1,y}$ & P ($\varepsilon$-L1) & P (L1) \\
%subset-to-one
$\P^{2,y}$ & P ($\varepsilon$-L1) & P (L1) \\
%all-to-one unknown 
$\P^1$ & P (nec. cond. + $\varepsilon$-L1) & NP-hard \\
%subset-to-one unknown 
$\P^2$ & NP-hard & NP-hard \\
\hline
\end{tabular}
\end{center}\caption{Complexity of maximum persistency problem. Notation $K=2$ means the class of problems with 2 labels and arbitrary maps. In brackets we denote the respective polynomial method, see~\S\ref{sec:best mapping LP}.}\label{table2}
\end{table}
We see that as soon as $K>3$ the problem with unconstrained maps becomes intractable. %Since subset-to-one class is tractable then so is all-to-one.
We also see that the complexity jumps with the number of possible destinations for each label increased. Note, in case of all-to-one-unknown maps the difference between strict and weak conditions results in a different complexity class!
%\end{table}
%
\section{Algorithms}\label{sec:best mapping LP}
\paragraph{Case $K=2$}
For the case of two labels ($K\,{=}\,2$), problem {\sc max-si} (resp. {\sc max-wi}) can be solved by finding solution to~\eqref{LP} with the minimum (resp. maximum) number of integer components. %the maximum strictly $\Lambda$-improving mapping is found 
%by strong persistency (QPBO). %Finding the maximum weak $\Lambda$-improving mapping
This corresponds to finding specific cuts in the network flow model~\cite{Boros:TR91-maxflow},~\cite[\S 2.3]{Kolmogorov-Rother-07-QBPO-pami}. Finding the relaxed solution with the maximum number of integer components was proven polynomial by Picard and Queyranne~\cite{Picard-77} in the context of vertex packing problem. %We give more detail in~\cite{suppl}.
%As was discussed above, the solution to {\sc max-wi} problem corresponds to finding the solution to~\eqref{LP} that has the maximum number of integer components. We show that this maximization is tractable because of the following property, 
%which extends the result of~Picard and Queyranne~\cite{Picard-77} (originally for weighted vertex packing) 
We extend this proof to general quadratic pseudo-Boolean functions. % on finding the solution to~\eqref{LP} with the maximum number of integer components, that would correspond to {\sc max-wi} problem.
\begin{statement}Let $\mu^1,\mu^2$ be two solutions to~\eqref{LP}. Let us denote sets where these solutions are integral as $U = \{s\in\V\mid \mu^1_s(i)\in\Bool \}$ and $V = \{s\in\V\mid \mu^2_s(i)\in\Bool \}$. Let $x^1$ and $x^2$ be corresponding partial labelings. Then there exists a solution $\mu$ such that its integral part $\A = \{s\in\V\mid \mu_s(i)\in\Bool \}$ is the union $U \cup V$.
\end{statement}
\begin{proof}
We construct $\mu$ as follows 
\begin{align}
& \mu_s = \begin{cases}
\mu^1_s, & s\in U,\\
\mu^2_s, & s\notin U;
\end{cases}
& \mu_{st} = \begin{cases}
\mu^1_{st}, & s\in U,\ t\in U,\\
\mu^2_{st}, & s\notin U,\ t\notin U,\\
\mu^1_{s}(\mu^2_{t})\T, & s\in U,\ t\notin U,\\
\mu^2_{s}(\mu^1_{t})\T, & s\notin U,\ t\in U.
\end{cases}
\end{align}
First, we check that $\mu$ is feasible. We use feasibility of $\mu^1, \mu^2$ and verify that $1\T \mu^1_s (\mu^2_s)\T = (\mu^2)\T$.
\par
Let us now show that $\mu$ is optimal. Let $\varphi$ be relative interior dual solution. By complementarity slackness with $\mu^1$ and $\mu^2$ it must be that $f^{\varphi}_{s}(i) = 0$ whenever $\mu^1_{s}(i)>0$ or $\mu^2_{s}(i)>0$ and the same holds for pairwise terms. We need to care only about the stitching, the pairwise terms in the case $s\in\U$, $t\notin\U$.
Let $O^1_s = \{i \mid \mu^1_{s}(i)>0\}$. Since $|O^1_s| = 1$ and $|O^1_t| = 2$, by feasibility of $\mu$ we have that $\mu_{st}^1(x^1_s,0)>0$ and $\mu_{st}^1(x^1_s,1)>0$. By complementarity, $f^{\varphi}_{st}(x^1_s,0) = f^{\varphi}_{st}(x^1_s,1) = 0$. By construction, $\mu_{st}(1-x^1_s,\cdot) = 0$ and we have that for any $\mu^2_t$ the product $\mu^1_{s}(\mu^2_{t})\T$ satisfies complementarity with $f^{\varphi}$. The remaining case $s\notin\U$, $t\in\U$ is symmetric. Therefore $\mu$ is optimal.
\qed
\end{proof}
It follows that the maximum can be found in polynomial time by trying to fix a variable and check whether there is a feasible solution with such fixation. This is trivial but inefficient. It can be done efficiently by analyzing connected components in the network flow model~\cite[\S 2.3]{Kolmogorov-Rother-07-QBPO-pami}.
\paragraph{Case $K\geq 3$}
To show that for $K\geq 3$ problem {\sc max-wi} is NP-hard %. %finding a (possibly non-unique) maximal weak $\Lambda$-improving mapping is NP-hard.
we notice that% Indeed,
~\eqref{LP} is tight iff there exists $y\in\L$ such that mapping $p\colon \L \mapsto y$ is relaxed-improving. Clearly, this mapping is a (non-unique) solution to {\sc max-wi}. Verifying tightness of~\eqref{LP} is a pairwise constraint satisfaction problem which is NP-hard for $K\geq 3$.
\par
%Since a polynomial algorithm is not possible in a general case, we will proceed by narrowing the class of mappings.
%\par
% and generalize the proof of~Picard and Queyranne~\cite{Picard-77} %(originally for weighted vertex packing) 
%that any two solutions with integer variables for sets $I_1,I_2\subset \V$ can be combined into a solution with integer variables for the set $I_1\cup I_2$.
%
%\subsection{Dual Formulation}
%
%In this section we will derive some properties of the problem that will enable proving the stated complexities with the main result that for subset-to-one maps the {\sc max-wi/si} problem can be reduced to just a single linear program.
\subsection{General Properties}\label{sec:properties}
We will now derive some properties of {\sc max-wi/si} problem that will enable our main result -- reduction to a single linear program for subset-to-one maps.
%\par
%The problem {\sc max-wi/si} will be gradually reformulated in terms of linear extension $P = [p]$ only.
The problem will be gradually reformulated in terms of linear extension $P = [p]$ only.
The constraint $P \in \WI$ is complicating because set $\WI$ is defined with quantifier $(\forall x\in \Lambda)$, see~\eqref{L-improving-conj-a}.
%The definition~\eqref{L-improving-conj-a} of $\WI$ has universal %$\forall \mu\in\Lambda$ quantifier, which complicates the problem {\sc max-wi}.
%
However, since $\Lambda$ is polyhedral, this set can be reformulated as a projection of a higher-dimensional polytope: %, which make the problem~\eqref{\sc max-wi} appear complicated. %of~\eqref{\sc max-wi} are given  
%Next we give reformulations of sets \WI  and $\SI$ that are more suitable when considering optimization problem~\eqref{best L-improving}. %optimization over $P$. % subject to $P \in \WI$. 
%It gives set \WI as a projection of a polyhedron.
\begin{statement}[Dual $\mathbb{W}$]\label{S:WI-dual} Set $\WI$ can be expressed as
% Let $P\colon \Real^\I \to \Real^\I$. Then $P\in\WI$ iff
\begin{equation}\label{WI-dual}
%(\exists \varphi\in \Real^m)\tab f^{\varphi} - P\T f \geq 0.
%\{P \mid (\exists \varphi\in \Real^m)\ \ (I-P\T)f - A\T \varphi \geq 0 \}.
%\WI = 
\{P\colon\Real^\I\to\Real^\I \mid (\exists \varphi\in \Real^m)\ \ f^{\varphi} - P\T f \geq 0 \}.
\end{equation}
\end{statement}
\begin{proof}
%\par\noindent\todo{move to properties}
%Consider the inequalities defining $\Lambda$-improving mapping~\eqref{L-improving-conj-a} and 
Denote $g = (I-P\T)f$. Condition~\eqref{L-LP}, equivalent to~\eqref{L-improving-conj-a}, can be stated for the conic hull of $\Lambda$: %because for any $\mu \in \Lambda$, and any $\alpha>0$ vector $\alpha \mu$ will satisfy them as well. 
%Therefore,~\eqref{L-improving-conj-a} is equivalent to
\begin{equation}\label{L-improving-coni}
\inf_{\mu\in\coni(\Lambda)} \<g, \mu \> \geq 0.
\end{equation}
This is because for any $\mu \in \Lambda$ and any $\alpha \geq 0$ vector $\alpha \mu$ will satisfy RHS of~\eqref{L-improving-conj-a} as well.
%Since on the left we have a linear program with polynomial number of constraints, the inequality can be verified in polynomial time. 
Observe that $\coni (\Lambda) = \{\mu \mid A\mu = 0,\ \mu\geq 0\}$ (in the specific representation of the polytope we used we just have to drop the constraint $\mu_0=1$). %and letting $g=(I-P\T)f$, 
We can write minimization problem in~\eqref{L-improving-coni} and its dual as\\[-5pt]
\begin{equation}\label{LP'}
\begin{array}{rclr}
%\min\limits_{\mu\in\Lambda} \<f,\mu\> = \hskip-0.3cm
& \inf \<g,\mu\> &\ \ %\hskip-0.5cm 
& \max 0\,.\\
&
\setlength{\arraycolsep}{0.2em}
\begin{array}{rl}
A \mu & = 0 \\
\mu & \geq 0
\end{array}
& &
\setlength{\arraycolsep}{0.2em}
\begin{array}{rl}
\varphi & \in \Real^m\\
g - A\T \varphi & \geq 0
%\ & \ \\
\end{array}
\end{array}
%\tag{LP'}
\end{equation}
%Note, the primal set $\coni (\Lambda)$ does not have the normalization constraint and its respective multipliers $\psi$ are absent in the dual. 
Inequality~\eqref{L-improving-coni} holds iff the primal problem is bounded, and it is bounded iff the dual is feasible, which is the case iff $(\exists\varphi\in\Real^m)$ $(f-A\T\varphi)-P\T f \geq 0$.%gives the stated result. 
%We obtained that
%\begin{equation}
%\WI = \{P \mid (\exists \varphi\in \Real^m)\ \ (I-P\T)f - A\T \varphi \geq 0 \}.
%\end{equation}
\qed
\end{proof}
With this reformulation we can write {\sc max-wi} as
\begin{equation}
\max_{p,\varphi} \sum_{s,i} \leftbb p_s(i){\neq i}\rightbb \tab \mbox{s.t.: } (I-[p]\T)f -A\T \varphi \geq 0.
\end{equation}
Notice, quantifier $(\exists \varphi)$ turned into an extra minimization variable.
% the constraint expresses as $(f-A\T\varphi)-P\T f \geq 0$.
To handle the strict case, we would need a similar dual reformulation for the set $\SI$. This set has a more complicated quantifier $(\forall \mu \in \Lambda, P \mu \neq \mu)$. Fortunately, the following reformulation holds for pixel-wise maps:
\begin{statement}[Dual $\mathbb{S}$]\label{S:SI-dual} Let $p\colon \L\to \L$ be pixel-wise. Then
 $[p] \in \SI$ iff $(\exists \epsilon>0)$ $(\exists \varphi\in \Real^m)$
\begin{subequations}\label{SI-dual}
\begin{align}
\label{SI-dual-a}
(\forall s,\,\forall i)\tab& f^{\varphi}_s(i) - f_s(p_s(i)) \geq \epsilon\leftbb p_s(i){\neq}i\rightbb,\\
\label{SI-dual-b}
(\forall st,\,\forall ij)\tab & f^{\varphi}_{st}(i,j) - f_{st}(p_s(i),p_t(j)) \geq 0.
%\{P \mid (\exists \varphi\in \Real^m)\ \ (I-P\T)f - A\T \varphi \geq 0 \}.
\end{align}
\end{subequations}
\end{statement}
\begin{proof}
%Let $[p]\in\SI$. 
Let $h\in\Real^\I$ with components $h_{s}(i) = \leftbb p_s(i){\neq}i\rightbb$, $h_{st}(i,j)=0$. For $\mu\in\Lambda$ there holds $\<h,\mu\> = 0$ iff $[p]\mu=\mu$. Conditions~\eqref{L-improving-conj-b} are equivalent to 
\begin{equation}\label{S-L-h}
(\forall \mu\in\Lambda) \tab \<(I-[p]\T)f,\mu\> \geq \epsilon \<h,\mu\>
\end{equation}
for some $\epsilon>0$.
%has properties of norm on $\Lambda$.
%Function $\<h,\cdot\>$ has properties of norm on $\Lambda$. 
%Inequalities~\eqref{L-improving-conj-b} hold iff
%\begin{equation}\label{S-L-h}
%(\exists\epsilon>0)\ (\forall \mu\in\Lambda) \tab \<(I-[p]\T)f,\mu\> \geq \epsilon \<h,\mu\>.
%\end{equation}
%Indeed, assume for contradiction that there holds~\eqref{L-improving-conj-b} and negation of~\eqref{S-L-h}: 
%\begin{equation}
%(\forall \epsilon>0)\ (\exists \mu\in\Lambda) \tab \<(I-[p]\T)f,\mu\> < \epsilon \<h,\mu\>.
%\end{equation}
%%Let $\epsilon=1$. 
%Then there exists $\mu'\in\Lambda$ such that $\<(I-[p]\T)f,\mu'\> \leq 0$. It must be $P\mu'=\mu'$, otherwise contradiction with~\eqref{L-improving-conj-b}. However, in this case $\<h,\mu'\>=0$ and~\eqref{S-L-h} holds. 
%
%If $P\mu\neq \mu$ then $\<h,\mu\> > 0$
We apply now the same inference as in \Statement{S:WI-dual} for vector $g = f-P\T f - \epsilon h$. It follows that~\eqref{S-L-h} is equivalent to $(\exists \varphi\in\Real^m)$ $(f-A\T\varphi)-P\T f - \epsilon h \geq 0$. %$[p]\in\SI$ iff $\exists \eps>0$
%It is clear that~\eqref{SI-dual} is sufficient for $p\in\SI$, \cf~\Statement{S:sufficient}. 
%Let $p\in\SI$. Then $\exists \epsilon>0$ such that for all $\mu$ $[p]\mu \neq \mu$ holds $\<(I-P\T)f,\mu\> \geq \epsilon$. 
%Then $p\in\WI$ and therefore $(\exists \varphi\in \Real^m)$ $f^{\varphi} - P\T f \geq 0$. It remains to prove strict inequalities~\eqref{SI-dual-a} for the case $p_s(i)\neq i$.
%All implications in this proof are .
\qed
\end{proof}
Additionally, the following lemma provides necessary conditions for sets $\WI$, $\SI$. It will help to narrow down the set of maps over which the optimization is carried out. %for $P\in\WI$, which will allow us .
% $\exists \varphi\in\Real^m$ and so we can 
\begin{lemma}[Necessary Conditions]\label{necessary-LI} Let $P\colon \Real^\I \to \Real^\I$, $P(\Lambda)\subset\Lambda$ and
$\O = \argmin_{\mu\in\Lambda}\<f,\mu\>$. Then
\begin{enumerate}
\item[(a)]$P\in\WI$ $\Rightarrow$ $P(\O) \subset \O$.
\item[(b)]$P\in\SI$ $\Rightarrow$ $(\forall \mu\in\O)$ $P(\mu) = \mu$.
\end{enumerate}
\end{lemma}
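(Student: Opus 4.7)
The plan is to prove both parts by a short ``sandwich'' argument that combines the defining inequality of $\WI$ (resp.\ $\SI$) with optimality of $\mu$ over $\Lambda$.

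For part (a), I would take any $\mu \in \O$, so that $\langle f,\mu\rangle = \min_{\nu \in \Lambda}\langle f,\nu\rangle$. The assumption $P(\Lambda)\subset\Lambda$ gives $P\mu \in \Lambda$, hence by optimality $\langle f, P\mu\rangle \geq \langle f,\mu\rangle$. On the other hand, since $P\in\WI$ and $\mu\in\Lambda$, we have $\langle f, P\mu\rangle \leq \langle f,\mu\rangle$ by~\eqref{L-improving-conj-a}. The two inequalities force equality, so $P\mu \in \O$, giving $P(\O)\subset \O$.

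For part (b), I would argue by contradiction. Suppose $\mu\in\O$ but $P\mu\neq \mu$. Since $P\mu\in\Lambda$ and $P\in\SI$, the strict inequality~\eqref{L-improving-conj-b} yields $\langle f,P\mu\rangle < \langle f,\mu\rangle$, contradicting optimality of $\mu$ over $\Lambda$. Therefore every $\mu\in\O$ must satisfy $P\mu=\mu$.

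Neither step poses a real obstacle: both are immediate consequences of combining the definitions of $\WI,\SI$ with the hypothesis $P(\Lambda)\subset\Lambda$. The only thing to be slightly careful about is invoking the hypothesis $P(\Lambda)\subset\Lambda$ explicitly so that $P\mu$ is a legitimate feasible point against which the optimality of $\mu\in\O$ can be compared; without this, the optimality half of the sandwich would not apply.
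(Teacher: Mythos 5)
Your proof is correct and is essentially the same argument as the paper's: the paper states both parts in contrapositive form (assuming $P\mu\in\Lambda\backslash\O$ or $P\mu\neq\mu$ and concluding $P\notin\WI$ or $P\notin\SI$), while you run the identical comparison of $\<f,P\mu\>$ against $\<f,\mu\>$ directly and by contradiction. The substance --- combining the defining inequality of $\WI$/$\SI$ with optimality of $\mu$ and the hypothesis $P(\Lambda)\subset\Lambda$ --- is the same in both.
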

\begin{proof}{\bf (a)\ }Assume $(\exists\mu\in\O)$ $P\mu \in\Lambda\backslash\O$. Then $\<f,P\mu\> > \<f,\mu\>$, therefore $P\notin\WI$.
%\par
{\bf (b)\ }Assume $(\exists\mu\in\O)$ $P\mu \neq \mu$. Then $\<f,P\mu\> \geq \<f,\mu\>$ and therefore $P\notin\SI$. 
\qed
\end{proof}

\par
%We will consider restricted classes of mappings. 
%Even if we restricted 
%
%Indeed, if~\eqref{LP} is tight, there exists optimal solution $\mu = \delta(y)$. In this case mapping $p\colon \L \mapsto y$ is maximal and $[p]$ belongs to $\WI$. However, verifying tightness of~\eqref{LP} is a constraint satisfaction problem that is NP-hard for $K\geq 3$.%$\exists y\in\$
%\todo{MAX3WI} is NP-hard. 
\par
%In this section we will restrict the class of mappings. Let $V=\{(s,i)\mid s\in \V, i\in\L_s \}$.
%We say that $p$ is {\em subset-to-one} mapping if it has the form
%\begin{equation}
%p_s(i) = \begin{cases}
%y_s, & \xi_{si} = 1\\
%i, & {\rm otherwise}
%\end{cases}
%\end{equation}
%for some $y\in\L$ and $\xi\in\{0,1\}^V$. Let us denote the set of such mappings as $\P$.
%
%
%
%\par
\subsection{Maximum Persistency by LP}
%\paragraph{Subset-to-One Maps}
%Let us consider a restricted class of mappings of the type {\em subset-to-one}.
%Let $y\in\LL$ be a fixed labeling. Let $V = \{(s,i)\mid s\in\V, i\in \L_s\}$. Let $\xi \in\{0,1\}^V$. Mapping $p_\xi$ in every pixel either preserves the label or switches it to $y_s$:
%%(assume $\xi_{sy_s}$ = 1, but it does not matter). 
%%Define mapping $p_s$
%\begin{equation}
%p_\xi(x)_s = %x_s+(y_s-x_s)\xi_{s x_s} = 
%\begin{cases}
%y_s &\ \IF\ \xi_{s x_s} = 1,\\
%x_s &\ \OTHERWISE.
%\end{cases}
%\end{equation}
%Vector $(\xi_{si}\mid i\in\L_s)$ serves as the indicator of a subset of labels in pixel $s$ that are mapped to $y_s$. %Let us denote 
%Let us consider the family of all such mappings for a fixed labeling $y$, 
%$\P = \{ p_\xi \mid \xi \in \{0,1\}^V \}$.
%The condition~\eqref{L-improving} that $P_\xi$ is improving can be equivalently written using the dual formulation as
%\begin{equation}
%\begin{aligned}
%L(\varphi,\xi) \geq 0.
%\end{aligned}
%\end{equation}
%We will solve the strict variant of~\eqref{best L-improving} using the dual formulation of $\mathbb{S}$,~\eqref{SI-dual}. By letting $\epsilon=0$ we will automatically obtain the solution to the weak variant of~\eqref{best L-improving}.
\par
Let us consider the class of maps $\P^{2,y}$, in which mapping $p_{\xi}$ is defined by the indicator variable $\xi\in \{0,1\}^V$.
We will first consider problem \maxwi. % using the dual formulation of $\mathbb{W}$. %In the problem~\eqref{best L-improving} that we want to solve, 
The constraint $[p_\xi]\in\WI$ in the dual form is still complicated by that $[p_\xi]$ defined by~\eqref{pixel-ext} involves products $\xi_{si}\xi_{tj}$. We are going to linearize these terms by introducing additional variables $\xi_{stij}$. %, but with no relaxation yet.
Let $\Sigma$ be set the of vectors $\xi$ with components $\xi_{si}$, $\xi_{stij}$ such that\\[-5pt]
\begin{equation}
\begin{aligned}
0 \leq &\xi_{si} \leq 1,\\
%0 \leq &\xi_{stij} \leq \xi_{si},\\
%\xi_{si}+\xi_{tj}-1 \leq &\xi_{stij} \leq \xi_{tj}.
\max(0,\xi_{si}+\xi_{tj}-1) \leq & \xi_{stij} \leq \min(\xi_{si},\xi_{tj}).
\end{aligned}
\tag{$\Sigma$}
\end{equation}
If $\xi\in\Sigma$ and all $\xi_{si}$ are integral, there holds $\xi_{stij} = \xi_{si}\xi_{tj}$. 
Set $\Sigma$ is convex, polyhedral.
%
%Let $\xi\in\Sigma$ be a feasible point of (L1).
%\par
\def\bigeqa{%
(P_\xi \mu)_s(i) = & \sum_{i'} P_{s,ii'} \mu_s(i'), \\
(P_\xi \mu)_{st}(i,j) = &\sum_{i',j'} P_{st,ii',jj'} \mu_{st}(i',j'),
}%
For $\xi\in\Sigma$ we introduce the following corresponding mapping $P_\xi$ by replacing products $\xi_{si}\xi_{tj}$ with $\xi_{stij}$ in~\eqref{pixel-ext}:\\[-12pt]
\begin{subequations}
\begin{align}
%&\begin{aligned}
%\end{aligned}\\
%\end{subequations}
%compare to linear extension~\eqref{pixel-ext}, 
%\notag
%\mbox{where} & \\
%\begin{subequations}
%&\begin{aligned}
%\label{PL-a}
\bigeqa
%\bigeqb
\end{align}
\end{subequations}\\[-25pt]%
\begin{subequations}
\begin{align}
\label{PL-a}
P_{s,ii'} = & \leftbb p_s(i'){=}i \rightbb  = \\
\notag
&\leftbb y_s{=}i \rightbb\xi_{si'}+ \leftbb i'{=}i \rightbb(1-\xi_{si'}),\ \  \ \ \ \  \ \
\end{align}
\begin{align}
\label{PL-b}
P_{st,ii',jj'}  = & \leftbb y_s{=}i \rightbb\leftbb y_t{=}j \rightbb\xi_{sti'j'}\\
\notag
+ &\leftbb i'{=}i \rightbb\leftbb y_t{=}j \rightbb (\xi_{tj'} - \xi_{sti'j'})\\
\notag
+ &\leftbb y_s{=}i \rightbb\leftbb j'{=}j \rightbb (\xi_{si'} - \xi_{sti'j'})\\
\notag
+ &\leftbb i'{=}i \rightbb\leftbb j'{=}j \rightbb (1-\xi_{si'}-\xi_{tj'}+\xi_{sti'j'}).
\end{align}
\end{subequations}
%
%\begin{align}
%\phantom{\bigeqa}
%\bigeqb
%P_{s,ii'} = & \leftbb p_s(i'){=}i \rightbb  = \\
%\notag
%&\leftbb y_s{=}i \rightbb\xi_{si'}+ \leftbb i'{=}i \rightbb(1-\xi_{si'}),\\
%\label{PL-b}
%P_{st,ii',jj'}  = & \leftbb y_s{=}i \rightbb\leftbb y_t{=}j \rightbb\xi_{sti'j'}\\
%\notag
%+ &\leftbb i'{=}i \rightbb\leftbb y_t{=}j \rightbb (\xi_{tj'} - \xi_{sti'j'})\\
%\notag
%+ &\leftbb y_s{=}i \rightbb\leftbb j'{=}j \rightbb (\xi_{si'} - \xi_{sti'j'})\\
%\notag
%+ &\leftbb i'{=}i \rightbb\leftbb j'{=}j \rightbb (1-\xi_{si'}-\xi_{si'}+\xi_{sti'j'}).
%%\end{aligned}
%\end{align}
%\end{subequations}
%

%For integer $\xi$ there holds $\max(0,\xi_{si}+\xi_{tj}-1) = \xi_{si}\xi_{tj} = \min(\xi_{si},\xi_{tj})$.

%$\xi_{si}+\xi_{tj}-\xi_{si}\xi_{tj} = \max(\xi_{si},\xi_{tj}) = \min(1,\xi_{si}+\xi_{tj})$. %Let $\L_{st}^{+} = \{ij \mid f_{st}(i,j) > 0 \}$ and $\L_{st}^{-} = \{ij \mid f_{st}(i,j)\leq 0 \}$.
%Making such substitution in~\eqref{proj simplified} and 
%Taking this in account, 
Mapping $P_\xi$ is linear in $\xi$ and for integer $\xi$ it coincides with $[p_\xi]$. We can now formulate \maxwi as %the maximum $\Lambda$-improving projection problem as 
the following mixed integer linear program:
\begin{subequations}\label{best L-improving ILP}
\begin{align}
\label{ILP}
\tag{IL1}
& \max\csub{\xi,\varphi} \sum_{s,i}\xi_{si} \\ 
\notag
& (I-P\T_\xi)f - A\T \varphi \geq 0\\
\notag
&\xi\in\Sigma;\ \xi_{si} \in\{0,1\}; \xi_{s y_s} = 0.
\end{align}
\end{subequations}

By relaxing the integrality constraints %$\xi_{si} \in [0,1]$ 
we obtain the linear program
\begin{subequations}
\begin{align}
\label{L1}
\tag{L1}
& \max\csub{\xi,\varphi} \sum_{s,i}\xi_{si} \\ 
\notag
& (I-P\T_\xi)f - A\T \varphi \geq 0\\
\notag
&\xi\in\Sigma;\ \xi_{si} \in [0,1]; \xi_{s y_s} = 0.
\end{align}
\end{subequations}
We will prove in \Theorem{T LP1 tight} that this relaxation is tight and then the program will be simplified by expanding the constraints and optimizing out variables $\xi_{stij}$.
%\footnote{Problem (L1) can be further simplified by expanding the constraints and optimizing out variables $\xi_{stij}$, this however would occlude the proof.}. 
We first need the following lemma.

\begin{lemma}\label{L is closed}
Polytope $\Lambda$ is closed under mapping $P_\xi$, $\xi\in\Sigma$. %, defined by~\eqref{P lifted}.%, \ie, $P_\xi(\Lambda) \subset \Lambda$.
\end{lemma}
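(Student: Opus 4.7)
The plan is to verify directly that each defining inequality/equation of $\Lambda$ is preserved under $P_\xi$. There are four things to check: $(P_\xi\mu)_0 = 1$, non-negativity of all components of $P_\xi\mu$, marginalization of pairwise onto unary, and marginalization of unary onto $\mu_0$.

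The first is immediate since $(P_\xi\mu)_0 = \mu_0 = 1$ by definition of $P_\xi$. For non-negativity, I would inspect the coefficients in~\eqref{PL-a}--\eqref{PL-b}: $P_{s,ii'}$ is a convex combination of $\xi_{si'}$ and $1-\xi_{si'}$, both of which lie in $[0,1]$ by $\xi\in\Sigma$. For $P_{st,ii',jj'}$, the four Iverson-indexed coefficients are $\xi_{sti'j'}$, $\xi_{tj'}-\xi_{sti'j'}$, $\xi_{si'}-\xi_{sti'j'}$ and $1-\xi_{si'}-\xi_{tj'}+\xi_{sti'j'}$; each is $\geq 0$ precisely because $\xi\in\Sigma$ enforces $\max(0,\xi_{si'}+\xi_{tj'}-1) \leq \xi_{sti'j'} \leq \min(\xi_{si'},\xi_{tj'})$. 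Since $\mu\geq 0$, non-negativity of $P_\xi\mu$ follows.

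The main work is the marginalization identity. I would compute $\sum_j P_{st,ii',jj'}$ by summing~\eqref{PL-b} over $j$. The indicator $\leftbb y_t{=}j\rightbb$ and $\leftbb j'{=}j\rightbb$ each sum to $1$, so after collecting the coefficients of $\leftbb y_s{=}i\rightbb$ and $\leftbb i'{=}i\rightbb$ one obtains
\begin{equation}
\sum_j P_{st,ii',jj'} = \leftbb y_s{=}i\rightbb\,\xi_{si'} + \leftbb i'{=}i\rightbb\,(1-\xi_{si'}) = P_{s,ii'}\,,
\end{equation}
i.e.\ the $\xi_{sti'j'}$ and $\xi_{tj'}$ terms telescope out. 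Combining with the pairwise-to-unary marginalization of $\mu$ gives
\begin{equation}
\sum_j (P_\xi\mu)_{st}(i,j) = \sum_{i'} P_{s,ii'} \sum_{j'} \mu_{st}(i',j') = \sum_{i'} P_{s,ii'}\mu_s(i') = (P_\xi\mu)_s(i).
\end{equation}
The symmetric calculation $\sum_i P_{st,ii',jj'} = P_{t,jj'}$ handles the other direction. Finally, summing~\eqref{PL-a} over $i$ yields $\sum_i P_{s,ii'} = \xi_{si'} + (1-\xi_{si'}) = 1$, so $\sum_i (P_\xi\mu)_s(i) = \sum_{i'}\mu_s(i') = \mu_0 = (P_\xi\mu)_0$.

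I do not expect any real obstacle: the definition of $\Sigma$ was engineered exactly so that $P_\xi$ behaves like a bonafide transition kernel (the four pairwise coefficients form a valid coupling of the two Bernoulli-like marginals with masses $\xi_{si'}$ and $\xi_{tj'}$). The only mildly tedious step is bookkeeping in the $\sum_j P_{st,ii',jj'}$ computation, where one must correctly identify the telescoping of the $\xi_{sti'j'}$ terms.
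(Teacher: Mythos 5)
Your proof is correct and follows essentially the same route as the paper's: both verify each defining constraint of $\Lambda$ directly, using non-negativity of the coefficients in~\eqref{PL-a}--\eqref{PL-b} (forced by $\Sigma$), the column-sum identity $\sum_i P_{s,ii'}=1$, and the marginal consistency $\sum_j P_{st,ii',jj'} = P_{s,ii'}$ (the paper states the symmetric identity $\sum_i P_{st,ii',jj'} = P_{t,jj'}$). Your write-up merely spells out the telescoping computation that the paper leaves implicit.
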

\begin{proof}
We verify that $(\forall \mu\in\Lambda)$ $P_\xi \mu \in \Lambda$. Denote $\mu' = P_\xi \mu$.
By constraints of $\Sigma$, all numbers~\eqref{PL-a},~\eqref{PL-b} are non-negative, therefore $\mu'\geq 0$. Constraints $1\T \mu'_s = 1$ hold due to $1\T P_s = 1$. Constraints $1\T \mu'_{st} = (\mu'_{t})\T$ hold due to $\sum_{ii'}P_{st,ii',jj'} = P_{t,jj'}$. \qed
\end{proof}
\begin{theorem}\label{T LP1 tight}
%Let $(\xi,\varphi)$ be a solution to (L1). Then $\xi$ is integer.
In a solution $(\xi,\varphi)$ to~\eqref{L1} vector $\xi$ is integer.
\end{theorem}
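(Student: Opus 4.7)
The plan is to show tightness of the relaxation by rounding up any optimum. Given an optimal solution $(\xi^*,\varphi^*)$, define $\hat\xi_{si}=\mathbf{1}[\xi^*_{si}>0]$ (keeping $\hat\xi_{s y_s}=0$) and $\hat\xi_{stij}=\hat\xi_{si}\hat\xi_{tj}$. Then $\hat\xi$ is integer, lies in $\Sigma$ and satisfies $\hat\xi\ge\xi^*$ componentwise. Since the objective is $\sum_{s,i}\xi_{si}$, the objective at $\hat\xi$ is strictly larger than at $\xi^*$ unless $\xi^*=\hat\xi$ already. Thus it suffices to exhibit some $\hat\varphi$ under which $(\hat\xi,\hat\varphi)$ is feasible for~\eqref{L1}; optimality of $(\xi^*,\varphi^*)$ then forces $\xi^*=\hat\xi$, i.e.\ $\xi^*$ is integer.

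Producing such a $\hat\varphi$ is equivalent, by \Statement{S:WI-dual}, to showing that $P_{\hat\xi}\in\WI$. By \Lemma{L is closed}, $P_{\hat\xi}$ preserves $\Lambda$, so what remains is the improving inequality $\langle f,\mu\rangle\ge\langle f,P_{\hat\xi}\mu\rangle$ for every $\mu\in\Lambda$. The key structural fact I would exploit is that $P_\xi$ is \emph{linear} in the augmented vector $(\xi_{si},\xi_{stij})$, so the scalar $\mu\mapsto\langle f,(I-P_\xi)\mu\rangle$ is, for each fixed $\mu$, a linear function of $\xi$. Feasibility of $(\xi^*,\varphi^*)$ supplies non-negativity at $\xi=\xi^*$, and we want to propagate it to $\xi=\hat\xi$. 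I would write the move $\xi^*\to\hat\xi$ as a sum of elementary ``corner-directed'' steps and check that each preserves the improving inequality; the freedom to adjust $\varphi$ together with $\xi$ (since both appear in the feasibility constraint) is what makes this propagation possible.

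The main obstacle is the pairwise coordinates: setting $\hat\xi_{si}=\hat\xi_{tj}=1$ snaps $\hat\xi_{stij}$ to the corner $1$ of the McCormick envelope, whereas $\xi^*_{stij}$ was free in the interval $[\max(0,\xi^*_{si}{+}\xi^*_{tj}{-}1),\min(\xi^*_{si},\xi^*_{tj})]$. The resulting change in $f_{st}(ij)-(P^T_\xi f)_{st}(ij)$ carries a submodularity-defect term $f_{st}(y_s,j)+f_{st}(i,y_t)-f_{st}(i,j)-f_{st}(y_s,y_t)$ of unconstrained sign. To handle it cleanly I would either choose $\hat\xi_{stij}$ at the envelope value that maximises slack in the pairwise inequality (adjusting $\varphi$ on the corresponding pairwise components accordingly), or follow a Picard--Queyranne style vertex argument analogous to the $K=2$ case treated earlier in the paper, showing that the vertices of the projection of the feasible polyhedron of~\eqref{L1} onto the $\xi$-coordinates are integer. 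The latter route parallels the QPBO proof above and seems the most robust.
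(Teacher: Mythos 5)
Your reduction is the right one, and it matches the paper's framing: round up to $\hat\xi_{si}=\leftbb\xi^*_{si}{>}0\rightbb$, exhibit a $\hat\varphi$ making $(\hat\xi,\hat\varphi)$ feasible, and conclude from optimality that $\xi^*=\hat\xi$. But the entire content of the theorem is precisely the step you leave open --- that $P_{\hat\xi}\in\WI$ --- and neither of your fallback strategies closes it. Strategy (a) is vacuous: once $\hat\xi_{si}=\hat\xi_{tj}=1$, the McCormick interval $[\max(0,\hat\xi_{si}+\hat\xi_{tj}-1),\,\min(\hat\xi_{si},\hat\xi_{tj})]$ collapses to the single point $1$, so there is no envelope value left to choose, and ``adjusting $\varphi$ accordingly'' is a restatement of the very existence question at issue. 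Strategy (b) asks for integrality of the vertices of the $\xi$-projection of the feasible polyhedron, which is a much stronger claim than the theorem (the theorem concerns only maximizers of $\sum_{s,i}\xi_{si}$); no total-unimodularity or network structure is available here, and the analogy with the $K=2$ Picard--Queyranne argument does not transfer, since that proof stitches together two \emph{optimal primal solutions} of~\eqref{LP} using complementary slackness, not a vertex property of a constraint matrix. Finally, your ``corner-directed steps'' idea cannot work as stated: the feasible set of~\eqref{L1} is a polyhedron in $(\xi,\varphi)$, rounding up is not a convex combination of feasible points, and --- as your own sign observation about $f_{st}(y_s,j)+f_{st}(i,y_t)-f_{st}(i,j)-f_{st}(y_s,y_t)$ shows --- there is no monotonicity in $\xi$ to propagate feasibility upward.

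The idea you are missing is self-composition. Since $\xi^*$ is feasible, $P_{\xi^*}\in\WI$, and by \Lemma{L is closed} $P_{\xi^*}(\Lambda)\subset\Lambda$; hence $\<f,P_{\xi^*}P_{\xi^*}\mu\>\leq\<f,P_{\xi^*}\mu\>\leq\<f,\mu\>$ for all $\mu\in\Lambda$, so $P_{\xi^*}^2$ is again $\Lambda$-improving. The crucial computation (this is the heart of the paper's proof) is that squaring stays inside the parametric family: $P_{\xi^*}^2=P_{\xi'}$ with $\xi'_{si}=1-(1-\xi^*_{si})^2$ and $\xi'_{stij}=(1-\xi^*_{si}-\xi^*_{tj}+\xi^*_{stij})^2-1+\xi'_{si}+\xi'_{tj}$, and one checks $\xi'\in\Sigma$. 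Iterating, $(P_{\xi^*})^{2^n}=P_{\xi^{(n)}}$ with $\xi^{(n)}_{si}=1-(1-\xi^*_{si})^{2^n}\to\leftbb\xi^*_{si}{>}0\rightbb=\hat\xi_{si}$, i.e.\ the limit of the powers is exactly your rounded map $P_{\hat\xi}$. Non-strict improving inequalities survive the limit, so $P_{\hat\xi}\in\WI$, and \Statement{S:WI-dual} then supplies the $\hat\varphi$ you need; your concluding optimality argument finishes the proof. Without this limit-of-powers argument (or an equivalent substitute for it), the proposal does not prove the theorem.
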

\begin{proof}
%Let $(\xi,\varphi)$ be optimal to (L1). 
%We have $\xi\in\Sigma$ %by construction and 
%and $P_\xi$ is $\Lambda$-improving by feasibility to (L1). 
%Assume for contradiction that $\xi$ is not integer. 
We will show that rounding $\xi$ up results in a feasible solution with equal or better objective. Because $\xi$ is feasible to~\eqref{L1}, the mapping $P_\xi$ is $\Lambda$-improving for $f$. Note, at this point, unless $\xi$ is integer it is not guaranteed that $P_\xi(\M) \subset \M$ and we cannot draw any partial optimalities from it, neither $P_\xi$ is guaranteed to be idempotent. %since $P_\xi$ is not pixel-wise, it does not follow immediately that $P_\xi$ is improving on $\M$.
By \Lemma{L is closed}, $P_\xi (\Lambda) \subset \Lambda$. Therefore
\begin{equation}
(\forall \mu\in\Lambda)\tab \<f, P_\xi P_\xi \mu \> \leq \<f, P_\xi \mu \> \leq \<f, \mu \>.
\end{equation}
%The first inequality follows from that $\mu' = P_\xi \mu \in \Lambda$ and $\<f,P_\xi \mu' \> \leq \<f, \mu' \>$.
It follows that $P_\xi^2 =  P_\xi P_\xi$ is $\Lambda$-improving. Since $P_\xi (\Lambda) \subset \Lambda$, it is also $P_\xi^2 (\Lambda) \subset P_\xi (\Lambda) \subset \Lambda$. Moreover, $P_\xi^2 = P_{\xi'}$ with the following coefficients $\xi'$:
\begin{equation}
\begin{aligned}
\xi'_{si} &  = 1-(1-\xi_{si})^2,\\
\xi'_{stij} & = (1-\xi_{si}-\xi_{tj}+\xi_{stij})^2 -1+\xi'_{si} +\xi'_{tj}.
\end{aligned}
\end{equation}
It can be verified that $\xi'\in\Sigma$.
Let $P_{\xi^*} = \lim_{n\to \infty} (P_\xi)^{2^n}$. Then 
\begin{equation}\label{xi star}
\xi^*_{si}  = \lim_{n\to\infty} 1-(1-\xi_{si})^{2^n} = \leftbb \xi_{si}{>0}\rightbb.
\end{equation}
Since $P_{\xi^*}$ is $\Lambda$-improving, it is feasible to~\eqref{L1}. 
Assume for contradiction that there exist $(s',i')$ such that $0< \xi_{s'i'} < 1$. From~\eqref{xi star} we have $\xi^*_{si} \geq \xi_{si}$ for all $si$ and $\xi^*_{s'i'} > \xi_{s'i'}$. It follows that $\xi^*$ achieves a better objective value, which contradicts the optimality of $\xi$. Therefore $\xi$ is integer. %$\sum_{s}\xi^*_s = \sum_{s}\leftbb\xi_s{>}0\rightbb > \sum_{s}\leftbb\xi_s{>}0\rightbb$
%Because $\xi^* \geq \xi$, it achieves the objective value not worse than $\xi$. Therefore $\xi^*$ is optimal.
\qed
\end{proof}
Since the optimal solution to~\eqref{L1} is integer and unique (as seen from the objective), it is the solution to \maxwi. %From the objective function it is also seen that the solution is unique.
\par
Problem \maxsi can be approached similarly, using the dual definition of $\mathbb{S}$.
The inequalities for pairwise terms~\eqref{SI-dual-b} are linearized exactly the same way as for the weak case, we can write them shortly as
\begin{equation}
((I-P\T_\xi)f - A\T \varphi)_{st}(i,j) \geq 0.
\end{equation}
The inequalities for univariate terms~\eqref{SI-dual-a}, by substituting $p_\xi$ %using $P_{s,ii'}$ from~\eqref{PL-a} 
can be expressed as
\begin{equation}\label{SI-dual-a-xsi}
(f_{s}(i)-f_s(y_s))\xi_{si} - (A\T \varphi)_{s}(i) \geq \epsilon \xi_{si}\leftbb i\neq y_s\rightbb.
\end{equation}
Since we assume $\xi_{s y_s}=0$, expression~\eqref{SI-dual-a-xsi} is equivalent to
\begin{equation}
(f_{s}(i)-f_s(y_s)-\epsilon)\xi_{si} - (A\T \varphi)_{s}(i) \geq 0,
\end{equation}
\ie, we obtained the same form of constraints as for the weak case, but with slightly modified vector $f$. 
Namely, components $f_s(y_s)$ are increased by $\epsilon$ for all $s$. Let us denote the problem~\eqref{L1} with $\epsilon$-modified vector $f$ as~($\epsilon$-L1). Since the solution $\xi$ to ($\epsilon$-L1) is integer it solves {\sc max-si}.
\par
These solutions can be applied for one or more test labelings $y$. A polynomial algorithm, for example, can iterate over labelings $(y^\alpha \mid \forall s\ y_s = \alpha)$ for $\alpha=0,\dots,K-1$. This algorithm subsumes simple Goldstein's DEE~\cite{Goldstein-94-dee} and the series of Kovtun's weak one-against-all subproblems for candidate labelings $y^\alpha$. Most efficient in practice seems to set $y_s$ to one of the immovable labels by the necessary conditions by \Lemma{necessary-LI}.
This approach in fact allows to solve optimally {\sc max-si} problem for the next class of mappings.
\paragraph{Reduced Linear Program}
We now detail the program~\eqref{L1} in components and simplify it for the practical implementation.
We will assume without loss of generality that $0= f_s(y_s) = f_{st}(y_s,y_t) = f_{st}(i,y_t) = f_{st}(y_s,j)$. If the problem $\hat f$ does not satisfy these constraints, we chose the {\em equivalent} problem $f$ by letting
%\begin{equation}\label{zerotop-1}
%\begin{array}{l}
\begin{align}\label{zerotop-1}
%(\forall st,ij)\tab && 
\notag
& f_{st}\ind{\ij}  = \hat f_{st}\ind{\ij}-\hat f_{st}(i,y_t)-\hat f_{st}(y_s,j)+\hat f_{st}(y_s,y_t)\,,\\
%(\forall s,i)\tab && 
\notag
& f_{s}(i)  =  \hat f_{s}(i)- \hat f_{s}(y_s)+\sum\limits_{t\in\N(s)}[ \hat f_{st}(i,y_t)- \hat f_{st}(y_s,y_t)]\,,\\
%&& 
& f_{0}  = \hat f_{0}+\sum_{st\in\E} \hat f_{st}(y_s,y_t)+\sum_{s\in\V} \hat f_s(y_s)\,.
\end{align}
%\end{equation}
It can be verified by substitution that $\f=\energy{\hat f}$ and that $(\forall \mu\in\Lambda)$ $\<\hat f,\mu\> = \<f,\mu\>$.
By construction, the optimal $\xi$ for problem~\eqref{best L-improving} does not depend on this transformation.
We have
%Assuming $f_s(y_s) = f_{st}(y_s,j) = f_{st}(i,y_t) = f_{st}(y_s,y_t) = 0$, can simplify as
\begin{equation}\label{proj simplified}
\begin{aligned}
(f-P\T f)_s(i) &= f_{s}(i) \xi_{si},\\
(f-P\T f)_{st}(i,j) &= f_{st}(i,j)\big(\xi_{si} + \xi_{tj} - \xi_{stij}\big).
\end{aligned}
\end{equation}
With these expansions made, the problem~\eqref{L1} expresses as
\begin{subequations}
\begin{align}
%\label{L1}
%\tag{L1}
\notag
& \max\csub{\xi,\varphi} \sum_{s,i}\xi_{si} \\ 
%\notag
& (\forall s,i) \ \ f_{s}(i)\xi_{si} + \sum_{t\in \N(s)} \varphi_{st}(i) -\varphi_s \geq 0, \\
\notag
%\label{L1-pair-expand}
& (\forall st,\ \forall ij) \\
%\notag
\label{L1-pair-expand}
& f_{st}(i,j)(\xi_{si}+\xi_{tj}- \xi_{stij}) -\varphi_{st}(i)-\varphi_{ts}(j) \geq 0,\\
& \sum_s \varphi_s \geq 0, \\
&\xi\in\Sigma;\ \xi_{si} \in [0,1]; \xi_{s y_s} = 0.
\end{align}
\end{subequations}
We next optimize out variables $\xi_{stij}$. For each $ij$ variable $\xi_{stij}$ is present only in the constraint ~\eqref{L1-pair-expand} and the constraint of the feasible set $\Sigma$, $\max(0,\xi_{si}+\xi_{tj}-1) \leq \xi_{stij} \leq \min(\xi_{si},\xi_{tj})$. Depending on whether $f_{st}(i,j)$ is positive or negative the optimal value for $\xi_{stij}$, which allows the maximum freedom for~\eqref{L1-pair-expand} is either its lower or upper bound, respectively. %$\max(0,\xi_{si}+\xi_{tj}-1)$ or $\min(\xi_{si},\xi_{tj})$, respectively.
Let $\L_{st}^{+} = \{ij \mid f_{st}(i,j) > 0 \}$ and $\L_{st}^{-} = \{ij \mid f_{st}(i,j)\leq 0 \}$. Substituting the respective bounds into~\eqref{L1-pair-expand} and using identities
%\begin{equation}
\begin{align}
\notag
&\xi_{si}+\xi_{tj} - \max(0,\xi_{si}+\xi_{tj}-1) = \min(\xi_{si}+\xi_{tj},1),\\
&\xi_{si}+\xi_{tj} - \min(\xi_{si},\xi_{tj}) = \max(\xi_{si},\xi_{tj})
\end{align}
%\end{equation}
we can rewrite constraints~\eqref{L1-pair-expand} as
\begin{align}
\notag
& (\forall st,\ \forall ij\in\L_{st}^{-})\\
& f_{st}(i,j)\max(\xi_{si},\xi_{tj}) -\varphi_{st}(i)-\varphi_{ts}(j) \geq 0,\\
\notag
& (\forall st,\ \forall ij\in\L_{st}^{+})\\
& f_{st}(i,j)\min(\xi_{si}+\xi_{tj},1) -\varphi_{st}(i)-\varphi_{ts}(j) \geq 0.
\end{align}
Finally, by expressing $\min$ and $\max$ as two linear constraints each, we obtain
%
%\begin{align}
%$\xi_{stij}$
%\end{align}
%the constraint~\eqref{L1-pair-expand} is tightest.
%For integer $\xi$ there holds $\xi_{si}+\xi_{tj}-\xi_{si}\xi_{tj} = \max(\xi_{si},\xi_{tj}) = \min(1,\xi_{si}+\xi_{tj})$. 
%Let us 
%Let us replace in the pairwise constraints of~\eqref{L1} the expression $\xi_{si}+\xi_{tj}-\xi_{si}\xi_{tj}$ with $\max(\xi_{si},\xi_{tj})$, resp. with $\min(1,\xi_{si}+\xi_{tj})$, for $ij\in \L_{st}^{-}$, resp. $ij\in \L_{st}^{+}$.
%After substituting the respective bounds, we obtain 
the following representation of the problem~\eqref{L1}:
\begin{subequations}
\begin{align}
%\label{L1}
%\tag{L1}
\notag
& \max\csub{\xi,\varphi} \sum_{s,i}\xi_{si} \\ 
%\notag
& (\forall s,i) \ \ f_{s}(i)\xi_{si} + \sum_{t\in \N(s)} \varphi_{st}(i) -\varphi_s \geq 0, \\
\label{ILP-c-neg}
\notag
& (\forall st,\ \forall ij\in\L_{st}^{-})\\
%\notag
& f_{st}(i,j)\xi_{si} -\varphi_{st}(i)-\varphi_{ts}(j) \geq 0, \\
& f_{st}(i,j)\xi_{tj} -\varphi_{st}(i)-\varphi_{ts}(j) \geq 0; \\
%(\forall st,\ \forall ij\in\L_{st}^{-})\  && f_{st}(i,j)\xi_t -\varphi_{st}(i)-\varphi_{ts}(j) & \geq 0, \\
%(\forall st,\ \forall ij\in\L_{st}^{+})\  && f_{st}(i,j) -\varphi_{st}(i)-\varphi_{ts}(j) & \geq 0, \\
\label{ILP-c-pos}
\notag
& (\forall st,\ \forall ij\in\L_{st}^{+})\\
%\notag
& f_{st}(i,j)-\varphi_{st}(i)-\varphi_{ts}(j) \geq 0, \\
& f_{st}(i,j)(\xi_{si}+\xi_{tj}) -\varphi_{st}(i)-\varphi_{ts}(j) \geq 0; \\
%\nonumber
%& \xi \in\{0,1\}^V.
%\end{cases}
%\end{aligned} &
%\right.
%&((I-P\T_\xi)f - A\T \varphi)_{st}(i,j) \geq 0 \\
%\notag
%& (\forall s,\,\forall i)\tab (f_{s}(i)-f_s(y_s))\xi_{si} +\sum_{t\in \N(s)}\varphi_{st}(i)-\varphi_s \geq 0, \\
%\notag
%& (\forall st,\,\forall ij)\tab f^{\varphi}_{st}(i,j) - f_{st}(p_s(i),p_t(j)) \geq 0 \\
%\notag
%&\xi\in\Sigma;\ \xi_{si} \in [0,1]; \xi_{s y_s} = 0.
\notag
& \sum_s \varphi_s \geq 0, \\
\notag
& \xi_{si} \in [0,1]; \xi_{s y_s} = 0.
\end{align}
\end{subequations}
In this form, only variables $\xi_{si}$ remained. On the other hand, the number of constraints has doubled.
%These solutions can be made to subsume simple Goldstein's DEE~\cite{Goldstein-94-dee} by simply iterating over labelings $(y \mid \forall s\ y_s = \alpha)$ for $\alpha=0,\dots,K-1$.
%
%We can now solve (L1) and the integrality result applies. 
\paragraph{All-to-One-Unknown} 
%Let us consider the class of mappings $p$ of the form $p_s(i)=y_s\xi_s + i(1-\xi_s)$ for all $\xi\in\{0,1\}^\V$ and all $y\in \LL$. 
Let us consider the class $\P^{1}$, in which map $p_\xi$ is defined by $\xi\in\{0,1\}^\V$ and labeling $y\in \LL$.
Problem \maxwi is NP-hard by our argument above for $K \geq 3$, valid for this class as well. However, we can solve the {\sc max-si} problem combining necessary conditions by \Lemma{necessary-LI} and ($\varepsilon$-L1) problem %all-to-one problem %($\epsilon$-L1) 
as proposed in~\Algorithm{Alg:s-all-to-one-unknown}.
\begin{algorithm}
	$\mu\in\argmin_{\mu\in\Lambda} \<f,\mu\>$\tcc*{solve~\eqref{LP}}
	%If $\mu_{s}$ is integer, $i$ is such that $\mu_{s}(i)=1$, set $y_s = i$, otherwise set $y_s$ arbitrarily\;
	For all $s$ if exists $i\in\L_s$ such that $\mu_{s}(i)=1$ then set $y_s = i$, otherwise set $y_s$ arbitrarily\;
	%\eIf{$\exists i\in\L_s$ $\mu_{s}(i)=1$}{
		%set $y_s = i$\;
	%}{
		%set $y_s$ arbitrarily\;
	%}
	Solve the problem~($\epsilon$-L1) with $y$\;
	\caption{Max Strong all-to-one-unknown\label{Alg:s-all-to-one-unknown}}
\end{algorithm}
Necessary conditions in this case either provide the unique labeling $y_s$ or prove that $p_s$ must be identity. The optimality of the method follows. This algorithm subsumes strict variant of Kovtun's one-against-all auxiliary problem, under an arbitrary choice of a test labeling $\bar y$ and the iterative pruning method~\cite{Swoboda-13}.
%\par\noindent
%
%\paragraph{Interval Constraints}Consider the family of mappings $x\mapsto ((x\vee y)\wedge z)$ (as in MQPBO). We can solve MAX-SI problem, similarly to the {\em all-to-one-unknown} case. We solve LP-relaxation. Its set of solutions may not be moved.
%
%from this class can be found by solving (L1) for $\tilde f = f - \epsilon h$, as implied by~\Statement{S:SI-dual}, where $h_{s}(i) = \leftbb p_s(i){\neq}i\rightbb = \leftbb y_s{\neq}i\rightbb$, does not depend on $\xi$.
%%linear program
%\begin{subequations}\label{best S-L-improving}
%\begin{align}
%\tag{S-L1}
%& \max\csub{\xi,\varphi} \sum_{s,i}\xi_{si} \\ 
%\notag
%& ((I-P\T_\xi)f - A\T \varphi)_{s}(i) \geq \epsilon \xi_{si}\leftbb i\neq y_s\rightbb \\
%\notag & ((I-P\T_\xi)f - A\T \varphi)_{st}(i,j) \geq 0\\
%\notag
%&\xi\in\Sigma;\
%\end{align}
%\end{subequations}
%
%\todo{What can be solved combining theorems and L1}.
%
%\paragraph{Windowing}
%
%We can also propose how our method can be applied to large-scale problems on sparse graphs, where solving full-size LP is numerically intractable. We can solve constrained variants of {\sc max-wi}/{\sc max-si}, where the mapping is chosen only inside a {\em window} $\W\subset \V$ (details in~\cite[\Section{S:windowing-s}]{suppl}).
\subsection{Windowing}\label{S:windowing-s}
In this section we would like to address large-scale problems, where solving~\eqref{L1} for the full problem may be numerically intractable. We restrict consideration to a local {\em window} $\W\subset \V$,
%What partial optimalities can we derive, by considering the 
 fix $p_s(i)=i$ for all $s\notin \W$ and solve for the part of $p$ inside the window. This leads to a reduced problem~\eqref{L1} with variables $\xi_{si}$ and $\varphi_{st}(i)$ only inside the window. But how do we pick a good labeling $y$ for~\eqref{L1}, without solving the full~\eqref{LP}? We propose the following ``local'' necessary conditions. %Our proposal is based on the next property.
%\par We derive few additional properties to 
%
%
%
%\subsection{Properties}
\par
%Let $P$ and $Q$ be projections in $\Real^\I \to \Real^\I$. We say that $P$ {\em dominates} $Q$ (denoted $P \succeq Q$) if $PQ = P$.
%We first show the following theorem.
\begin{theorem}\label{T:nested} Let $P,Q \colon \Real^\I \to \Real^\I$, $QP=Q$ and $P(\Lambda)\subset \Lambda$. Let 
\begin{equation}\label{test-problem}
\O = \arg\min_{\mu\in\Lambda} \<f,(I-Q)\mu\>
\end{equation}
Then
\begin{enumerate}
\item[(a)] $P\in\WI$ $\Rightarrow$ $P(\O) \subseteq \O$.
\item[(b)] $P\in\SI$ $\Rightarrow$ $(\forall \mu\in\O) P\mu = \mu$.
\end{enumerate}
%Then $Q$ is $\Lambda$-improving for $f$ only if $Q(\O) \subseteq \O$.
\end{theorem}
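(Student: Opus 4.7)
The plan is to imitate the proof of Lemma \ref{necessary-LI} but exploit the algebraic identity $QP = Q$ to absorb $Q$ into the bookkeeping. The key observation is that for any $\mu \in \Lambda$,
\begin{equation}
\<f,(I-Q)P\mu\> = \<f,P\mu\> - \<f,QP\mu\> = \<f,P\mu\> - \<f,Q\mu\>,
\end{equation}
so comparing the objective $\<f,(I-Q)\cdot\>$ of problem~\eqref{test-problem} at $P\mu$ versus at $\mu$ reduces to comparing $\<f,P\mu\>$ with $\<f,\mu\>$, which is exactly what the improving property of $P$ controls.

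For part (a), I would take any $\mu\in\O$. Since $P(\Lambda)\subset\Lambda$, $P\mu$ is feasible for~\eqref{test-problem}. Using $P\in\WI$ together with the identity above,
\begin{equation}
\<f,(I-Q)P\mu\> = \<f,P\mu\>-\<f,Q\mu\> \leq \<f,\mu\>-\<f,Q\mu\> = \<f,(I-Q)\mu\>.
\end{equation}
Optimality of $\mu$ then forces equality, so $P\mu\in\O$, giving $P(\O)\subseteq\O$. Part (b) is obtained by the same chain, but if $P\mu\neq\mu$ the strict inequality coming from $P\in\SI$ yields $\<f,(I-Q)P\mu\> < \<f,(I-Q)\mu\>$, contradicting $\mu\in\O$; hence $P\mu=\mu$.

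I do not foresee a real obstacle: the only structural input is the assumption $QP=Q$, which is precisely the relation needed to cancel $\<f,QP\mu\>$ against $\<f,Q\mu\>$. The case $Q=0$ recovers Lemma \ref{necessary-LI}, as expected. The only small point to verify is that $P\mu$ is indeed feasible for the perturbed minimization, which is immediate from $P(\Lambda)\subset\Lambda$.
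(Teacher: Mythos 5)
Your proof is correct and follows essentially the same route as the paper's: both arguments hinge on the identity $(I-Q)-(I-Q)P = I-P$ (equivalently, $\<f,QP\mu\>=\<f,Q\mu\>$) granted by $QP=Q$, together with feasibility of $P\mu$ from $P(\Lambda)\subset\Lambda$, to reduce optimality in the perturbed problem~\eqref{test-problem} to the (strict) improving property of $P$. The only difference is presentational—you argue part (a) directly while the paper states the contrapositive—so there is nothing substantive to add.
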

\begin{proof}
%Assume for contradiction that $Q$ is improving and 
{\bf (a)\ }
Assume $P(\O) \nsubseteq \O$. Then there exists $\mu\in \O$ such that $P\mu\in\Lambda\backslash \O$. Then $\<f,(I-Q) \mu\> < \<f, (I-Q) P \mu\>$. 
%Using \Lemma{lemma 4} we obtain that $\<f,(I-Q)\mu\> = \<f, (I-P) \mu\> - \<f, (I-P) Q\mu\> <0$.%
Equivalently, $0 > \<f,((I-Q) -(I-Q)P ) \mu\>  = \<f,(I-P) \mu\>$. Hence $P\notin \WI$.
%\qed
%\end{proof}
%\begin{corollary}Let $Q(\Lambda)\subset \Lambda$. Let $\mu \in \arg\min_{\mu\in\Lambda} \<f,(I-P)\mu\>$ and $P \succeq Q$. Then $Q$ is strict $\Lambda$-improving only if $Q\mu = \mu$.% Let $\mu \in \arg\min$
%\end{corollary}
%\begin{proof}
\par
{\bf (b)\ }
Assume for contradiction that $P\mu\neq \mu$. Then $\<f, P\mu \> < \<f, \mu \>$. Equivalently, $\<f,((I-Q) -(I-Q)P ) \mu\> > 0$, which implies $\<f,(I-Q) \mu\> > \<f, (I-Q) P \mu\>$, which contradicts $\mu\in\O$.
\qed
\end{proof}
As a corollary, we have the following result. Let $O_s = \{i\in\L_s \mid (\exists\mu \in \O)\ \mu_s(i)>0 \}$. A pixel-wise map $p\colon \L \to \L$ is weakly $\Lambda$-improving only if $p_s(O_s) = O_s$ and strongly-$\Lambda$-improving only if $p_s(i) = i$ for all $i\in O_s$.
\par %This allows for the following approach. 
Instead of solving full~\eqref{LP} we solve test problem~\eqref{test-problem} with $Q=[q]$, $q_s(i) = i$ for $s\notin \W$ and $q_s(i) = 0$ for $s\in \W$. Since for any $p$ in the window there holds $q \circ p = q$, the solution to~\eqref{test-problem} identifies the subset of ``immovable'' labels and makes algorithms developed in the previous section applicable.
% required by the algorithms of the previous section. We can thus solve MAX-WI and MAX-SI problems on the window
%\begin{corollary}Let $Q(\Lambda)\subset \Lambda$, let $\O = \arg\min_{\mu\in\Lambda} \<f,(I-P)\mu\>$ and $P \succeq Q$. Let $O_s = \{i\in\L_s \mid (\exists\mu \in \O)\ \mu_s(i)>0 \}$. Then
%\begin{itemize}
%\item If $Q$ is weakly $\Lambda$-improving, then $Q_s(O_s) \subset O_s$.
%\item If $Q$ is strictly $\Lambda$-improving, then $(\forall i\in O_s)$ $Q_s(i) = i$.
%\end{itemize}
%\end{corollary}
%
%
%\begin{corollary}\label{C:3}Let $Q(\Lambda)\subset \Lambda$. Let $y\in \LL$ and $P\mu = \delta(y)$ for all $\mu$. Then $P \succeq Q$ for any $Q$ and conditions of \Theorem{T:nested} are satisfied.
%%Let $Q(\Lambda)\subset \Lambda$, let $\O = \arg\min_{\mu\in\Lambda} \<f,(I-P)\mu\>$ and $P \succeq Q$. Let $O_s = \{i\in\L_s \mid (\exists\mu \in \O)\ \mu_s(i)>0 \}$. Then
%%\begin{itemize}
%%\item If $Q$ is weakly $\Lambda$-improving, then $Q_s(O_s) \subset O_s$.
%%\item If $Q$ is strictly $\Lambda$-improving, then $(\forall i\in O_s)$ $Q_s(i) = i$.
%%\end{itemize}
%We have that
%\begin{equation}
%\arg\min_{\mu\in\Lambda} \<f,(I-P)\mu\> = \arg\min_{\mu\in\Lambda} \<f,\mu\>,
%\end{equation}
%Naturally, we obtain that improving mappings should map the set of solutions of LP-relaxation into itself and strictly improving mappings should not move any point from this set.
%%So the set of improving mappings is connected to the set of solutions of the LP-relaxation.
%\end{corollary}
%
%
\par
%In this section we give additional property that may be helpful 
In order to better understand necessary conditions by \Theorem{T:nested} we give the next additional property. For a projection $P\colon \Real^\I \to \Real^I$, its null space corresponds to the dimensions (variables) that become fixed. The larger the null space the more powerful the projection is, because the optimization domain reduces from $\M$ to $P(\M)$. The next lemma clarifies why property $Q P = Q$ was essential in \Theorem{T:nested}.
\begin{lemma}Let $P$ be idempotent. Then $Q P = Q$ \iff $\Null(P) \subseteq \Null(Q)$.
\end{lemma}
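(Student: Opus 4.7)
The plan is a short two-direction argument exploiting the idempotency of $P$ through the identity $P(I-P) = P - P^2 = 0$, which says that the range of $I-P$ is contained in $\Null(P)$.

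For the forward direction ($\Rightarrow$), I would take any $x \in \Null(P)$ and compute directly: $Qx = (QP)x = Q(Px) = Q \cdot 0 = 0$, so $x \in \Null(Q)$. This uses only the hypothesis $QP = Q$, not idempotency.

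For the backward direction ($\Leftarrow$), assume $\Null(P) \subseteq \Null(Q)$. For any $x \in \Real^\I$, consider the vector $(I-P)x$. By idempotency, $P(I-P)x = Px - P^2 x = Px - Px = 0$, so $(I-P)x \in \Null(P)$. By the inclusion hypothesis, $(I-P)x \in \Null(Q)$, and hence $Q(I-P)x = 0$. Since this holds for every $x$, we get $Q(I-P) = 0$, i.e.\ $QP = Q$.

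There is no real obstacle; the only subtlety worth noting is that idempotency of $P$ is genuinely needed for the $\Leftarrow$ direction (it ensures $\mathrm{Range}(I-P) \subseteq \Null(P)$), while the $\Rightarrow$ direction is purely formal and does not use it.
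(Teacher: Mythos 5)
Your proof is correct and takes essentially the same route as the paper: the forward direction is the identical formal computation, and the backward direction rests on the same key fact that idempotency forces the range of $I-P$ into $\Null(P)$. The only cosmetic difference is that the paper phrases this via the decomposition $x = x_1 + x_2$ with $x_1\in\Null(P)$, $x_2\in\Null(I-P)$ (calling it, slightly inaccurately, an orthogonal sum), whereas you apply $Q$ directly to $(I-P)x$ --- a mild streamlining of the same argument.
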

\begin{proof}
Let $QP = Q$. Assume $P x = 0$, then $Qx = QP x = Q 0 = 0$ and therefore $x\in\Null(Q)$. In the other direction, let $\Null(P) \subseteq \Null(Q)$. Let $x$ be arbitrary. Since $P^2=P$, we can represent $x$ with the orthogonal sum $x = x_1 +x_2$ with $x_1\in \Null(P)$ and $x_2\in \Null(I-P)$. We have $P x = P x_1 + P x_2 = P x_2 = x_2$. It follows that $QP x = Q x_2$. Since $\Null(P)\subset \Null(Q)$ we have $Q x_1 = 0$ and therefore $Q x = Q x_1 + Q x_2 = Q x_2$. It follows that $QP x = Q x_2 = Q x$ and therefore $QP = Q$.
\qed
\end{proof}
\begin{figure}[!t]
\setlength{\figwidth}{0.5\linewidth}
\setlength{\tabcolsep}{0pt}
\begin{tabular}{cc}
\begin{tabular}{l}
\includegraphics[width=0.86\figwidth]{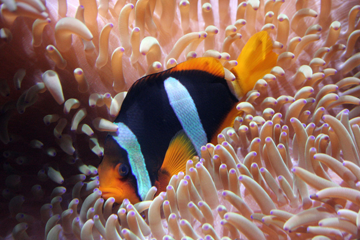}
\end{tabular}&
\begin{tabular}{l}
\includegraphics[width=\figwidth]{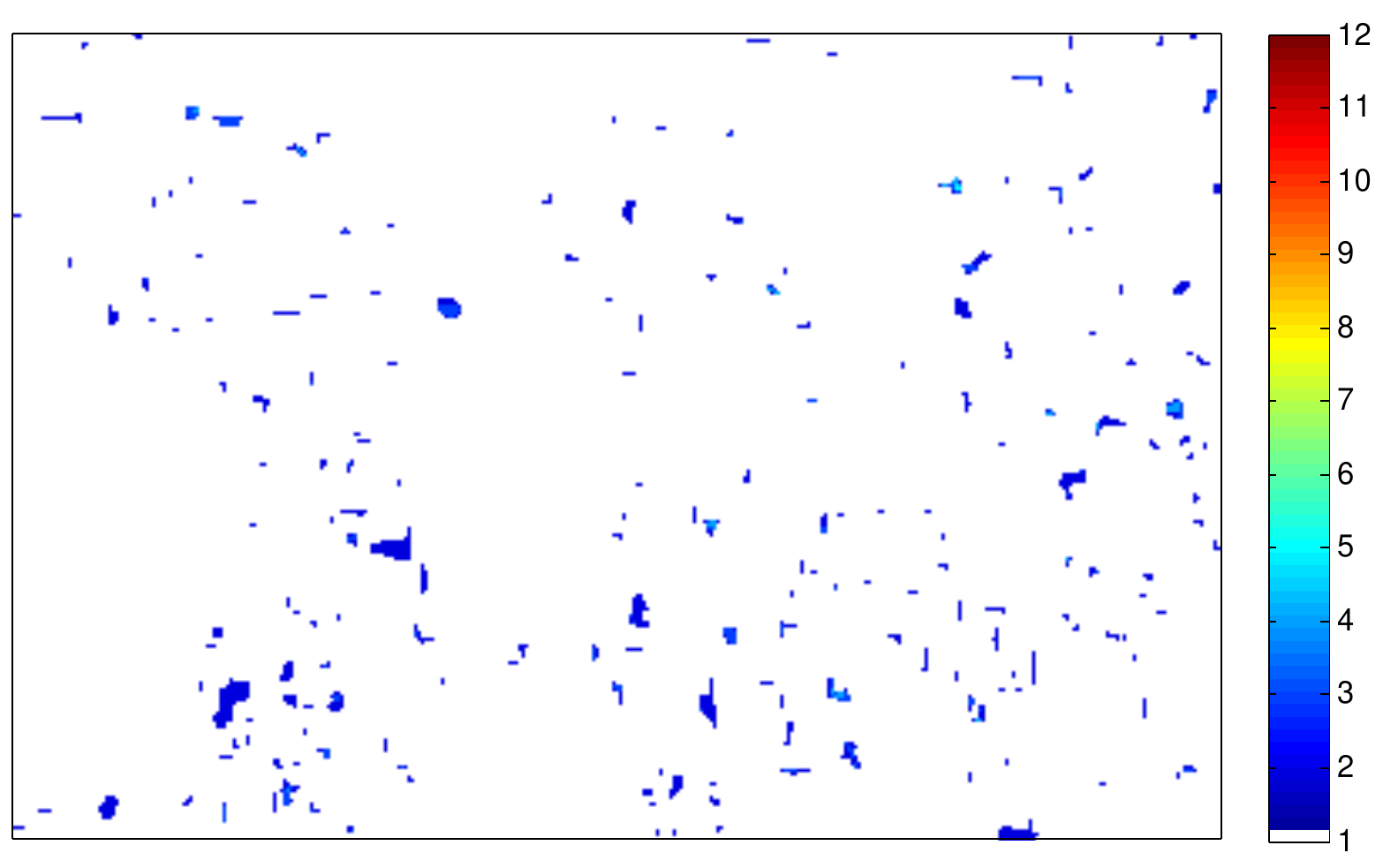}
\end{tabular}\vspace{-1mm}\\
$\wedge$\vspace{-2mm} & \\
\begin{tabular}{l}
\includegraphics[width=\figwidth]{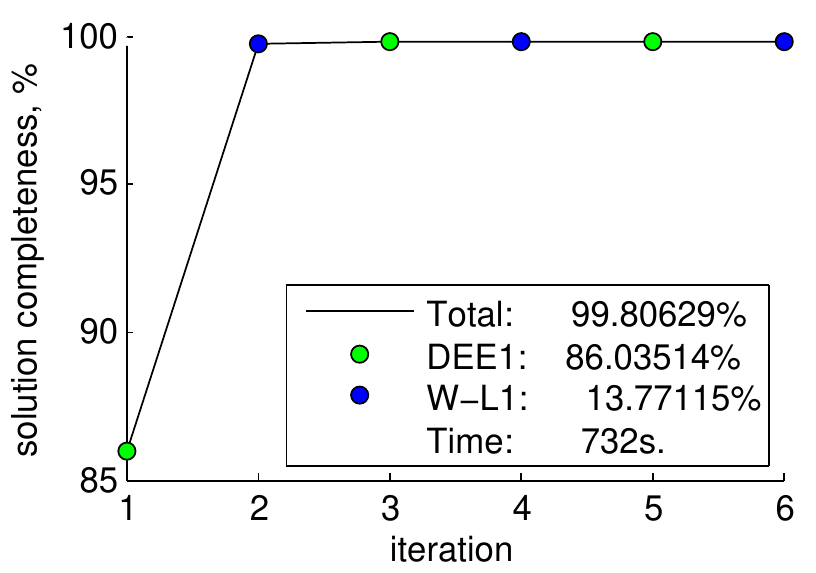}
\end{tabular}&
\begin{tabular}{l}
\includegraphics[width=\figwidth]{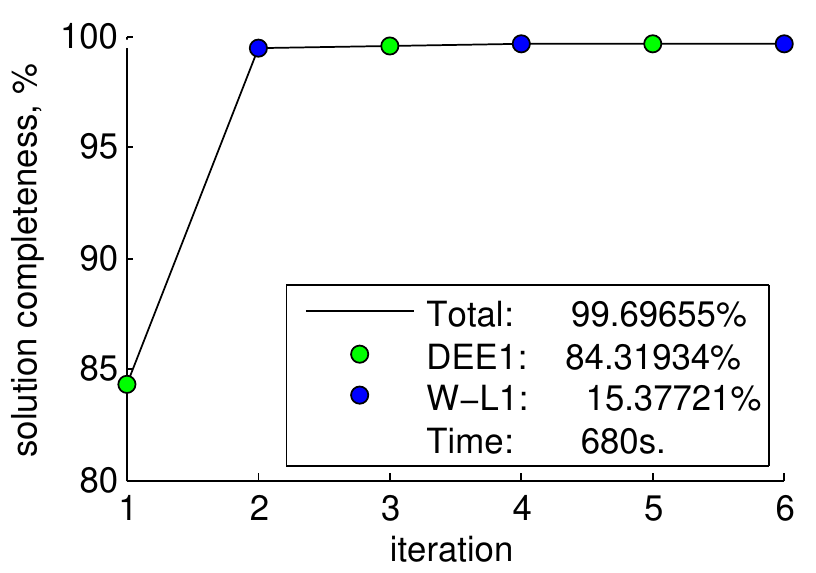}
\end{tabular}\vspace{-2mm}\\ & {$\vee$}\\
\begin{tabular}{l}
\includegraphics[width=0.86\figwidth]{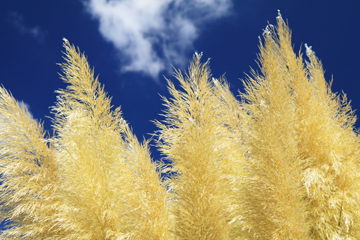}
\end{tabular}&
\begin{tabular}{l}
\includegraphics[width=\figwidth]{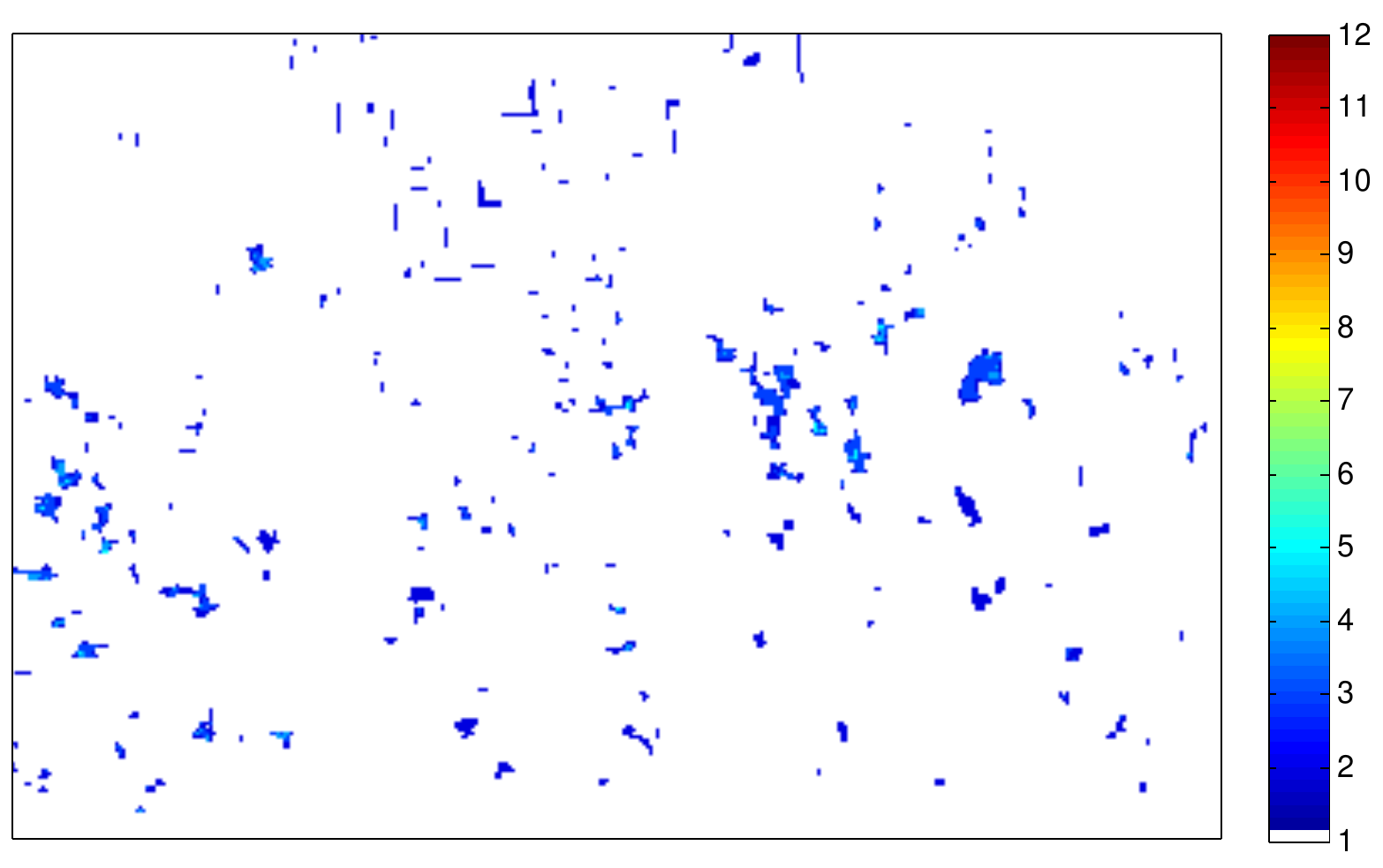}
\end{tabular}\\
\begin{tabular}{l}
\includegraphics[width=0.86\figwidth]{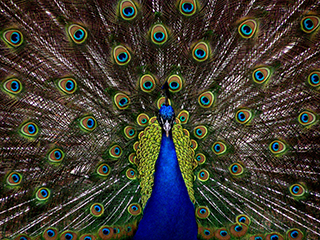}
\end{tabular}&
\begin{tabular}{l}
\includegraphics[width=\figwidth]{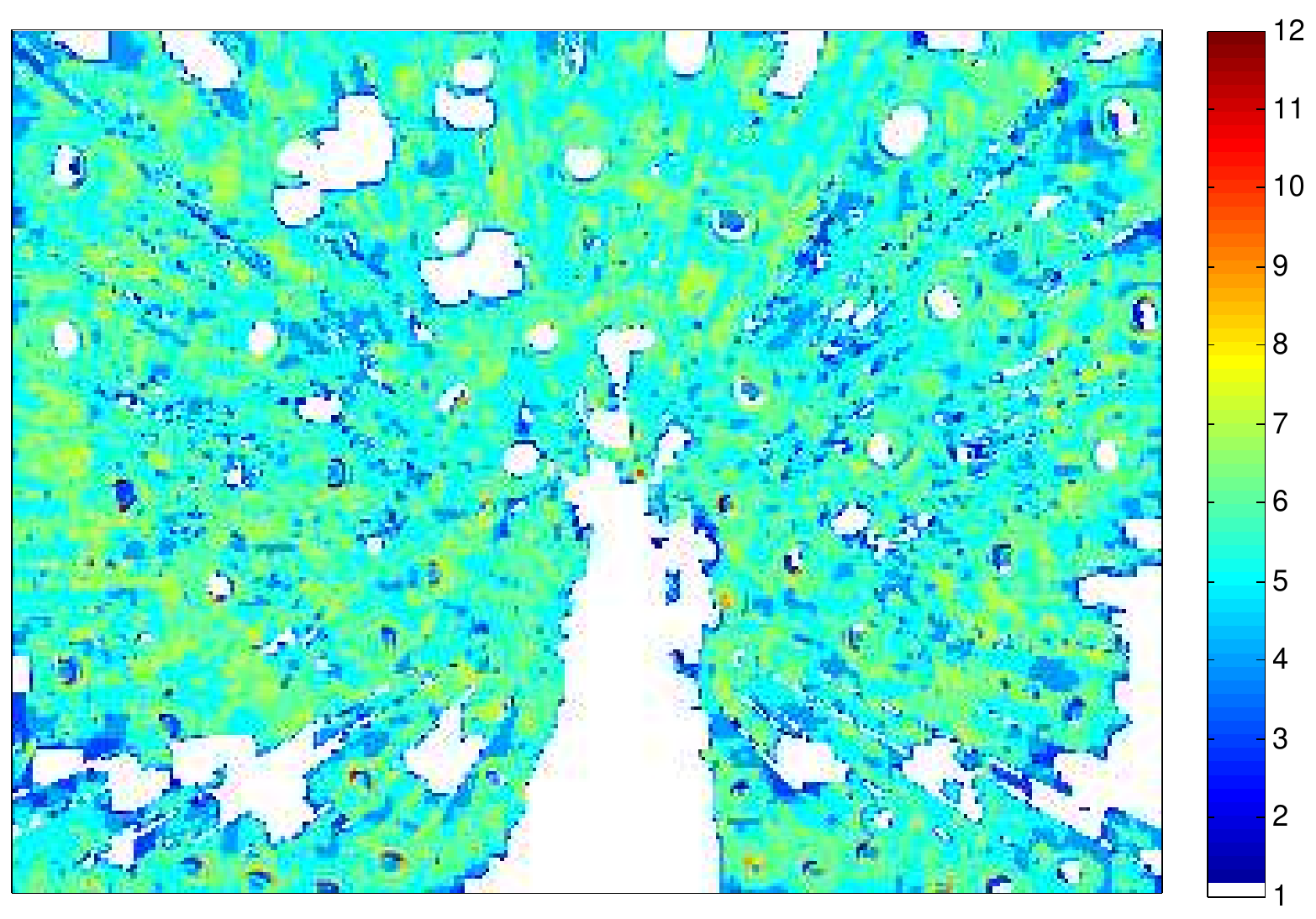}
\end{tabular}\vspace{-1mm}\\
$\wedge$\vspace{-2mm} & \\
\begin{tabular}{l}
\includegraphics[width=\figwidth]{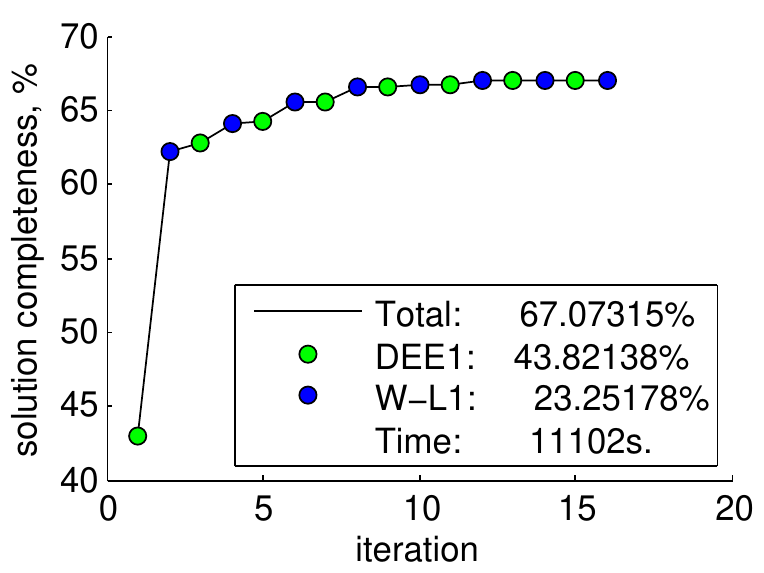}
\end{tabular}
\end{tabular}
%\end{tabular}
\caption{Other instances of {\small \tt color-seg-n4}. For each instance shown: image, reminder of the problem (number of remaining labels), algorithm progress. }\label{f:segm-group2}
\end{figure}

\section{Experiments}
\subsection{Random Instances}
We report results on random problems with Potts interactions and full interactions. Both types have unary weights $f_s(i) \sim U[0,\,100]$ (uniformly distributed). Full random energies have pairwise terms $f_{st}(i,j) \sim U[0,\,100]$ and Potts energies have $f_{st}(i,j) = -\gamma_{st}(i)\leftbb i{=}j\rightbb$, where $\gamma_{st}(i)\sim U[0,\, 50]$. All costs are integer to allow for exact verification of correctness. Only instances with non-zero integrality gap \wrt standard LP-relaxation are considered.
For each of the methods in Table~\ref{table1}, we measure {\em solution completeness} as
%\begin{equation}
$\frac{n_{\rm elim}}{|\V|(K-1)}100\%$,
%\end{equation}
where $n_{\rm elim}$ is the total number of pairs $(s\in\V,i\in\L_s)$ eliminated by the method as non-optimal. The results are shown in Figure~\ref{f:exp-rand}.
%
%We only use {\em CPLEX-non-tight} instances for the experiments: those where the solution to the LP-relaxation (found by CPLEX dual simplex method) is not integer. We remove instances solved CPLEX LP since there is no challenge. 
%On the other hand, there still remains instances that are tight (0 integrality gap) but where it is not trivial to find an integer solution (have to solve constraint satisfaction). These instances remain in the test set. The instances/generator will be made available online.
%
\begin{figure*}[!t]
\includegraphics[width=\linewidth]{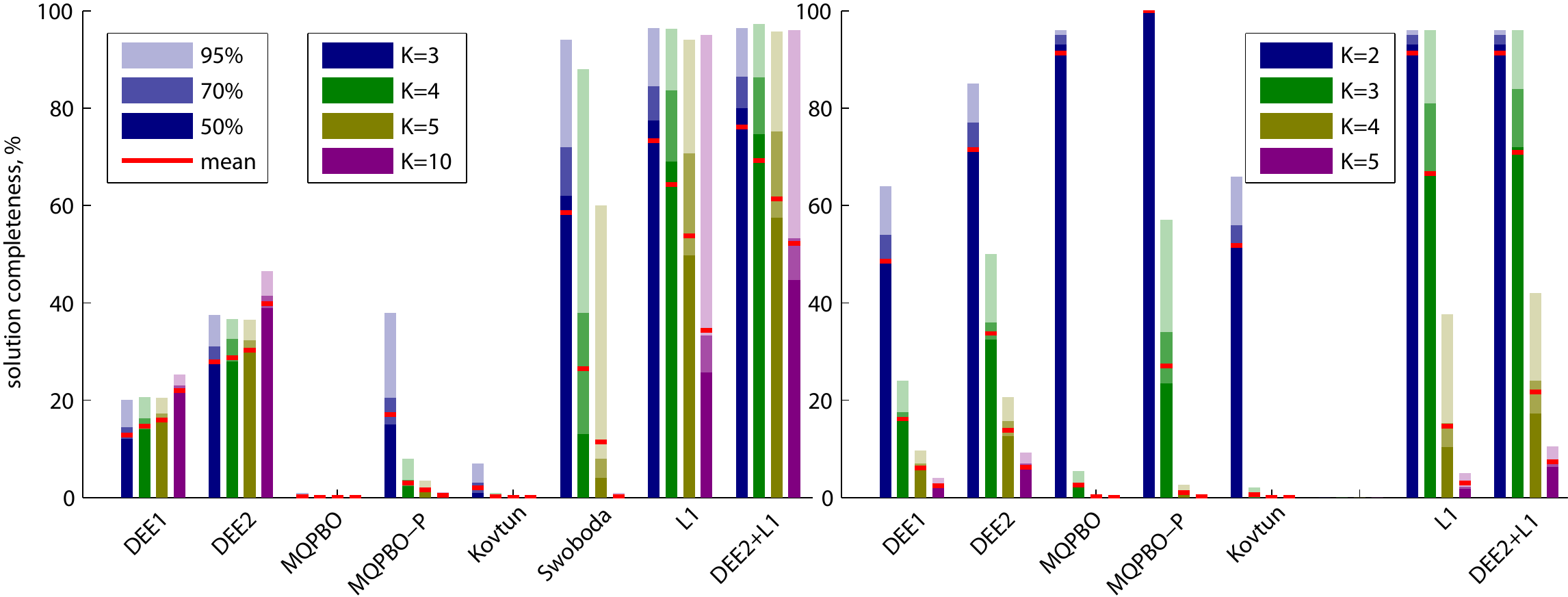}
\caption{Solution completeness by different methods on random instances of size 10x10 pixels, 4-connected. Bars of different shades indicate the portion of the sample under the given solution completeness value (statistics over 100 instances). Left: Potts model, right: full model.}
\label{f:exp-rand}
\end{figure*}
\begin{table*}
{\small
\setlength{\tabcolsep}{3pt}
\renewcommand{\arraystretch}{1.2}
\begin{tabular}{|p{0.072\linewidth}|p{0.905\linewidth}|}
\hline
\scriptsize DEE1 & Goldstein's Simple DEE~\cite{Goldstein-94-dee}: If $f_s(\alpha)-f_s(\beta) +\sum_{t\in\N(s)} \min_{x_t}[f_{st}(\alpha,x_t)-f_{st}(\beta,x_t)] \geq 0$ eliminate $\alpha$. Iterate until no elimination possible.\\
\scriptsize DEE2 & Similar to DEE1, but including also the pairwise condition:
$f_s(\alpha_s)-f_s(\beta_s) + f_t(\alpha_t)-f_t(\beta_t)
+f_{st}(\alpha_{st})-f_{st}(\beta_{st})
+\sum\lsub{t'\in\N(s)\backslash\{t\}} \min_{x_{t'}}[f_{st'}(\alpha_s,x_{t'})-f_{st'}(\beta_s,x_{t'})]
+\sum\lsub{t'\in\N(t)\backslash\{s\}} \min_{x_{t'}}[f_{tt'}(\alpha_t,x_{t'})-f_{tt'}(\beta_t,x_{t'})]
 \geq 0.$\\
\scriptsize MQPBO(-P) & The method of Kohli \etal~\cite{kohli:icml08}. The problem reduced to $\{0,1\}$ variables is solved by QPBO(-P)~\cite{Rother:CVPR07}, where ``-P'' is the variant with probing~\citesuppl{Boros:TR06-probe}. In the options for probing we chose: use weak persistencies, allow all possible directed constraints and dilation=1.\\
\small Kovtun & One-against-all Kovtun's method~\cite{Kovtun-10}. We run a single pass over $\alpha = 1,\dots K$ (test labelings are $(y_s=\alpha\mid s\in\V)$). Labels eliminated in earlier steps are taken correctly into account in the subsequent steps.\\
\small Swoboda & Iterative Pruning method of Swoboda \etal~\cite{Swoboda-13} using CPLEX~\cite{CPLEX} for each iteration. This version is applicable only to Potts model.\\
\scriptsize L1 & The proposed method solving~\eqref{L1} with CPLEX. The test labeling $y$ is selected from the necessary conditions.\\
\scriptsize DEE2+L1 & Sequential application of DEE2 and L1. Note, DEE2 uses condition on pairs which is not covered by the proposed sufficient condition under standard relaxation polytope $\Lambda$. %Since {DEE2}'s sufficient condition is not covered, the combination gives some improvement.
\\
\hline
\end{tabular}
}
\caption{List of tested methods.}\label{table1}
\end{table*}

\par
Results of all methods that are covered by the proposed sufficient conditions were verified by solving the verification LP~\eqref{L-LP}. For random problems, we also found a global minimum $x^*$ with CPLEX mixed-integer solver (feasible for the size of the problems we used). Methods not verifiable with~\eqref{L-LP} we checked to satisfy $\f(p(x^*)) = \f(x^*)$. 
%For the methods outside the 
%
%We test the condition iteratively for all pairs of labels $\alpha$, $\beta$ for all pixels until there is no pixel where it would hold.
%\begin{tabular}{|lll|}
%\hline
%Shortcut & Description
%\hline
%DEE1 & 
%\hline
%\end{tabular}
%\subsection{Random Problems}
%In the tests we use 4-connected and 8-connected grid graphs and sample fully random and random generalized Potts energies. Both types have unary weights $f_s(i) \sim U[0,\,100]$ (uniformly distributed integer from 0 to 100). Full random energies have also $f_{st}(i,j) \sim U[0,\,100]$. Potts energies have pairwise terms $f_{st}(i,j) = -\gamma_{st}(i)\leftbb i{=}j\rightbb$, with $\gamma_{st}(i)\sim U[0,\, 50]$.
%We only use {\em CPLEX-non-tight} instances for the experiments: those where the solution to the LP-relaxation (found by CPLEX dual simplex method) is not integer. We remove instances solved CPLEX LP since there is no challenge. On the other hand, there still remains instances that are tight (0 integrality gap) but where it is not trivial to find an integer solution (have to solve constraint satisfaction). These instances remain in the test set. The instances/generator will be made available online.
%%\paragraph{Potts Random}
%\par
%The results for Potts random problems are shown in \Figure{f:potts-3}-\ref{f:potts-3-8} and results for full random problems are shown in \Figure{f:full-2}-\ref{f:full-3-8}.
\par
In the case of Potts model, we see that performance of Swoboda \etal~\cite{Swoboda-13} drops quickly with the increase of the number of labels and ours decreases moderately. While the problem difficulty increases, the performance of DEE methods appears to benefit from more labels, which can be explained by the random nature of the problems. Increasing connectivity makes the problem more difficult for all methods, see \Figure{f:potts-3-8}. %With increased connectivity all methods perform worse, while integrality gap does not increase much.
%
%\par
%General pairwise terms. In case of two labels, \Figure{f:full-2} confirms that our method finds the same part of the solution as QPBO.
%
%proposed method in the case of Potts model the performance drops While for the Potts model
%
%
%\paragraph{Full Random}
%\begin{figure}[tbh]
%\includegraphics[width=\linewidth]{}
%\caption{Potts model 10x10 grid, 4-connected, 3 labels. Statistics over 100 instances. Solution completeness by different methods: bars of different shades indicate the portion of the sample under the given solution completeness value. The pie plot shows distribution of the integrality gap.}
%\label{f:potts-3}
%\end{figure}
%
%\begin{figure}[tbh]
%\includegraphics[width=\linewidth]{}
%\caption{Potts model 10x10 grid, 4-connected, 4 labels.}
%\label{f:potts-4}
%\end{figure}
%\begin{figure}[tbh]
%\includegraphics[width=\linewidth]{}
%\caption{Potts model 10x10 grid, 4-connected, 5 labels. ILP was not evaluated.}
%\label{f:potts-5}
%\end{figure}
%\begin{figure}[tbh]
%\includegraphics[width=\linewidth]{}
%\caption{Potts model 10x10 grid, 4-connected, 10 labels. ILP was not evaluated.}
%\label{f:potts-10}
%\end{figure}
\begin{figure*}[tbh]
\includegraphics[width=0.5\linewidth]{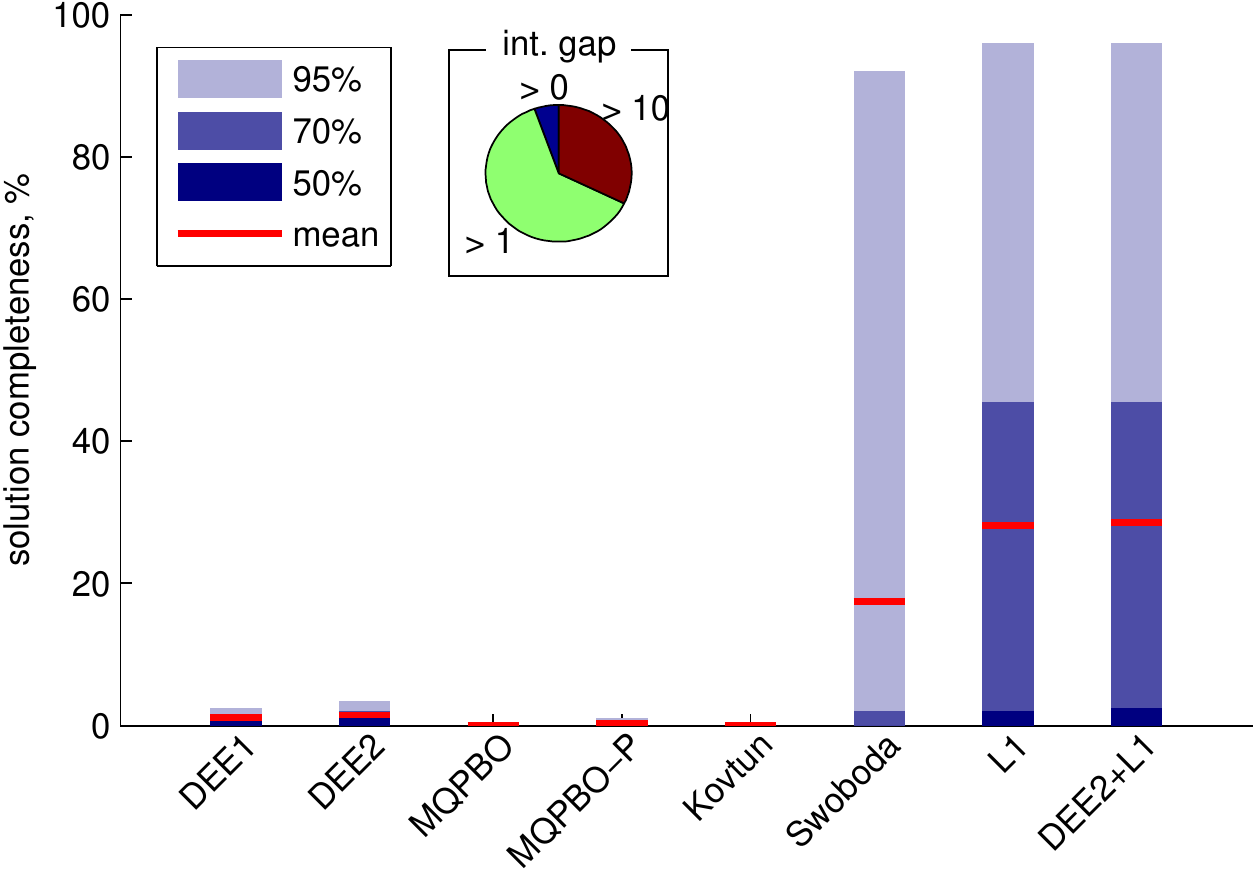}
\includegraphics[width=0.5\linewidth]{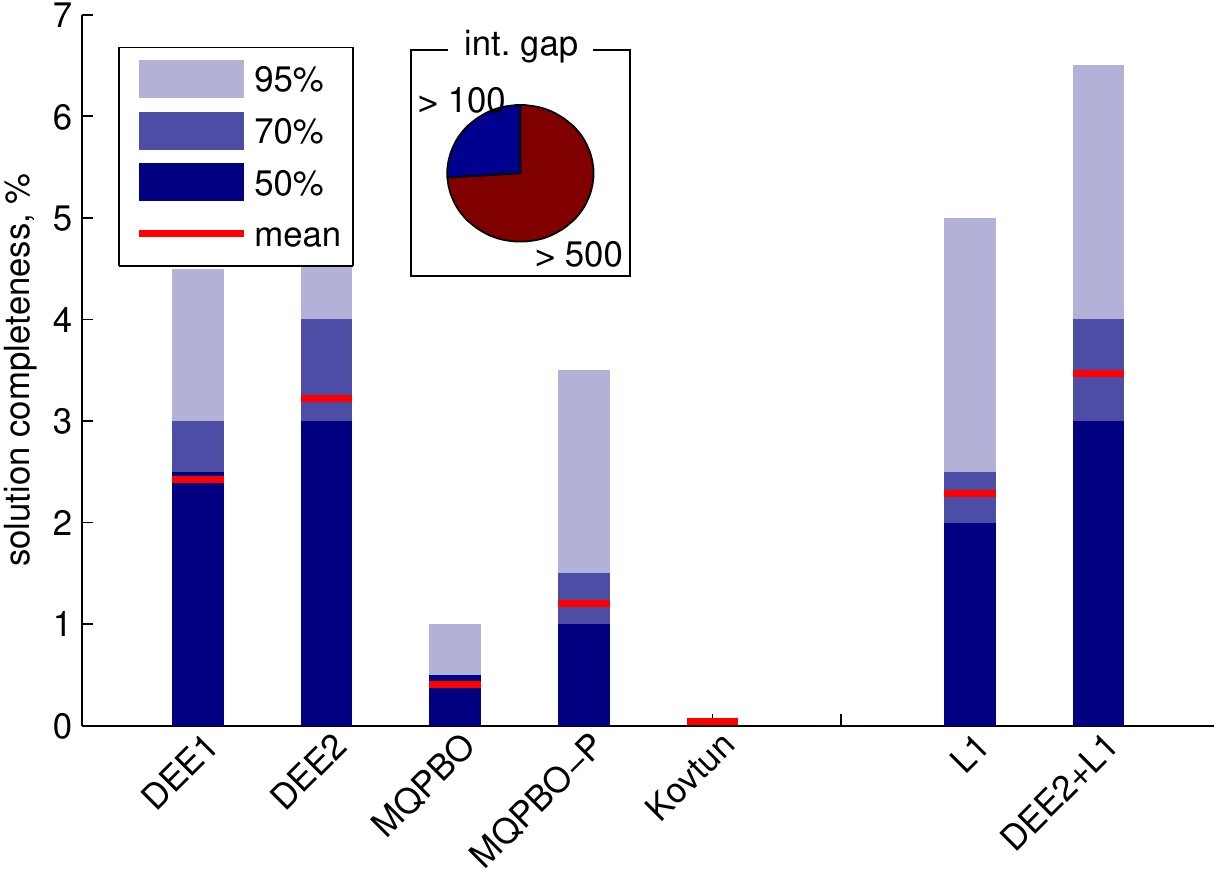}
\caption{Results with 8-connected 10x10 grid, 3 labels. Left: Potts model, right: full model. The pie chart shows distribution of the integrality gap over 100 samples, which indicates problem difficulty. The ``full'' problem is much more difficult (note the different axis scale).}
\label{f:potts-3-8}
\end{figure*}
%\begin{figure}[tbh]

%\caption{Full model 10x10 grid, 8-connected, 3 labels.}
%\label{f:full-3-8}
%\end{figure}
%% full random
%\begin{figure}[tbh]
%\includegraphics[width=\linewidth]{}
%\caption{Full model 10x10 grid, 4-connected, 2 labels. Method of Swoboda \etal~\cite{Swoboda-13} is not applicable.}
%\label{f:full-2}
%\end{figure}
%\begin{figure}[tbh]
%\includegraphics[width=\linewidth]{}
%\caption{Full model 10x10 grid, 4-connected, 3 labels.}
%\label{f:full-3}
%\end{figure}
%\begin{figure}[tbh]
%\includegraphics[width=\linewidth]{}
%\caption{Full model 10x10 grid, 4-connected, 4 labels.}
%\label{f:full-4}
%\end{figure}
%\begin{figure}[tbh]
%\includegraphics[width=\linewidth]{}
%\caption{Full model 10x10 grid, 4-connected, 5 labels. ILP was not evaluated. Note the scale of $y$-axis. }
%\label{f:full-5}
%\end{figure}
%
%
%
\begin{figure*}[tbp]
\setlength{\figwidth}{0.23\linewidth}
\begin{tabular}{ccccc}
\begin{tabular}{l}
\includegraphics[height=\figwidth]{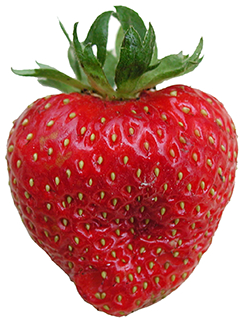}
\end{tabular}&
\begin{tabular}{l}
\includegraphics[height=\figwidth]{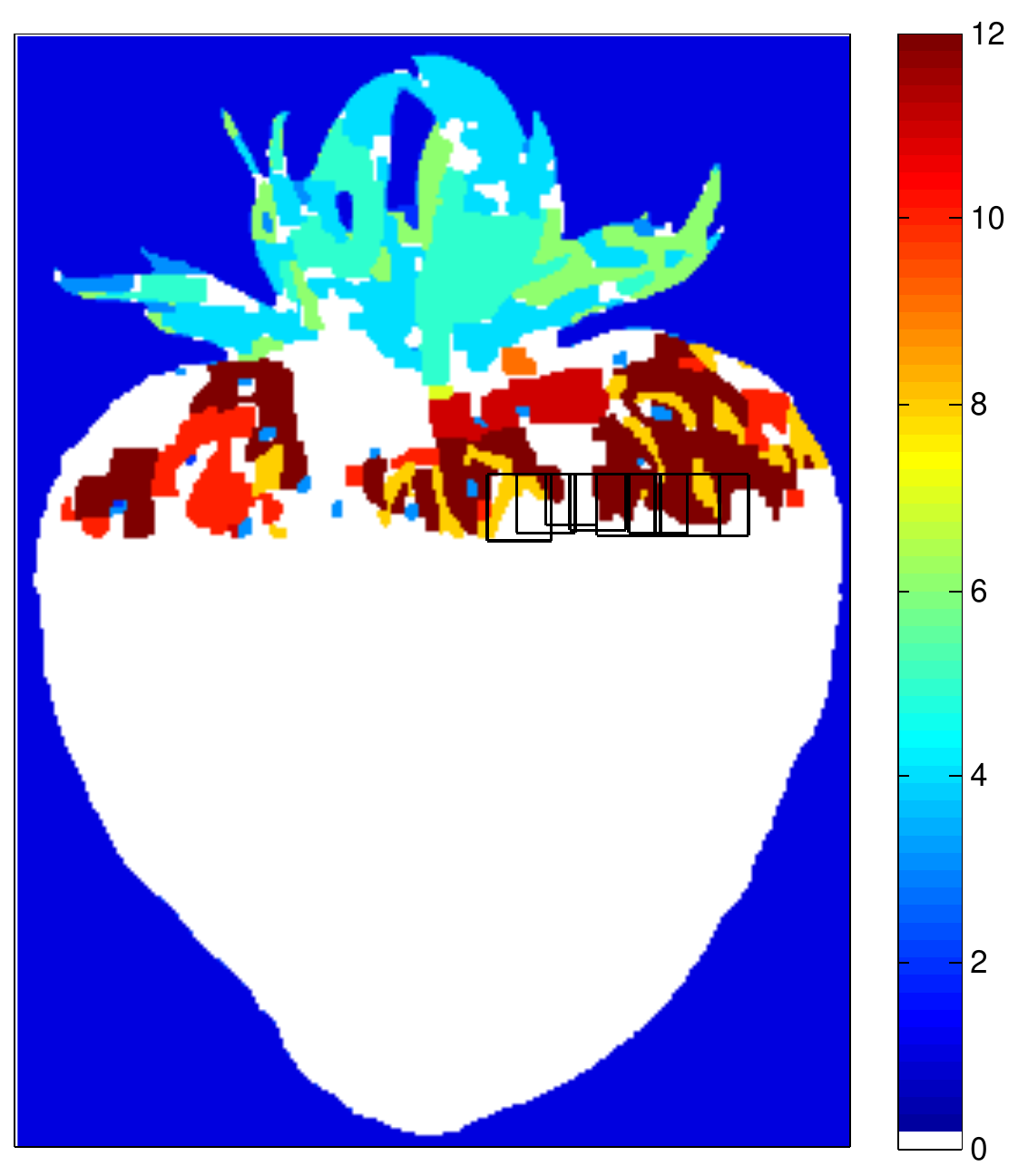}
\end{tabular}&
\begin{tabular}{l}
\includegraphics[height=\figwidth]{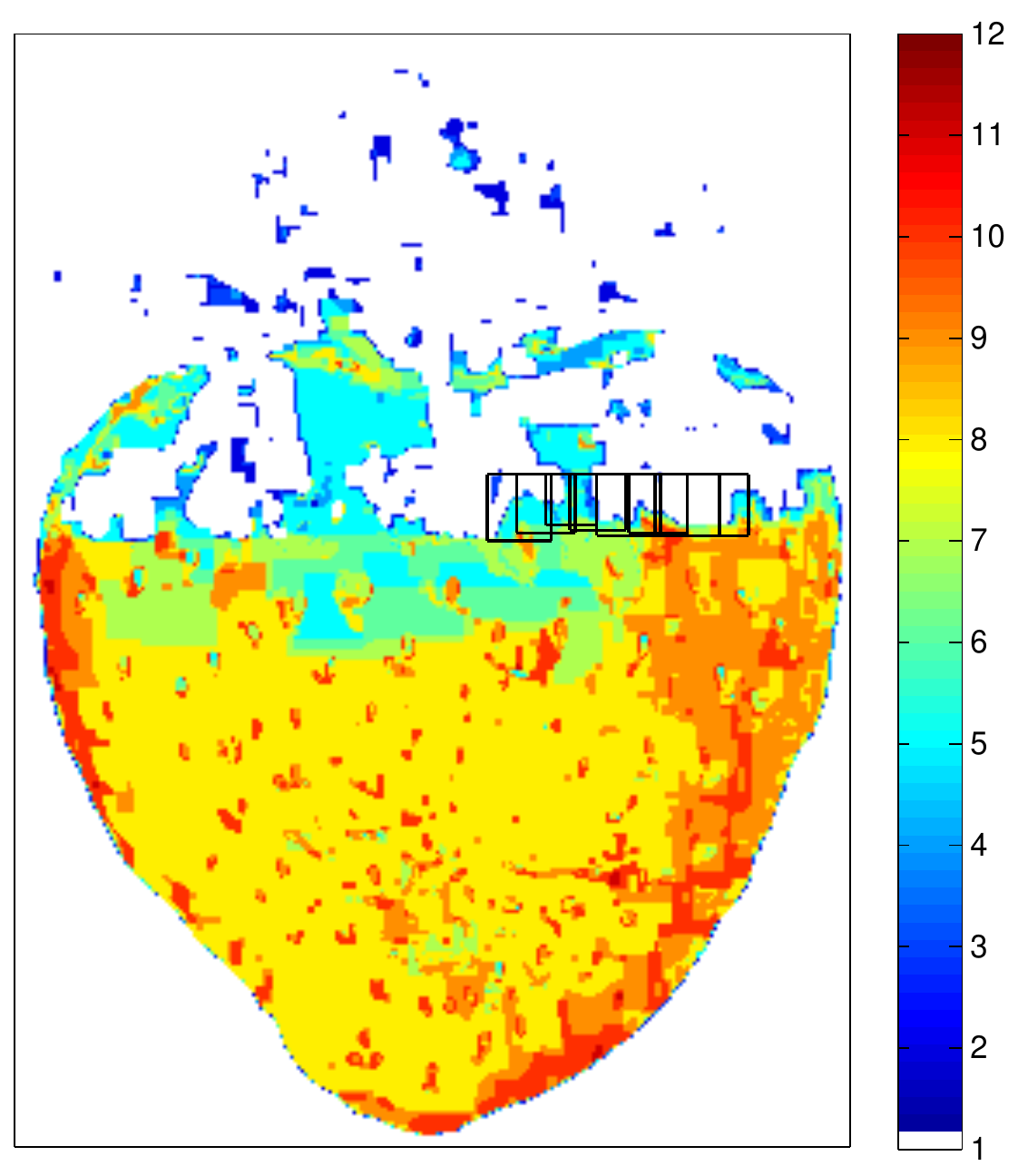}
\end{tabular}&
\begin{tabular}{l}
\includegraphics[height=\figwidth]{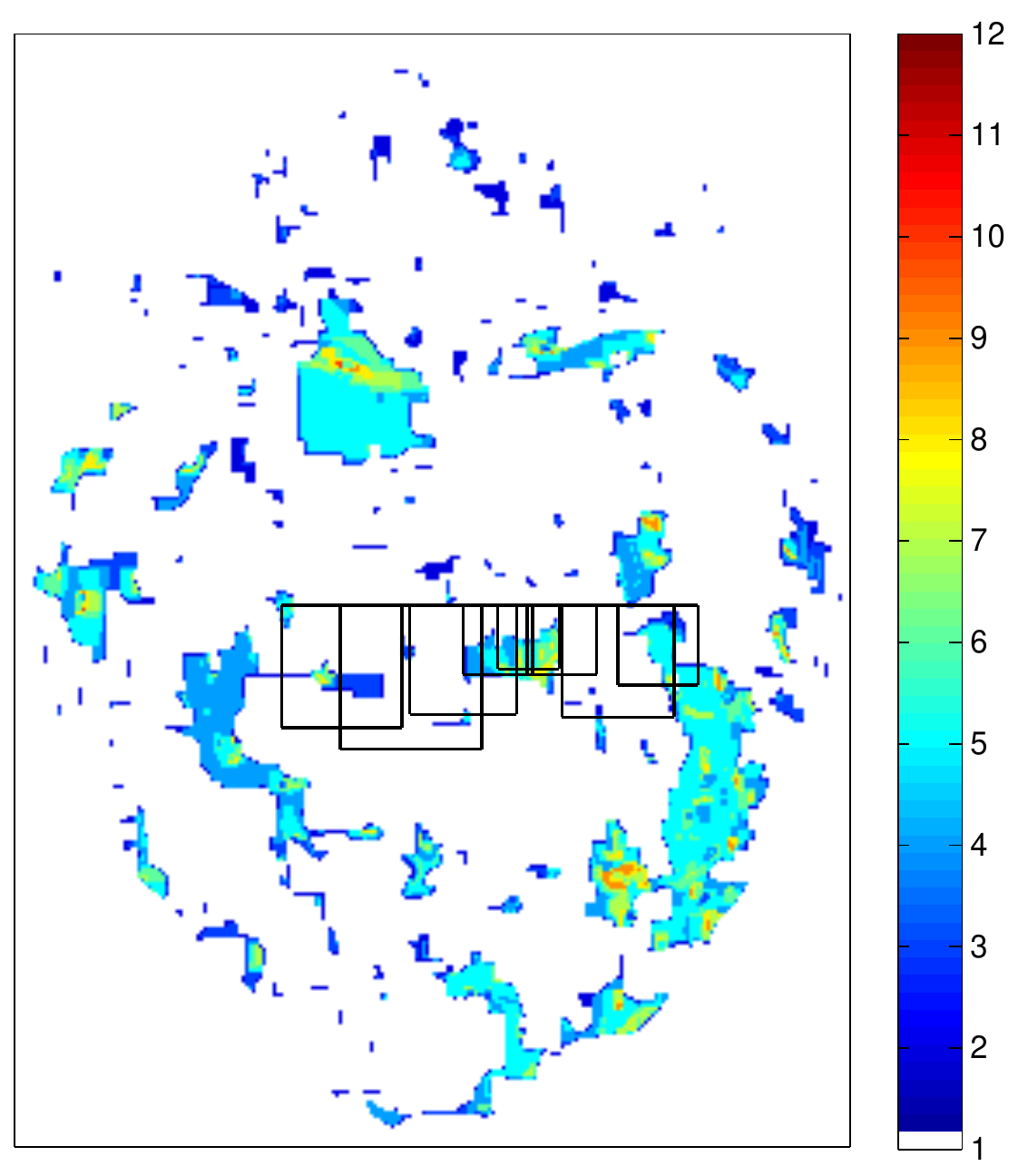}
\end{tabular}&
\begin{tabular}{l}
\includegraphics[height=\figwidth]{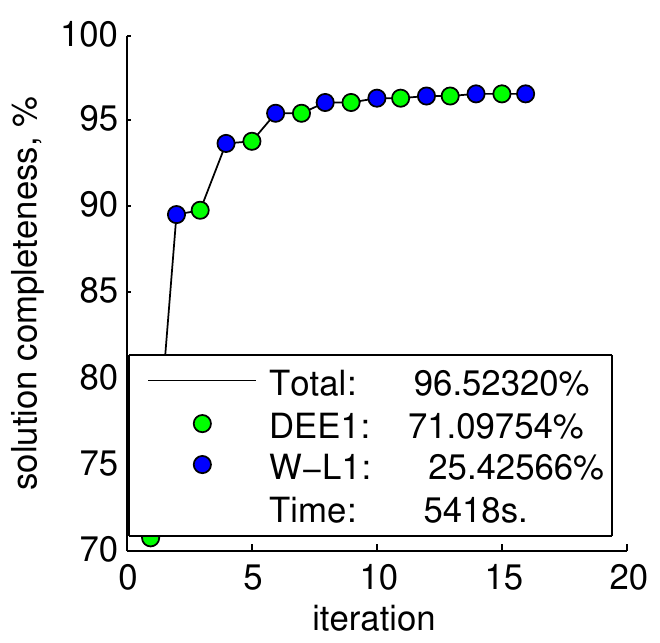}
\end{tabular}\\
(a) & (b) & (c) & (d) & (e)
\end{tabular}
\caption{Windowing method in progress. (a) image corresponding to the instance {\small \tt strawberry-glass-2-small}; (b) Partial labeling for the current reduction: only pixels with a single remaining label are assigned. Black boxes depict the current set of windows selected for application of L1 ( processed in parallel). (c) The number of labels remaining in every pixel on the same iteration as (b). White color indicates fully resolved pixels. (d) The number of remained labels upon termination. Selected windows are larger, but cannot improve the reduction. (e) Algorithm progress during DEE1 iterations (green) and window-L1 iterations (blue).}\label{f:windows}
\end{figure*}
\subsection{Large-Scale Segmentation}
We propose experiments with multiclass image segmentation. We used {\small \verb=color-seg-n4=} instances from~\citesuppl{Kappes-2013-benchmark}, which have 4-12 labels and Potts pairwise interactions. 
%\paragraph{Window-L1} 
Solving LP-relaxation for the whole problem is numerically intractable. We apply the technique described in~\Section{S:windowing-s}. We maintain a global pixel-wise mapping $p\L\to\L$, which defines the current problem reduction. At each step we select a window $\W\subset \V$ such that the problem~\eqref{L1} over the window has no more that $10^4$ variables or constraints (under the current reduction of label sets, $p_s(L_s)$). We find an improving mapping $p'$ from the window subproblem and calculate the composition $p\circ p'$. We can process several overlapping windows in parallel, taking a composition of the mappings in the end. The result might depend on the order of composition, but any order corresponds to a correct weak partial optimality. An example of windows selected for processing instance {\small \verb=crops-small=} are shown in \Figure{f:windows}; Before each scan with local windows we perform simple DEE step, this step makes a big initial reduction for some of these problems, and our method works on the more difficult reminder. On some other problems DEE step is of almost of no help, ({\small \verb=fourcolors=}, fourth in \Figure{f:segm-group1}). By this technique we demonstrate how a nearly complete solution can be found for large instances, by considering always only a part of problem at a time. We see that the {\em reminder of the problem} (the final reduced problem) often decouples in several small independent components, that are feasible to ,\eg, ILP methods. These experiments are a proof-of-concept, we definitely need to develop methods further for a practical implementation. Results are shown in \Figure{f:windows}-\ref{f:segm-group2}. Note, for some of these instances method of Swoboda \etal~\cite{Swoboda-13} identifies a more complete solution, despite we claimed to generalize it. They are using a suboptimal LP solver, but applying it globally to the whole problem. It is likely that our results can be improved by picking the windows more accurately.
\par
One interesting consequence of the windowing method is that it can be applied also with methods of Kovtun~\cite{Kovtun-10}, Swoboda \etal~\cite{Swoboda-13}, MQPBO~\cite{kohli:icml08} and, in fact, any other method that constructs a pixel-wise improving mapping. Because MQPBO constructs a flow network on the graph with $K*|\V|$ nodes and $K^2|\E|$ edges, it was reported as intractable for several vision problems~\citesuppl{Kappes-2013-benchmark}. The proposed windowing technique can remove this limitation.
%
%
% We 
%We select a square window of the size such that the resulting LP, under the
%
%
\begin{figure}[!t]
\setlength{\figwidth}{0.4715\linewidth}
\setlength{\tabcolsep}{0pt}
\begin{tabular}{cc}
\begin{tabular}{l}
\includegraphics[width=0.86\figwidth]{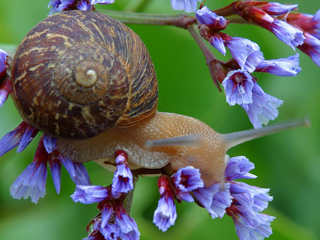}
\end{tabular}&
\begin{tabular}{l}
\includegraphics[width=\figwidth]{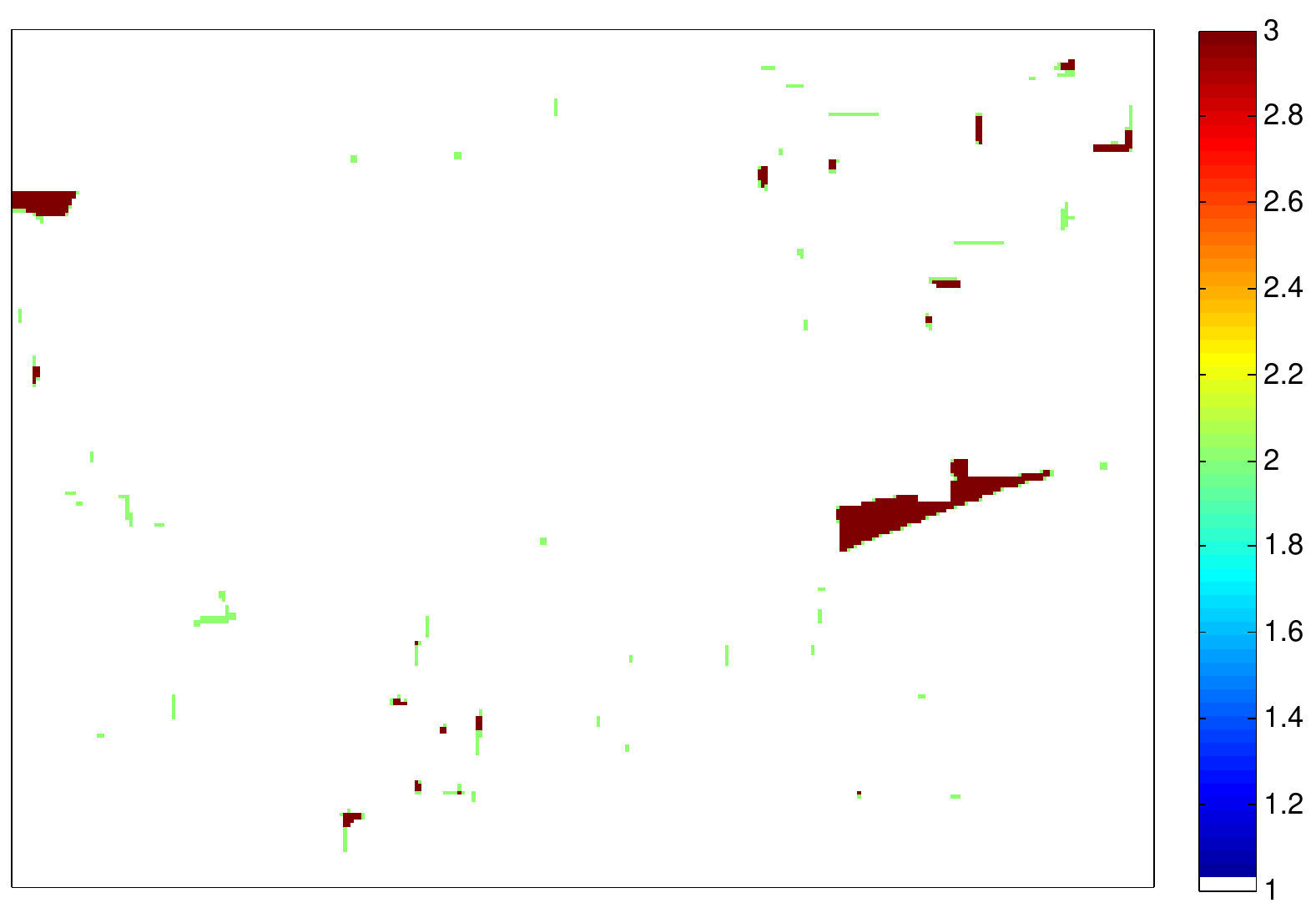}
\end{tabular}\\
%$\Uparrow$
$\wedge$\vspace{-3mm}
 & \\
\begin{tabular}{c}
\includegraphics[width=\figwidth]{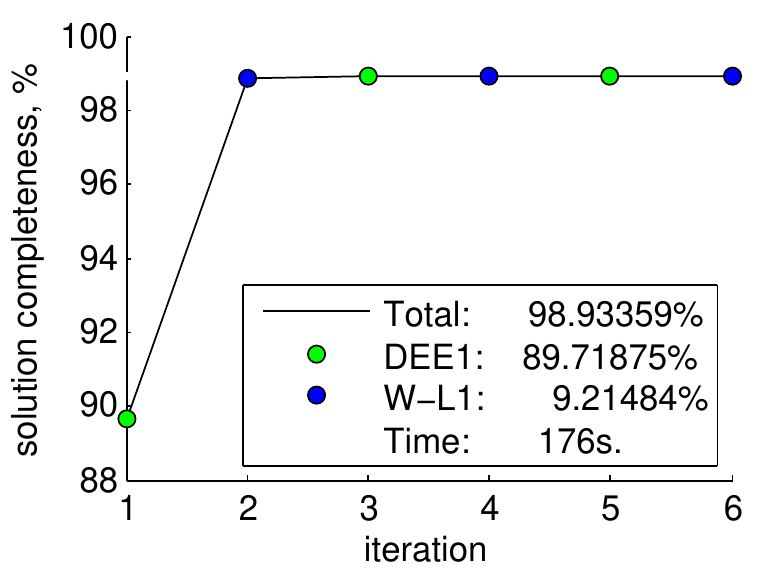}
\end{tabular}&
\begin{tabular}{c}
\includegraphics[width=\figwidth]{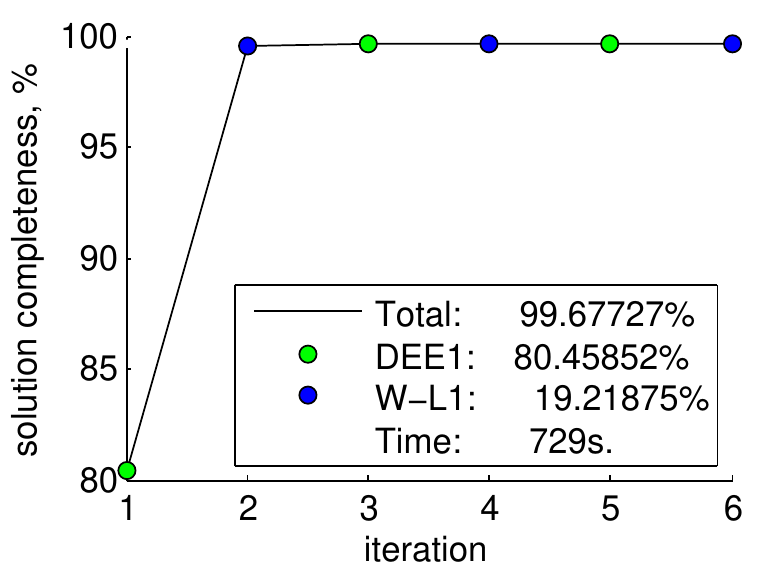}\\
\end{tabular}%
\vspace{-2mm}
\\
& $\vee$%$\Downarrow$ 
\\
\begin{tabular}{l}
\includegraphics[width=0.86\figwidth]{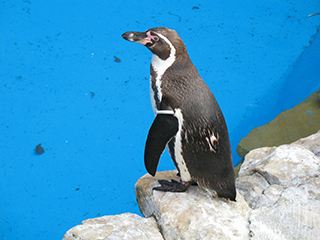}
\end{tabular}&
\begin{tabular}{l}
\includegraphics[width=\figwidth]{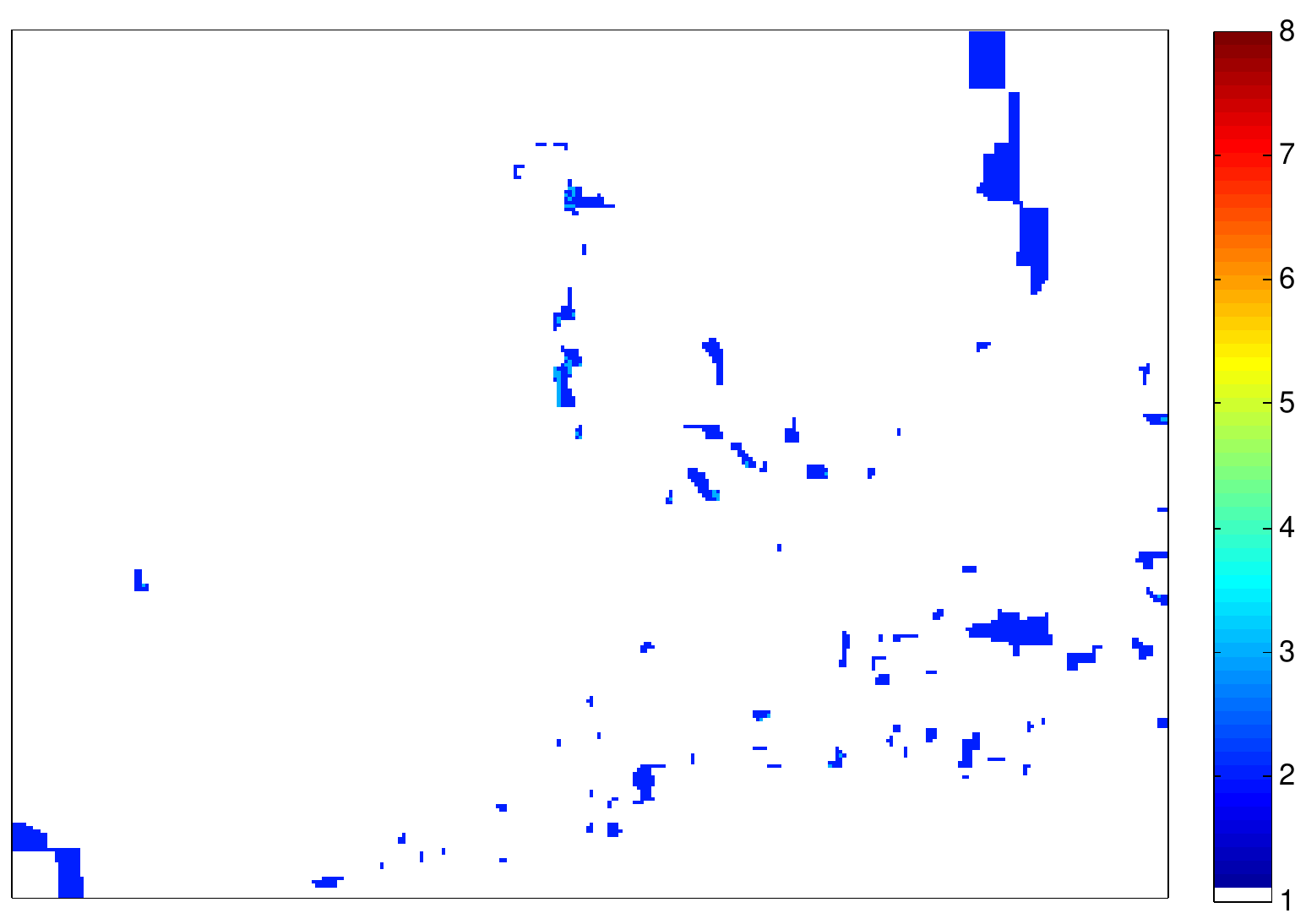}
\end{tabular}\\
\begin{tabular}{l}
\includegraphics[width=0.86\figwidth]{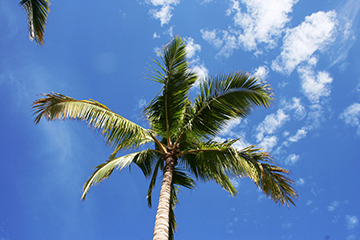}
\end{tabular}&
\begin{tabular}{l}
\includegraphics[width=\figwidth]{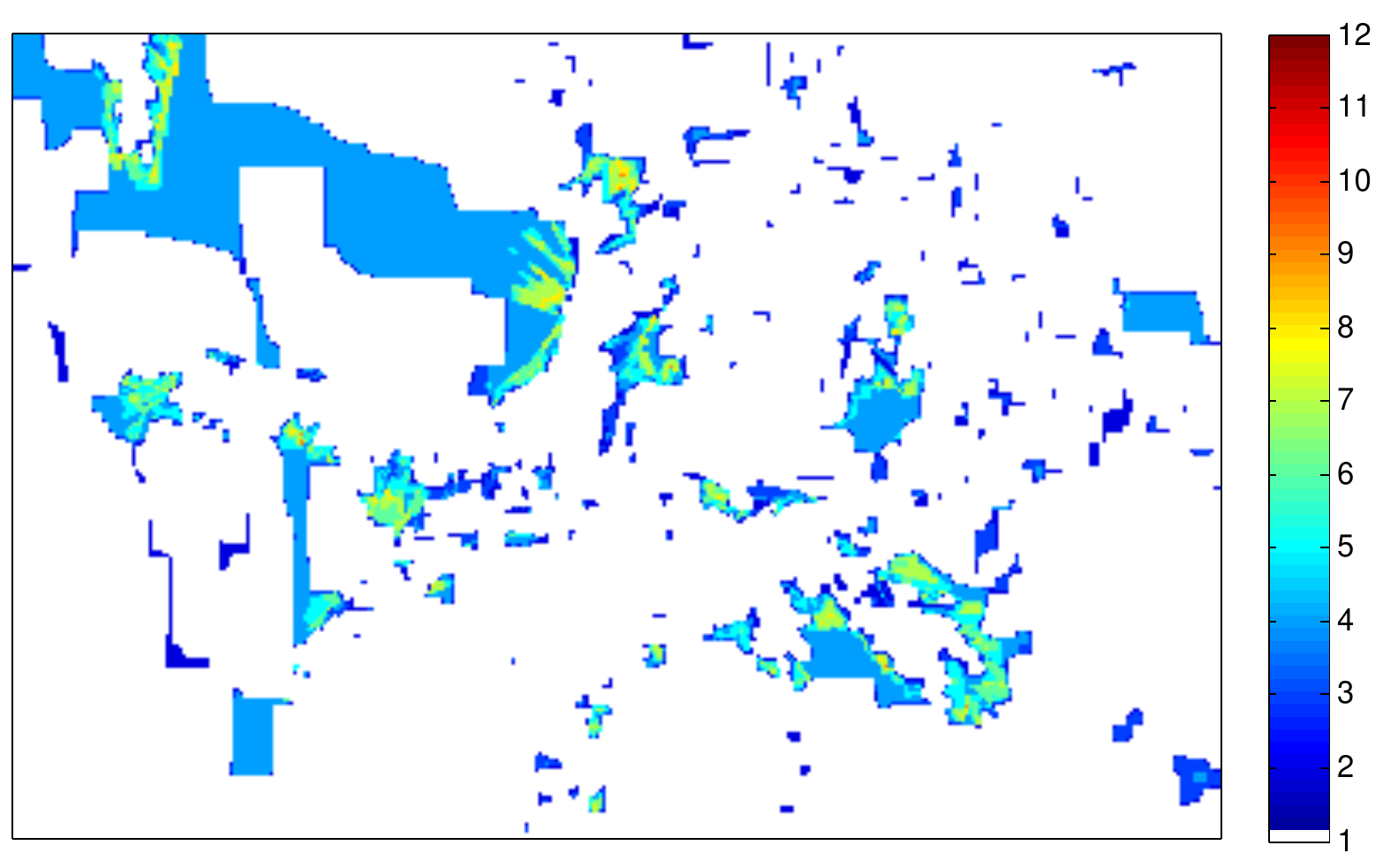}
\end{tabular} \\
{$\wedge$}\vspace{-3mm} & \\
\begin{tabular}{l}
\includegraphics[width=\figwidth]{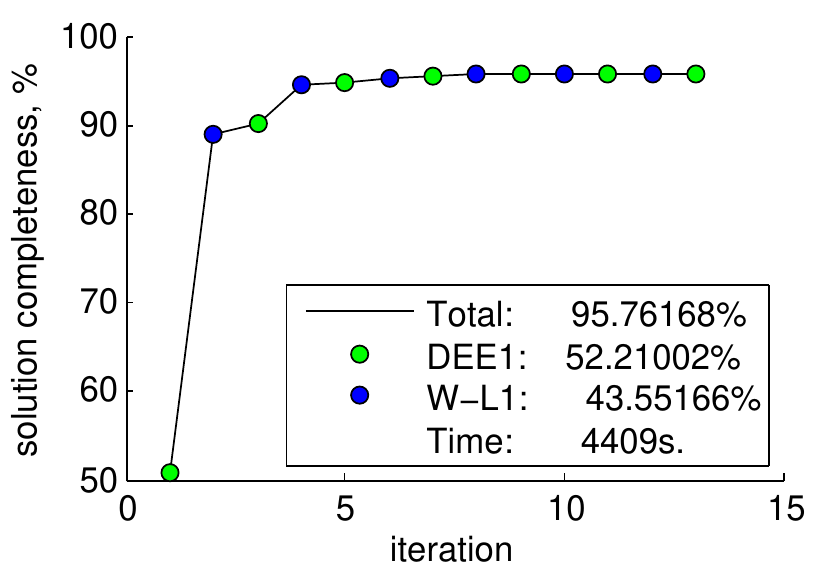}
\end{tabular} &
\begin{tabular}{l}
\includegraphics[width=\figwidth]{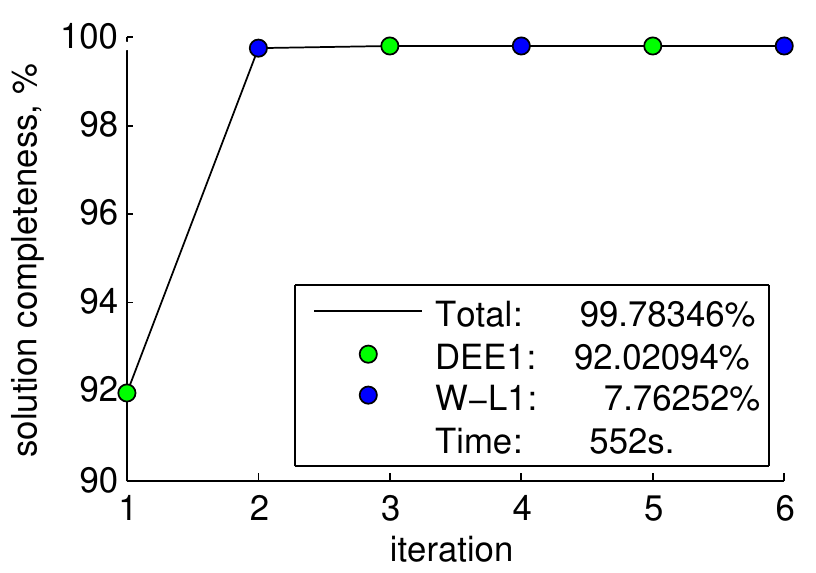}
\end{tabular}\vspace{-2mm}\\
& {$\vee$}\\
\begin{tabular}{l}
\includegraphics[width=0.86\figwidth]{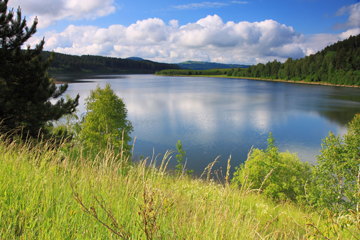}
\end{tabular}&
\begin{tabular}{l}
\includegraphics[width=\figwidth]{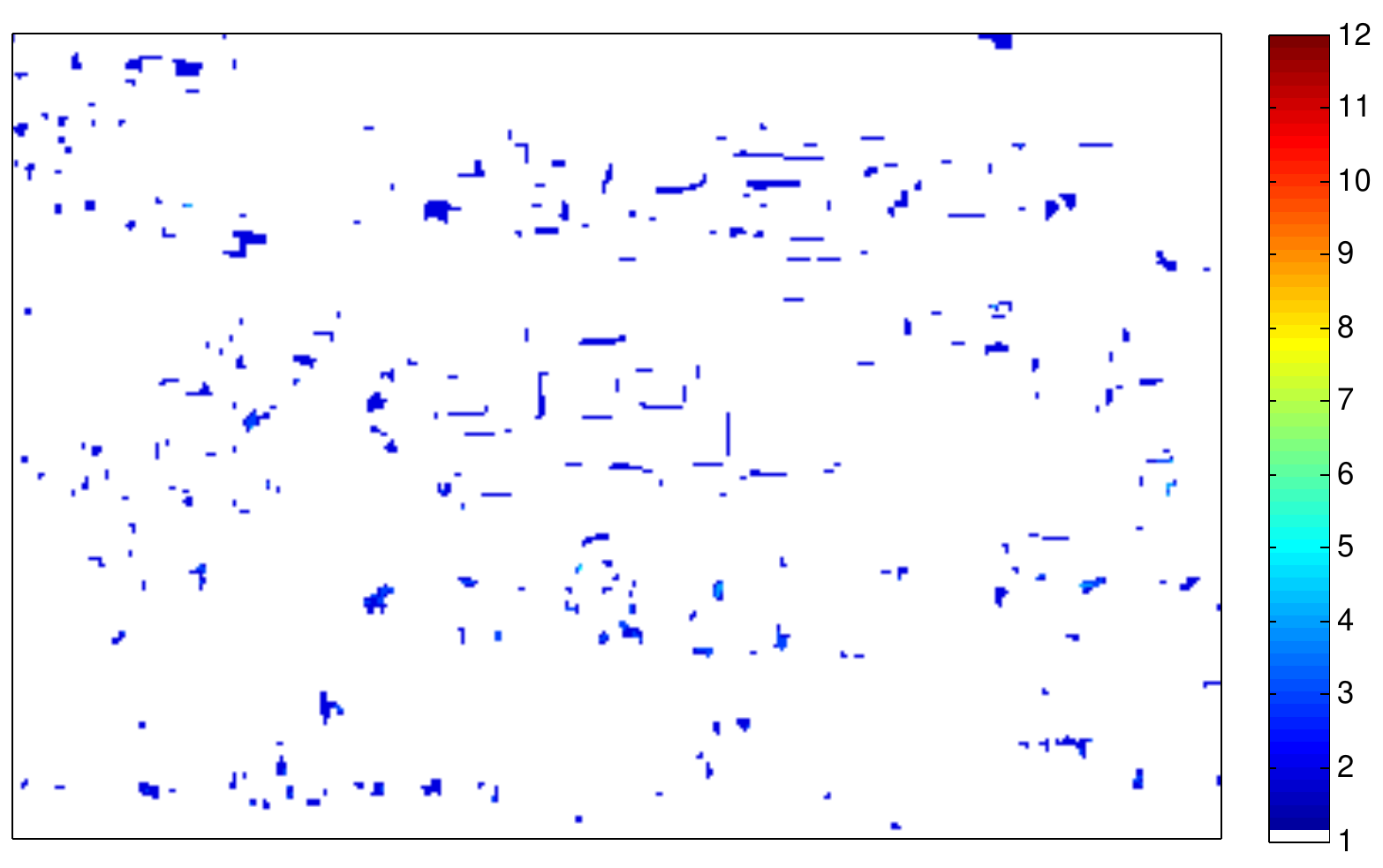}
\end{tabular}\\
\begin{tabular}{cc}
\begin{tabular}{l}
\includegraphics[width=0.78\figwidth]{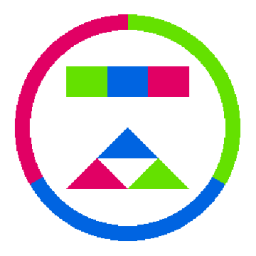}
\end{tabular}% 
& $<$\\
\end{tabular}
&
%\begin{tabular}{l}
%\includegraphics[width=\figwidth]{}
%\end{tabular}\\
\begin{tabular}{l}
\includegraphics[width=\figwidth]{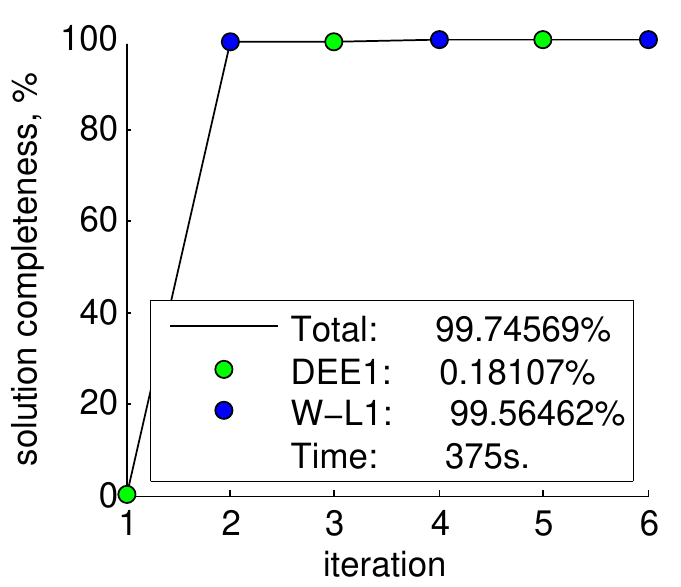}
\end{tabular}%
\end{tabular}
%\end{tabular}
\caption{Other instances of {\small \tt color-seg-n4}.}\label{f:segm-group1}
\end{figure}
%
%
%
%
%
%
%\FloatBarrier
%
%\par\tobo{plots with variance}
%\par\tobo{integrality gap stat}
%solution to the LP-relaxation 
%Fully Random energies 
%
%
%
%
%We have implemented Goldstein's simple DEE, one-against all method of~\cite{Kovtun03}, our \Algorithm{Alg:generic} and linear programs (L1) and~\eqref{L2}, solved by CPLEX. We generated small random problems with integer weights. In the experiments, it turned out that~\eqref{L2} was always tight (we did not find a counter-example) so there was no advantage in running \Algorithm{Alg:generic}. In the case of metric energies we observed that~\eqref{L2} provides a significant improvement over the baseline methods and allows to reduce the problem significantly. In the general case, however, the obtained reduction by (L1) was small.
%\par
%We promise to provide more detailed experimental evaluation in the final version of the paper and include the evaluation on vision problems, which we can process by applying our methods window-wise.

%We also observed that $LP1$
%
%\subsection{Random Problems}
%\section{Conclusion}
%Possible applications of this method.
%\begin{itemize}
%\item In branch-and-bound: the problem we solve can be used as the following large branch-and-bound step. For the best known labeling $y$, determine for all subsets $\A\subset \V$ whether the conditional lower bound $LB(\A,y_A) < \f(y)$. It is done in the same complexity as computing $LB(\A,y_A)$ for a single $\A$.
%\item Solve large problems by windowing.
%\item in conjunction with local search algorithms, to help reduce the search space.
%\end{itemize}
%\clearpage

%
\section*{Conclusion}\label{S:conclusion}
\addcontentsline{toc}{section}{Conclusion}
We have identified a common mechanism of improving mappings that works in different methods for partial optimality and proposed how to obtain more general optimality guarantees from a given linear relaxation. It leads to a coherent and short description of several methods and analysis of their common properties. From necessary conditions by \Lemma{necessary-LI} it follows that all the methods reviewed in~\Section{S:unification} (as well as the proposed method) cannot be used to tighten the LP-relaxation, they can only simplify it in some cases. While our algorithms work for a restricted class of mappings, many previous methods are based on more narrow classes and use less powerful sufficient conditions. We therefore have a theoretical guarantee to improve over these methods and we have verified on difficult random problems that the improvement is significant. %We proposed optimal algorithms for restricted classes of mappings. We showed that several more general classes do not admit optimal polynomial algorithms.
\par
The difference between “week” and “strict” conditions may seem unimportant in practice and was often neglected in the previous work. However, the class of mappings for which the maximum persistency problem is polynomially solvable is larger for strict conditions. Therefore, the difference is important for developing algorithms and for the theoretical comparison of different methods. We believe it is also essential for clarity and completeness to keep track of both. Moreover, it may be useful in practice to have a threshold $\varepsilon$, below which (\eg, due to limited numerical or data accuracy) the optimal assignment is not reliable, \cf our strict conditions. %We will discuss this in the paper.
\par
We also proposed how our method can be applied to large-scale problems on sparse graphs, where solving full-size~\eqref{L1} is numerically intractable. %, by considering only a part of the problem at a time. 
%For that purpose  
We can solve constrained variants of {\sc max-wi}/{\sc max-si}, where the mapping is chosen only inside a window $\W\subset \V$. This leads to linear programs of a smaller size and allows to test the method on vision problems. The windowing technique allows to apply previous methods by parts as well.
%
%We also propose how our method can be applied to large-scale problems on sparse graphs, where solving full-size LP is numerically intractable, by considering only a part of the problem at a time. For that purpose  we can solve constrained variants of {\sc max-wi}/{\sc max-si}, where the mapping is chosen only inside a {\em window} $\W\subset \V$. This leads to linear programs of a smaller size (details in~\cite[\Section{S:windowing-s}]{suppl}).
\par
Our approach is readily generalizable to higher order energies. It would be sufficient to augment the embedding $\delta$ with more components in order to obtain a tighter relaxation and a tighter partial optimality condition (the local polytope $\Lambda$ would be defined as $\aff (\M) \cap \Real^\I_+$).
%
%The steps towards the solutions, including the problem (L1) also seem to be generalizable. The choice of component-wise mappings on the other hand seems essential. Only such mappings allow to express the reduced search space $p(\LL)$ simply as a product $\prod_s p_s(\L_s)$. %The hardness of the general {\sc max-wi}/{\sc max-si} problems seems to have origins in the hardness of tightness verification of the underlying LP.
%
%
%
%
%
%\subsubsection*{Acknowledgments.}
%
\def\bib{bib}
\par
{%\footnotesize
\small
\bibliographystyle{ieee}
\bibliography{\bib/suppl,\bib/maxflow,\bib/max-plus-en,\bib/kiev-en,\bib/pseudo-Bool,\bib/shekhovt,\bib/vcsp,\bib/dee}
}
%\clearpage
%\input{tex/suppl.tex}
\end{document}